
\documentclass[12pt, onecolumn, journal]{IEEEtran}
\usepackage{setspace}
\doublespacing

\IEEEoverridecommandlockouts
\usepackage{cite}
\usepackage{amsmath,amssymb,amsfonts}
\usepackage{algorithmic}
\usepackage{graphicx}
\usepackage{textcomp}
\usepackage{xcolor}
\usepackage[small]{caption}
\usepackage{subcaption}
\usepackage{amsthm}
\usepackage{breqn} 
\usepackage{setspace}
\usepackage{hyperref}
\usepackage{booktabs}
\usepackage{siunitx}
\usepackage{tikz}
\usetikzlibrary{automata, positioning, arrows}

\newtheorem{proposition}{Proposition}

\newtheorem{remark}{Remark}

\newcommand{\Bras}[1]{\bigg[ #1 \bigg]}
\newcommand{\Brap}[1]{\bigg( #1 \bigg)}
\newcommand{\brac}[1]{\left\{ #1 \right\}}
\newcommand{\bras}[1]{\left[ #1 \right]}
\newcommand{\brap}[1]{\left( #1 \right)}

\newcommand{\Exp}{\mathbb{E}}

\newcommand{\sZ}{\mathbb{Z}_{+}}
\newcommand{\sR}{\mathbb{R}_{+}}
\newcommand{\sZZ}{\mathbb{Z}_{++}}
\renewcommand{\Pr}[1]{\mathbb{P}\brac{#1}}

\newcommand{\LBterm}[1]{\frac{\delta(\tau_{max}+\Exp{\tilde{G}})}{P(#1)#1 - P(\tau_{max})\tau_{max}}}
\newcommand{\sysA}{\mathbb{A}}
\newcommand{\sysB}{\mathbb{B}}

\def\BibTeX{{\rm B\kern-.05em{\sc i\kern-.025em b}\kern-.08em
    T\kern-.1667em\lower.7ex\hbox{E}\kern-.125emX}}

\begin{document}

\title{Tradeoff of age-of-information and power under reliability constraint for short-packet communication with block-length adaptation}

\author{Sudarsanan A. K., Vineeth B. S., and Chandra R. Murthy
\thanks{Sudarsanan A. K. and Vineeth B. S. are with the Department of Avionics, Indian Institute of Space Science and Technology, Trivandrum and Chandra R. Murthy is with Department of Electrical Communication Engineering, Indian Institute of Science, Bangalore. Emails: {sudarsanan.ak.1999@gmail.com}, {vineethbs@gmail.com} and {cmurthy1@gmail.com}.}
}
\maketitle
\begin{abstract}
In applications such as remote estimation and monitoring, update packets are transmitted by power-constrained devices using short-packet codes over wireless networks.
Therefore, networks need to be end-to-end optimized using information freshness metrics such as age of information under transmit power and reliability constraints to ensure support for such applications.
For short-packet coding, modelling and understanding the effect of block codeword length on transmit power and other performance metrics is important.
To understand the above optimization for short-packet coding, we consider the optimal tradeoff problem between age of information and transmit power under reliability constraints for short packet point-to-point communication model with an exogenous packet generation process.
In contrast to prior work, we consider scheduling policies that can possibly adapt the block-length or transmission time of short packet codes in order to achieve the optimal tradeoff.
We characterize the tradeoff using a semi-Markov decision process formulation.
We also obtain analytical upper bounds as well as numerical, analytical, and asymptotic lower bounds on the optimal tradeoff.
We show that in certain regimes, such as high reliability and high packet generation rate, non-adaptive scheduling policies (fixed transmission time policies) are close-to-optimal.
Furthermore, in a high-power or in a low-power regime, non-adaptive as well as state-independent randomized scheduling policies are order-optimal.
These results are corroborated by numerical and simulation experiments.
The tradeoff is then characterized for a wireless point-to-point channel with block fading as well as for other packet generation models (including an age-dependent packet generation model).
\end{abstract}
\begin{IEEEkeywords}
Short packet communication, Age of information, Transmit power, Optimal tradeoff, Block-length adaptation, Semi-Markov decision process, Scheduling policies
\end{IEEEkeywords}
\section{Introduction}
Remote estimation and monitoring of relevant system processes are becoming increasingly important in smart cities, internet-of-things (IoT), and industrial IoT for various applications such as environmental monitoring, feedback control and actuation, and security\cite{8930830}. 
Wireless networks for such applications have to be end-to-end optimized for information freshness \cite{6195689} (for 
instance using age of information) rather than for conventional metrics such as delay or throughput.
The majority of traffic generated in such networks for freshness-sensitive applications comprises short packets  
\cite{durisi2016toward}, \cite{xiang2020noma}. 
In order to transmit these short packets over noisy channels, short packet block codes (SPC) with \emph{smaller} codeword 
lengths are employed whose finite block length reliability is a major concern.
Transmission power is also another concern in these battery-constrained monitoring systems.
Understanding the tradeoff and interplay between such key performance indicators (KPIs) such as age of information (AoI), transmit power, and reliability is important for designing modern energy-efficient and reliable next generation wireless networks \cite{cao_metaverse}.

In order to understand the tradeoff between the above KPIs, we consider a point-to-point link in this paper.
A natural question that arises in this context is how the tradeoff between AoI, transmit power, and reliability can be achieved for a point-to-point link.
The tradeoff can be achieved by dynamic scheduling, including dynamic scheduling of packet generation, transmission times, durations, and/or rates, as well as transmit power.
A large body of work has studied such scheduling policies and associated tradeoffs but under the assumption that codeword block lengths are large \cite{neely}.
However, for SPC, modelling the relationship between reliability, transmit power, and codeword length is important for understanding the tradeoffs achieved by scheduling.
The use of codeword block-length adaptation in achieving such tradeoffs has not received much attention in prior work and is the focus of our paper.
We consider dynamic scheduling policies that adaptively chose the codeword length $\tau$ with a corresponding transmit power $P(\tau)$ so that every transmission satisfies a reliability constraint.
A major contribution is that we identify scenarios where scheduling policies that are non-adaptive, i.e., which use a fixed block codeword length achieve a close-to-optimal tradeoff between average AoI and transmit power for a fixed reliability.
We also contribute novel analytical upper and lower bounds on the achievable tradeoff between average AoI and transmit power.
We now review relevant prior work in this area.

\noindent\textbf{Prior work:}
The tradeoff of average transmit power and AoI has been considered in a number of papers. 
In \cite{9483628}, the authors consider tradeoff of AoI and the total energy consumption as a constrained Markov decision process (CMDP) and solve it using Lagrangian relaxation. The tradeoff between the AoI, quality/distortion, and energy is considered in \cite{9377455}.
Yifan et al. \cite{gu2019timely} consider an Internet-of-Things (IoT) scenario where nodes send status updates through 
an unreliable fading channel using a truncated ARQ scheme.
Closed form expressions for the AoI are derived and transmit power optimization is done.
An online greedy algorithm is developed to minimize a linear combination of quality metric, AoI, and the energy cost.
The tradeoff between age and quality/distortion is analyzed in terms of age-dependent distortion constraints in \cite{9524846}. 
Energy minimization under a peak AoI constraint is considered in \cite{saurav2021online}, where the packets can be selected/ deselected for service and the transmission rate can be chosen based on the current AoI to satisfy the AoI constraint.
In \cite{2204.02953}, an optimal non-preemptive policy that minimizes a linear combination of weighted AoI and total service cost in a G/G/1 queuing system with a single server by transmitting potentially a subset of updates is developed.
The energy-age tradeoff in a status update system with feedback having packet losses is considered in \cite{1808.01720}. A threshold-based retransmission policy with a constraint on the maximum allowed retransmissions of a packet is analyzed, and closed-form expressions for the average AoI and energy consumption are derived.
A two-threshold  (one on the AoI at the transmitter and the other on the AoI at the receiver) optimal stationary policy for the energy-age tradeoff for the status updates from a sensor to a monitor over an error-prone channel with feedback on transmission success/failure is proposed in \cite{two-threshold}. The sensor can choose to sleep, sense and transmit a new update or re-transmit, considering sensing and transmission energy. A discounted cost problem is formulated with the cost being a linear combination of average AoI and average energy consumption.  
In \cite{HARQ}, the transmitter can either re-transmit the existing data to save energy or sense and transmit new data to 
reduce AoI if a transmission fails due to channel impairments. Hybrid Automatic Repeat request (HARQ) is used for the 
feedback mechanism, and a threshold-based retransmission policy is adopted.
We note that the tradeoff between age and energy have also been considered in contexts such as energy harvesting 
\cite{bacinoglu2018achieving}, and with sensing energy \cite{gong2018energy}.
In contrast to this body of work, our paper explicitly models the finite block-length of transmissions and its effect on reliability.

Recently a number of papers have considered the reliability of finite block length codes in the context of AoI. 
Such a consideration is especially important in status update systems where the amount of data encoded and 
sent is only of the order of hundreds of bytes.
The average AoI-energy tradeoff for a short packet-based status update system with retransmissions in a broadcast scheme 
over an error-prone channel is considered in \cite{xie2021age}. 
Minimisation of the ratio of average AoI and energy using an optimal static choice of packet length is introduced. 
In \cite{xie2021_twohop}, the age-energy tradeoff for two-hop decode-and-forward relaying networks based on short packets is investigated and the tradeoff is achieved by minimizing the weighted sum of the average AoI and the average energy cost.
Yu et al. \cite{yu2020average} consider the SPC nature of communication. 
The block length used for coding is fixed during the operation of the system. 
Closed form expressions for AoI are obtained for ARQ schemes as well as schemes which discard packets. 
The optimal fixed block length is obtained. Yu et al. \cite{yu2020joint} also considers
an extension to a centralized multiple access scenario for status update systems.
The reliability of transmissions using finite block length codes was also considered for a cognitive status update 
system in \cite{zhu2022short}.
Wang et al. \cite{wang2019age} consider finite block length coding with different packet management strategies at the 
transmitter, namely pre-emptive and non pre-emptive schemes.
They obtain the average AoI for different schemes as a function of the block length; the block length is also optimized 
for minimizing the AoI.
Tang et al. \cite{tang2022average} consider an extension of the above study to non-linear age functions.
Reliability due to finite block length coding is also considered.
Preemptive strategies are found to have worse non-linear age performance.
The impact of finite block length on delay and age violation probability was investigated in \cite{devassy2018delay}. 
The authors obtained a block length that minimizes the delay and age violation probabilities.
The impact of finite block length coding for age and freshness related metrics have also been investigated in other 
scenarios such as UAVs \cite{zhang2021minimizing}, closed-loop control \cite{cao2023age}, multiple access 
\cite{sung2021age}.
The tradeoff between delay and age in a finite block length regime was considered in \cite{cao2021can}.
A static optimization of the block length as well as packet update rate for optimally trading off age with delay is 
carried out in the paper.
We note that these papers the block length is kept fixed during the operation of the system and is considered to be a 
parameter that can be statically optimized for performance enhancement.
In contrast, our work considers the dynamic adaptation of the codeword length especially to tradeoff the transmit power 
with age with a fixed reliability.

The above dynamic adaptation is an important degree-of-freedom for trading off AoI, transmit power, and reliability.
In the transmission of short packet codewords, for a given reliability of codeword transmissions, the per-packet 
transmission duration and transmit power can be traded off with one another \cite{5452208}.
However, only a few papers have considered such an approach.
Early work in dynamic adaptation of codeword length is found in \cite{vineeth_wiopt} and \cite{uysal}.
The authors in \cite{vineeth_wiopt} considered the design of scheduling policies for minimizing average delay given a 
per-codeword error constraint.
The policies adapted the rate of transmission; for a given codeword error probability, the dependence of rate of 
transmission on codeword length was modelled using channel error exponents.
Uysal et al. \cite{uysal} considered the dynamic control of packet transmission durations in order to tradeoff delay and 
transmit power, with longer transmission durations requiring lower power (and vice versa).
Recently, Zhao et al. \cite{zhao_queueurllc} considered adaptation of codeword allocation in both time and frequency 
resources for minimizing latency in an URLLC scenario.  
In these papers, the dynamic adaptation of codeword length or transmission duration has been considered for the case of 
delay.
In contrast, we consider age of information which is a more appropriate metric for applications such as 
remote estimation. 
Motivated by these, in this paper we investigate transmitter control policies which dynamically choose the duration 
$\tau$ over which each packet is transmitted in order to adapt its per-packet transmit power 
$P(\tau)$.\footnote{The power $P(\tau)$ can be modelled as a convex non-increasing function of $\tau$, see Section 
\ref{section:tau_vs_p}.}

Adaptation of block codeword length in the context of age of information has been considered in \cite{han2019optimal}, 
\cite{baoquan_adapt}, and \cite{liu2019taming}.
Han et al. \cite{han2019optimal} consider a multiple access scenario where multiple users use a TDMA frame to 
communicate.
The time durations allotted to the different users could be different and can be adapted dynamically from frame to 
frame.
The time durations correspond to codeword lengths and affect the reliability of transmissions.
The problem of allocating time durations is posed as a MDP and solved.
Liu and Bennis \cite{liu2019taming} consider a problem of minimizing average power subject to constraints on the 
average of age cost function evaluated at transmission instants and maximal average age over all time (which is related 
to peak age of information).
This is a age-power tradeoff problem which is similar to what is considered here.
The authors assume that every transmission has a fixed reliability of $\epsilon$ and the power, block codeword length, 
as well as time to the next sample can be controlled dynamically at every transmission instant.
The power and block codeword length is chosen in order to satisfy the reliability constraint as well as support a fixed 
packet size.
We note that a critical assumption made in this work is that the transmission duration, which depends on the block 
codeword length is negligible compared with the intersampling durations and that the age resets to zero after every 
successful transmission.
This allows the authors to limit the effect of the block codeword length on the age evolution via reliability.
Such an assumption is valid in the regime where the intersampling durations are large compared with the maximum 
codeword lengths used.
In contrast, we allow intersampling durations and codeword lengths to be comparable in our analysis.
We also note that we do not control the intersampling durations and our metric is the average age rather than a 
constraint on maximal age or an age function.
We note that Liu and Bennis's policy which is based on stochastic Lyapunov optimization yields a policy with a fixed 
power and codeword allocation if the intersampling duration is fixed and not optimized - which is the same as our 
result - and shows that Lyapunov optimization is a good approach.   
We note that adaptation of block codeword length for the tradeoff of age and power has been considered by 
Yu et al. \cite{baoquan_adapt}.
The authors consider a point to point system where packets are generated periodically with the period being the 
coherence time of a block fading channel. 
Packet transmissions are done at a constant power without any constraint placed on reliability as the authors consider 
applications to non-critical scenarios.
The authors use a constrained Markov decision process framework and obtain stationary randomized policies for 
minimizing average age subject to an average power constraint.
We note that our work considers a similar tradeoff problem however under a per-transmission reliability constraint.
We also consider different packet generation models.

\noindent\textbf{Contributions:}
We consider a point-to-point link model to understand the tradeoff between AoI, transmit power, and reliability.
We start with a simplified packet generation model, where we assume that packets are generated according to an independent process with a packet generation rate of $\lambda$.
As stated before, an important feature of our model is that the transmission time $\tau$ (which correspond to the block codeword length) can be dynamically adapted with a corresponding choice of transmit power $P(\tau)$ such that a reliability constraint is met for each transmission.
We formulate the optimal tradeoff problem between AoI and average transmit power as a semi-Markov decision problem where at each transmission time, a scheduler decides on a transmission time $\tau$.

The major contribution of this work is the identification of certain regimes of operation in which scheduling policies that are either non-adaptive (i.e., which always choose a fixed value of $\tau$ for all transmissions) or which are \emph{state-independent randomized policies} achieve tradeoffs that are close to optimal.
This regime consists of point-to-point links with high reliability and high packet generation rate.
Thus, we show that the use of non-adaptive policies in \cite{xie2021age}-\cite{cao2021can} are relevant.

Other contributions include:
\begin{enumerate}
\item We obtain analytical upper bounds and lower bounds on the average AoI-power tradeoff. These lower bounds are useful in analyzing the performance of scheduling policies. 
\item We show that show that the class of non-adaptive or state-independent randomized policies are \emph{order-optimal} (a weak form of optimality) in a low-power and high-power regime. The above analytical bounds are also used for this.
\item In a system with low packet generation rate, adaptive policies such as a \emph{threshold policy} achieves better tradeoff performance. We obtain an analytical approximation for the tradeoff for threshold policies for high-reliability systems.
\end{enumerate}

For non-adaptive or fixed transmission time policies, we also consider extensions to the following cases:
\begin{enumerate}
\item We consider a point-to-point link with block fading and characterize the tradeoff and the effect of channel coherence time on the tradeoff.
\item We extend to other packet generation models. We consider a model where packets are generated on the basis of an age threshold and also a \emph{preemptive} packet generation model. Analytical characterizations of the tradeoff are obtained for these cases.
\end{enumerate}
We note that the tradeoff problem for a pre-emptive packet generation model without errors was considered in our prior work \cite{sudarsanan2023optimal}.
In contrast, this paper considers a case with transmission errors as well as other packet generation models.

\noindent\textbf{Notation:}
We use the following notation: (i) $f(x)$ is $\mathcal{O}(g(x))$ if there exists a $c > 0$ 
such that $\lim_{x \rightarrow 0} \frac{f(x)}{g(x)} \leq c$; $f(x), g(x) \geq 0$, and (ii) $f(x)$ is $\Omega(g(x))$ if 
there exists a $c > 0$ such that $\lim_{x \rightarrow 0} \frac{f(x)}{g(x)} \geq c$; $f(x), g(x) \geq 0$.
Sequences which are monotonically increasing to a limit point are denoted using $\uparrow$, while those monotonically 
decreasing are denoted using $\downarrow$.
We denote the set of non-negative integers and non-negative real numbers by $\sZ$ and $\sR$ respectively.

\section{System Model and Problem Statement}
\label{section:System_Model}

We consider a time-slotted model with slots indexed by $t \in \sZ$.
New status update packets are generated at the transmitter according to the following random process.
A packet is generated with probability $\lambda$ at the start of the first slot or at the start of any slot after a packet transmission ends\footnote{Here we assume that $\lambda$ is a given quantity. We study the case where $\lambda$ can be optimized in Section \ref{section:Pre}.}.
This generation is independent of any other event.
If the packet is not generated, then the process repeats in the next slot with probability $\lambda$ until a packet is 
generated.
Thus, there is a random Geometric$(\lambda)$ delay between the end of transmission of a packet and the generation of a 
new packet\footnote{We discuss another packet generation model in Section \ref{section:Pre} in which the packet 
generation process is assumed to be an independent and identically distributed Bernoulli process $(U[t], t \in \sZ)$, 
with $U[t] = 1$ indicating that a new packet is generated at the beginning of slot $t$. When a new packet is 
generated, there are two options - either it can be discarded, in which case the model is the same as that which is 
considered here, or the new packet can pre-empt an ongoing transmission. The latter is discussed in Section 
\ref{section:Pre}.}. 
We denote this Geometric$(\lambda)$-distributed generation delay that takes values in $\{0, 1, 2, ...\}$ by $\tilde{G}$. 
We note that new packets are not generated whenever a packet transmission is ongoing.
We index packets using $m \in \sZ$.
Packets are assumed to be of fixed length of $K$ bits.
The slot in which the $m$\textsuperscript{th} packet is generated at the transmitting node is denoted by $T[m]$; $T[0] \sim$ Geometric$(\lambda)$.

After a new packet is generated at the beginning of a slot, the transmitter starts transmission of the packet in that slot itself.
For reliable transmission, the $K$-bit packet is assumed to be encoded using a finite-length block code.
The codeword length or transmission duration (in slots) of each packet is controllable.
The decision about the (possibly random) transmission duration of a packet is made at the packet's generation slot.
Thus, the generation slots constitute the decision epochs of the transmitter.
The transmission duration of the $m^{th}$ packet is denoted by $\tau_{m} \ge 1$.
The slot in which the $m$\textsuperscript{th} packet's transmission finishes is denoted as $R[m]$, note that $R[m] = 
T[m] + \tau_{m}$.

Since we consider critical systems where each transmission needs to meet a reliability constraint, we assume that the transmitter power and transmission duration are chosen so as to meet a block codeword error probability constraint.
For a packet encoded and transmitted using a transmission duration of $\tau$ slots let the transmit power be denoted as $P(\tau)$.
We now discuss the tradeoff between $\tau$ and $P(\tau)$ under a reliability constraint.

\subsection{Model for transmit power $P(\tau)$ as a function of $\tau$}
\label{section:tau_vs_p}
We consider a point-to-point link where the transmission duration $\tau$ of a packet can be chosen by the transmitter from $\brac{\tau_{min}, \tau_{min} + 1, \dots, \tau_{max}}$, where $\tau_{min} < \tau_{max} \in \mathbb{Z}_{++}$.
Suppose a packet of length $K$ bits is encoded and transmitted using a codeword with transmission duration $\tau$.
Then, the rate of transmission is denoted as $\rho = K/\tau$.
For the motivating scenarios considered in this paper, short packet communication (SPC) techniques are used
in such scenarios and employ finite blocklength codewords for ensuring reliability of transmission. 
Consider a block coding scheme that transmits at rate $\rho$ over an additive white Gaussian noise (AWGN) channel with received power $P$.
From Polyanskiy's normal approximation \cite{5452208} the optimal codeword length $\tau$ with a codeword error probability 
guarantee of $P_{e,r}$ satisfies
\begin{equation*}
K \approx \tau C_{G} -\sqrt{\tau V_{G}} \mathbb{Q}^{-1}\left(P_{e, r}\right),
\end{equation*}
where $C_{G}$ is the AWGN channel capacity, $V$ is the AWGN channel dispersion\cite{5452208}, and $\mathbb{Q}$ is the Gaussian Q function. 
Using this, we obtain the approximation
\begin{eqnarray}
\tau & = & \bigg\lceil\frac{K}{C_{G}}+\frac{V_{G}\left(\mathbb{Q}^{-1}\left(P_{e, r}\right)\right)^2}{2 C_{G}^2}+ \nonumber \\
& & \frac{\sqrt{V_{G}} \mathbb{Q}^{-1}\left(P_{e, r}\right)}{C_{G}} \sqrt{4 C_{G} K + V_{G}\left(\mathbb{Q}^{-1}\left(P_{e, r}\right)\right)^2}\bigg\rceil.
\label{eq:polyanskiy}
\end{eqnarray}
From \cite{Polyanskiy09dispersionof}, in case of AWGN channel, the channel capacity $C_{G}$ and channel dispersion $V_{G}$ are given by,
\begin{align*}
C_{G} & = \frac{1}{2}\log_{2}(1 + \gamma),\\
V_{G} & = \frac{(\log_2{e})^{2}}{2}\Brap{1 - \frac{1}{(1 + \gamma)^2}},
\end{align*}
where $\gamma = P/N$ denotes the received signal-to-noise ratio, where $P$ is the received power, and $N$ is the noise power.
In this paper, if a reliability of $1 - \epsilon$ is required for each packet, we choose $\tau$ and $P(\tau)$ such that 
$P_{e,r} = \epsilon$.
Also, since the received power is a fraction (pathloss) of the transmit power, we define $P$ to be the transmit power itself.

We illustrate the relationship between $\tau$ and $P(\tau)$ for an example in Figure \ref{fig:ts_vs_power}.
The noise power $N$ is taken to be $0.1$, and the codeword error probability or $\epsilon$ is chosen to be $0.01$.
The parameter $K$ is chosen to be $8$. 
We note that the total energy in a codeword transmission is $P(\tau) \times \tau$ which is also illustrated as a function of $\tau$ in Figure \ref{fig:tau_vs_tau_times_P}.
We note that both $P(\tau)$ and $\tau P(\tau)$ are convex non-increasing functions of $\tau$ and the choice of $\tau$ leads to a tradeoff between transmission duration and transmitter power.

We note that transmission duration and power tradeoffs have been considered in prior work.
For example, Uysal et al. \cite{uysal} consider control of transmission duration but delay rather than age was considered.
Such a tradeoff was also considered in \cite{vineeth_wiopt}, where the tradeoff was characterized using
Shannon's channel capacity theorem for AWGN channels.
The rate of transmission, $\rho$, is given by $\rho = W \log_{2} (1+\gamma)$.
Here $W$ denotes the bandwidth of communication, and $\gamma$ is as defined above. Then,
\begin{equation}
\tau =  \frac{K}{W \log_{2} (1 + P/N)} \text{ and }  P = N\brap{2^{\frac{K}{W\tau}} - 1}.
\label{eq:Shannon}
\end{equation}
Considering $P$ as a function of $\tau$, Shannon-formula based relationship between $\tau$ and $P(\tau)$ is shown in 
Figure \ref{fig:ts_vs_power} and $\tau \times P(\tau)$ as a function of $\tau$ is given in 
Figure \ref{fig:tau_vs_tau_times_P}.
The parameter values are: $K = 800, N = 0.1$, and $W = 50$.
\begin{figure}[!htb]
     \centering
     \begin{subfigure}[b]{0.49\linewidth}
         \centering
         \includegraphics[width=1\linewidth]{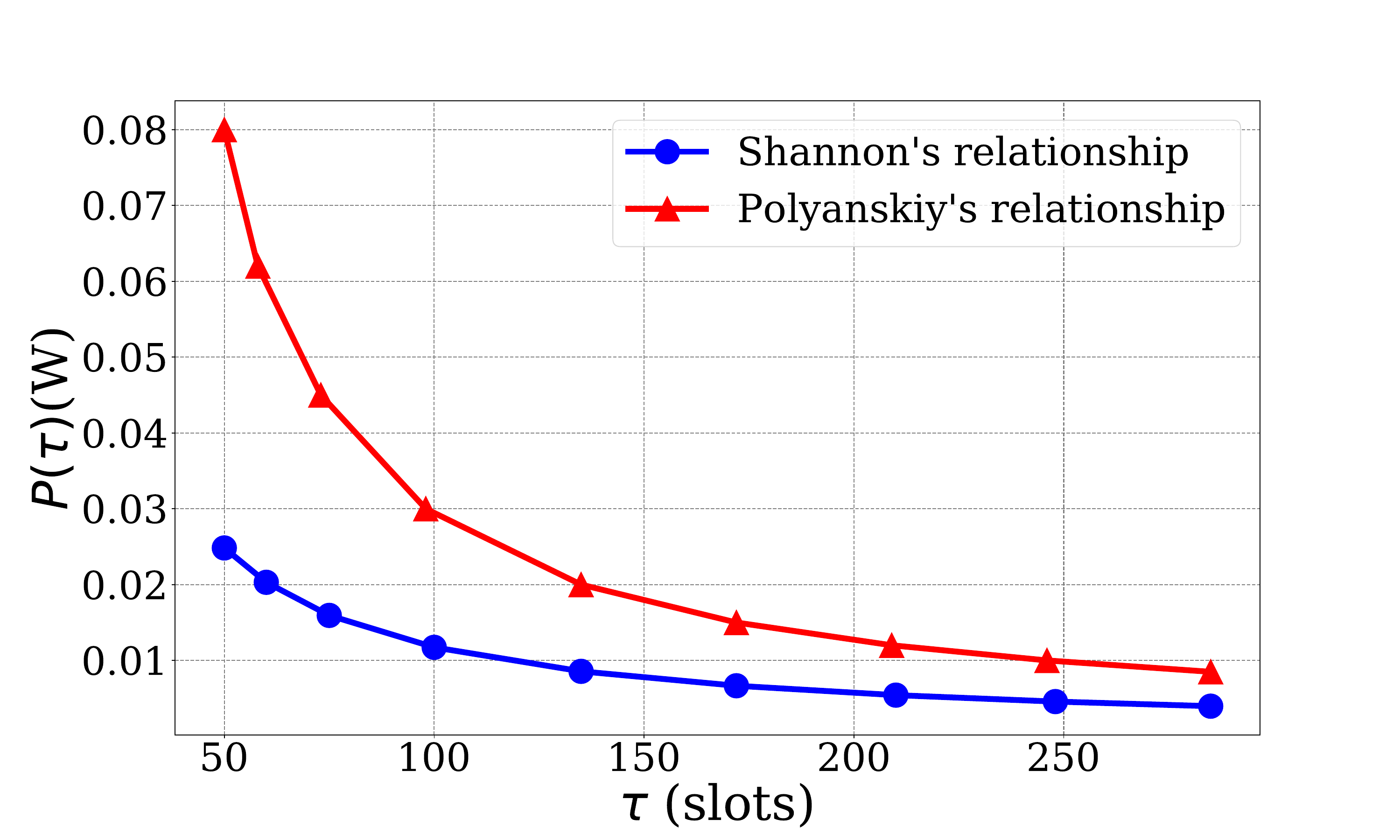}  
         \caption{} \label{fig:ts_vs_power}
     \end{subfigure}
     \hfill
     \begin{subfigure}[b]{0.49\linewidth}
         \centering
         \includegraphics[width=1\linewidth]{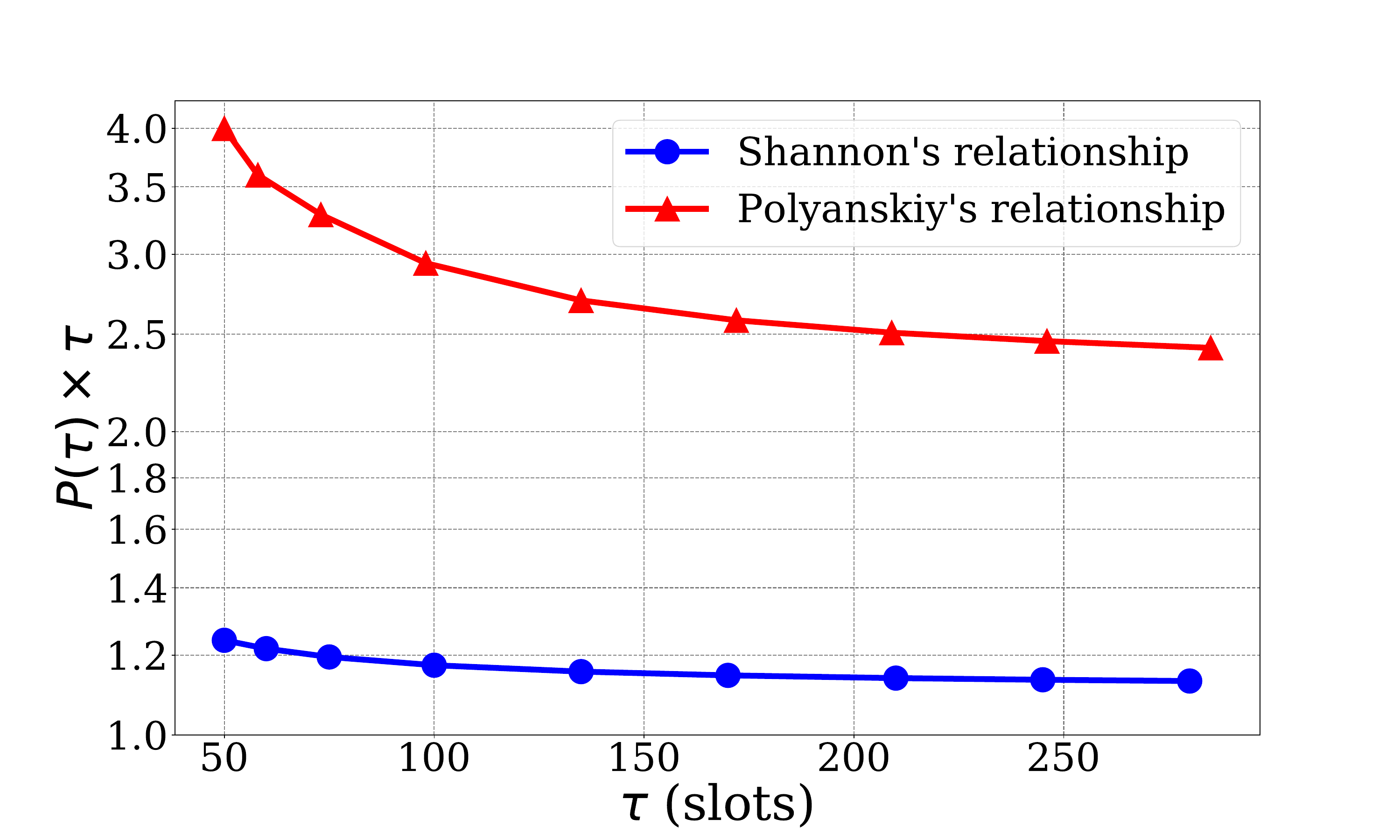}
         \caption{} \label{fig:tau_vs_tau_times_P}
     \end{subfigure}
     \caption{Illustration of the tradeoff between (a) the transmission duration $\tau$ and the transmit power $P(\tau)$ and (b) the transmission duration $\tau$ and the product $P(\tau) \times \tau$ (the total energy in a transmission).}
     \label{fig:convexity}
\end{figure}

\subsection{Transmission policies and the tradeoff problem}
A transmission policy (denoted as $\pi$) decides a transmission duration or codeword length for every packet at its respective decision epoch.
This policy is a (possibly randomized) function of the past evolution of the age of information process of the 
packets, as well as past decisions. 
The AoI process \cite{6195689} (denoted by $A[t], \forall t$) is defined as the time elapsed at the receiver since the 
generation time of the latest successfully received packet.
So, at time slot $t$, if $L[t]$ is the index of the latest successfully received packet, then $T[L[t]]$ is the slot in 
which that packet was generated.
Then, the age
\[ A[t] \triangleq t - T[[L[t]].\]

We note that $A[t]$ drops at $R[m]^{th}$ slot when the $m$\textsuperscript{th} packet is received (i.e., received without error).
A transmission can also result in an error with the packet being not received, in which case the age $A[t]$ would 
increase.
We note that the choice of $\tau$ and $P(\tau)$ for each transmission is such that the probability of receiving the packet is $1 - \epsilon$.
We also note that (from \cite{5452208})
\begin{equation*}
    \varepsilon \approx \mathbb{Q}\left(\frac{\sqrt{\tau}\left(\ln (1+\gamma)-\frac{K}{\tau}\right)}{\sqrt{1-\frac{1}{(1+\gamma)^2}}}\right).
\end{equation*}

Independently of whether a packet was received successfully or not, a new packet would be generated at the transmitter 
according to the process discussed earlier.
At the end of every transmission, the receiver sends a feedback to the transmitter whether the current transmission is successful or not, which enables the transmitter to also compute $A[t]$.
The age at the $m^{th}$ packet's decision epoch is denoted as $A_{m}$, i.e. $A_{m} = A[T[m]]$.

More formally, a transmission policy $\pi$ chooses a transmission duration $\tau_{m}$ for the $m^{th}$ packet at
$T[m]$ as a possibly randomized function \mbox{$\pi(A_{m}, (A[t], t < T[m]), (\tau_{k}, k < m))$}.
The set of all transmission policies is denoted by $\Pi$.
We also consider a class of stationary randomized policies $\Pi_{s}$ that chooses $\tau_{m}$ as a randomized function
$\tau(\cdot)$ of $A_{m}$. For a policy $\pi \in \Pi_{s}$ we define the average age of information (AAoI) as 
\begin{eqnarray*}
\overline{A}^{\pi} = \limsup_{T \rightarrow \infty} \frac{1}{T} \sum_{t = 0}^{T - 1} \Exp A[t].
\end{eqnarray*}
We define $P[t]$ as the transmit power in slot $t$. We note that $P[t] = P(\tau_{m})$ if the $m$\textsuperscript{th}
packet is being transmitted in slot $t$.
Then, for a policy $\pi \in \Pi_{s}$, we define the average power as
\begin{eqnarray*}
\overline{P}^{\pi} = \limsup_{T \rightarrow \infty} \frac{1}{T} \sum_{t = 0}^{T - 1} \Exp P[t].
\end{eqnarray*}

\noindent\textbf{AAoI-Power Tradeoff Problem:}
The AAoI and average power tradeoff problem that we consider in this paper is:
\begin{eqnarray*}
\min_{\pi \in \Pi_{s}} & \overline{A}^{\pi} \\
\text{ s.t. } & \overline{P}^{\pi} \leq p_{c},
\label{eq:tradeoffproblem}
\end{eqnarray*}
where $p_{c} > 0$ is an average power constraint\footnote{This constrained optimization problem, but over $\pi \in \Pi$, can be formulated as a constrained Markov decision process (CMDP) \cite{altman}.
From \cite{altman}, under some technical assumptions, it can be shown that the class of stationary randomized policies contains an optimal policy. This motivates our restriction to $\pi \in \Pi_{s}$ in this paper.}.
The optimal value of the above problem (if it exists) is denoted by $A^{*}(p_{c})$.
In the following sections, we characterize $A^{*}(p_{c})$ analytically and numerically. We note that the Pareto points of the above tradeoff can also be obtained by considering the following unconstrained optimization problem:
\begin{eqnarray}
\min_{\pi \in \Pi_{s}} \overline{A}^{\pi} + \beta \overline{P}^{\pi},
\label{eqn:optimization_SMDP}
\end{eqnarray} 
where the power constraint has been taken into the objective function using the Lagrange approach ($\beta \geq 0$ is a Lagrange multiplier).
We denote this unconstrained version as \textbf{U-AAoI-Power} tradeoff problem.

\section{AAoI-Power tradeoff problem}
In this section, we first characterize the optimal tradeoff using a semi-Markov decision process formulation\footnote{In 
this approach, the policy is obtained numerically for an appropriately state-truncated system. Therefore, it is only 
approximately optimal for the actual system.} with infinite horizon average cost criterion.
We then obtain an analytical upper bound on $A^{*}(p_{c})$ by obtaining $\overline{A}^{\pi}$ and $\overline{P}^{\pi}$ 
for a specific family of transmission policies that uses a fixed transmission duration.
We also obtain lower bounds on $A^{*}(p_{c})$ and show that the above family of fixed transmission duration policies is 
\emph{order-optimal}.

\subsection{Semi-Markov Decision Process Formulation} 
\label{section:SMDP_packet_loss}
A Semi-Markov Decision Process (SMDP) is a valuable tool for analyzing minimum-cost stochastic control problems in which decision epochs occur at random intervals rather than fixed time steps \cite{TijmsH.C2003Afci}.
A discrete space and action SMDP is characterized by the tuple $\brap{\mathcal{S}, \mathcal{A}, 
\mathbb{P},\tilde{\tau}, c}$, where $\mathcal{S}$ is a discrete set of possible states and $\mathcal{A}$ is a 
discrete set of possible actions or decisions. 
The state evolution from decision epoch to the next decision epoch is Markov with state transition probability denoted 
by $\mathbb{P}$.
More precisely, $\mathbb{P}\brap{s^\prime \mid s, u}$ is the probability that at the next decision epoch, the system 
will be in state $s^{\prime}$ if action $u$ is chosen in the present state $s$. 
The expected time until the next decision epoch is $\tilde{\tau}\brap{s,u}$. 
The expected cost at a decision epoch is denoted as $c(s, u)$ if action $u$ is chosen in the state~$s$.

In our problem, the state space $\mathcal{S}$ of the process is the set of all possible age values at a decision 
epoch; i.e., $\mathcal{S}$ is $\sZZ$.
We assume that any transmission duration $\tau$ in the action space $\mathcal{A} = \{ \tau_{min}, \tau_{min} + 1, \cdots,
\tau_{max} \}$ can be chosen. 
Since the transmission durations are discrete-valued, the transmit power $P(\tau)$ (as defined in Section 
\ref{section:tau_vs_p}) takes a set of discrete values in the interval
$\bras{P_{min}, P_{max}}$, where we denote $P(\tau_{min})$ by $P_{min}$ and $P(\tau_{max})$ by $P_{max}$.
Under the assumption that the transmission policy is stationary, the $m^{th}$ packet is transmitted using a transmission duration $\tau_{m} = \tau(A_{m})$.
We note that the decision epochs of the SMDP coincide with the generation times of the packets.
The time duration between $m$\textsuperscript{th} and $(m+1)$\textsuperscript{th} decision epochs is  $\tau_{m} + {\tilde{G}}$, since the $(m + 1)^{th}$ decision epoch is at the first arrival after the $m^{th}$ packet finishes transmission.
The state evolution embedded at decision epochs is a Markov chain. 
For the state transition, two cases arise depending on whether packet error occurs or not.
If a packet error does not occur (with probability $1 - \epsilon$) then after the $m^{th}$ packet finishes 
transmission, the age drops to $\tau(A_{m})$.
Then, the $(m+1)^{th}$ decision epoch occurs after a further $\tilde{G}$ slots so that the age $A_{m + 1} = \tau(A_{m}) 
+ \tilde{G}$.
If packet error occurs, then the age increments by the time between two decision epochs.
Thus, the transition from $A_{m}$ to $A_{m + 1}$ is as follows 
(refer Figure ~\ref{fig:cases_of_aoi_evolution_packet_loss}). 
\begin{figure}[!h]
     \centering
     \begin{subfigure}[b]{0.49\linewidth}
         \centering
         \includegraphics[width=1\linewidth]{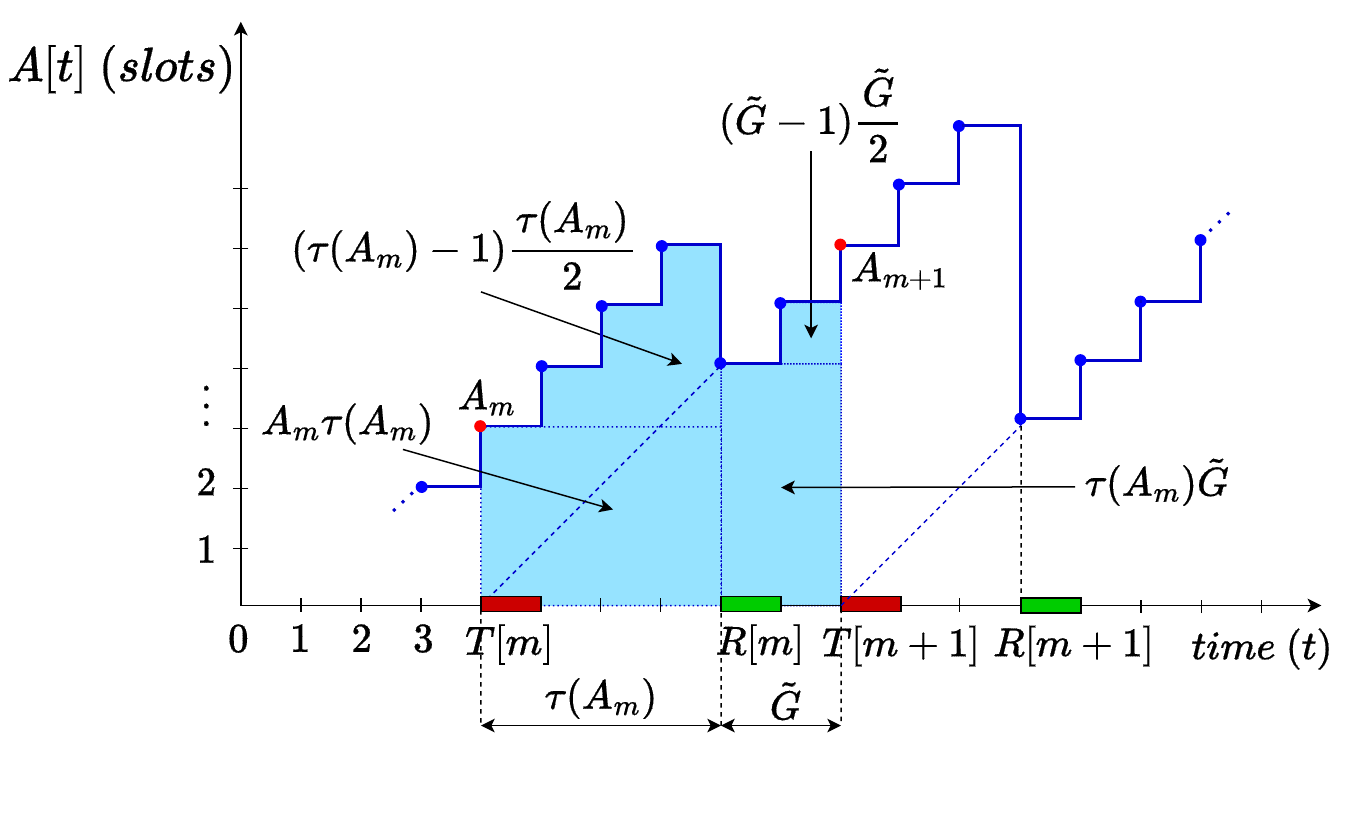}
         \caption{Packet delivers successfully (with probability $1-\varepsilon$).}
     \end{subfigure}
     \hfill
     \begin{subfigure}[b]{0.49\linewidth}
         \centering
         \includegraphics[width=1\linewidth]{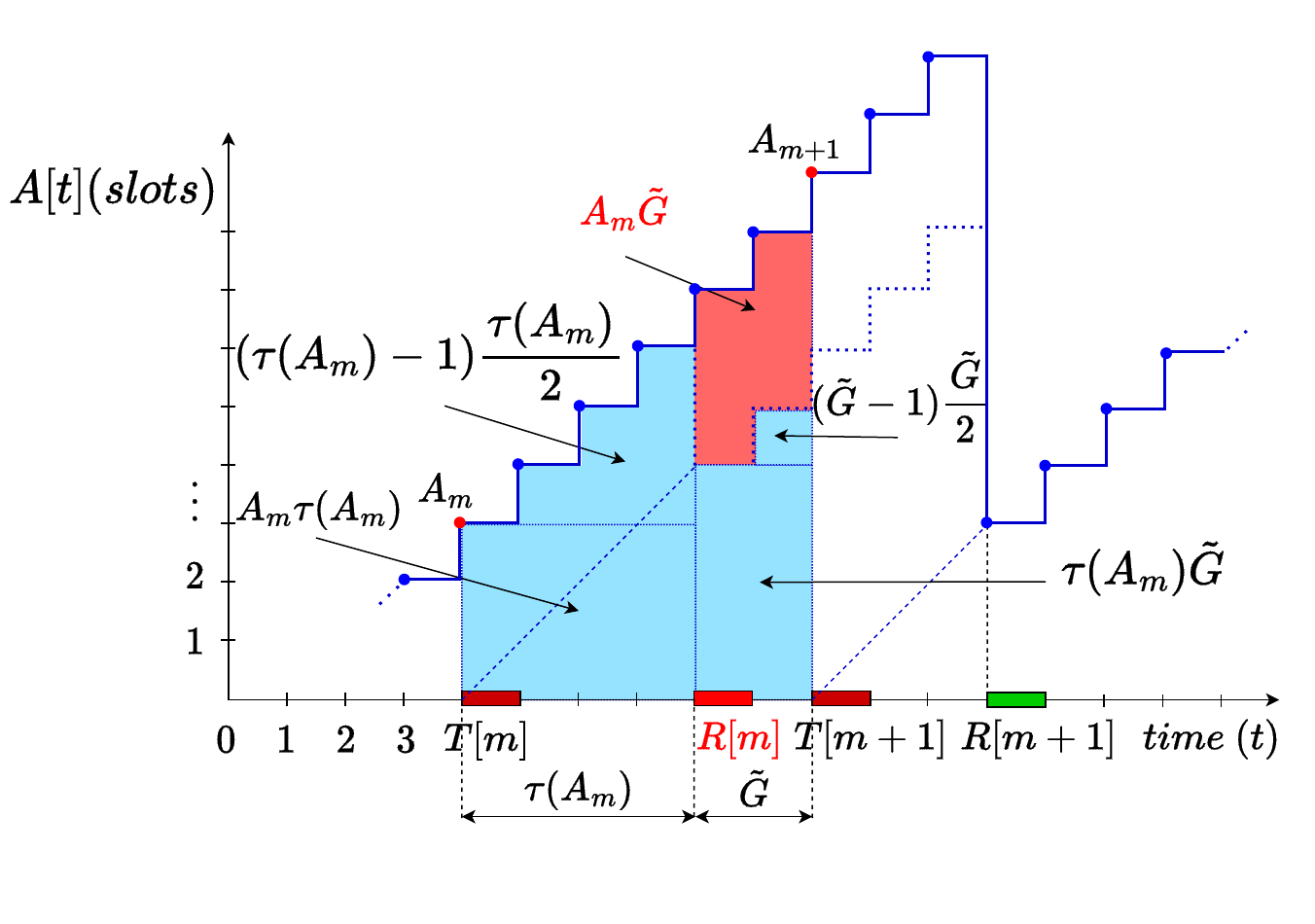}
         \caption{Packet error (with probability $\varepsilon$).}
     \end{subfigure}
     \caption{Illustration of the evolution of $A[t]$ from $A_{m}$ to $A_{m+1}$ depending on whether packet error does not occur (in a) or occurs (in b). The contribution of cumulative age in the single-stage cost is also shown.}
     \label{fig:cases_of_aoi_evolution_packet_loss}
\end{figure}
\begin{equation*}
    A_{m+1} = \begin{cases}
    A_{m}+\tau(A_{m})+\tilde{G} &\text{ with probability } \varepsilon, \\
    \tau(A_{m})+\tilde{G} &\text{ with probability } 1-\varepsilon.
    \end{cases} 
\end{equation*} 
The conditional distribution $\operatorname{Pr}(A_{m+1} = a^{\prime} \mid A_{m} = a, \tau(A_{m}) = \tau)$ is 
\begin{equation*}
    \begin{cases}
    \varepsilon(1 - \lambda)^{a^{\prime}-a-\tau}\lambda + (1-\varepsilon)(1-\lambda)^{a^{\prime}-\tau}\lambda & \text{ for } a^{\prime} \ge a+\tau, \\
    (1-\varepsilon)(1-\lambda)^{a^{\prime}-\tau}\lambda \hspace{2.85 cm} & \text{ for } \tau \le a^{\prime} < a + \tau, \\
    0 \hspace{5.65 cm} & \text{ for } a^{\prime} < \tau,
    \end{cases} 
\end{equation*}
where we have used that $\tilde{G} \sim$ Geometric($\lambda$).
We consider the objective function in \eqref{eqn:optimization_SMDP} for the SMDP.
To minimize this function, we define the following single stage cost $c(a, \tau)$, which is the expected cumulative age 
and power over the time duration between two consecutive decision epochs. Here $a$ is the age  at the decision epoch, 
and $\tau$ is the service time. 
\begin{align*}
c(a, \tau) =~& a\tau + (\tau-1)\frac{\tau}{2} + \tau \frac{1-\lambda}{\lambda} + \Brap{\frac{1-\lambda}{\lambda}}^{2}\\&+\varepsilon a\frac{1-\lambda}{\lambda} + \beta P(\tau) \tau \label{eq:lr_ssc_np}
\end{align*}
The different components of the expected cumulative age can be seen in Figure \ref{fig:cases_of_aoi_evolution_packet_loss}.

We use the value iteration algorithm\cite{TijmsH.C2003Afci} to arrive at the optimal policy for a truncated version of 
this SMDP denoted by $\pi_{SMDP}$. 
The average AoI and power for $\pi_{SMDP}$ (denoted by $\overline{A}^{\pi_{SMDP}}$ and $\overline{P}^{\pi_{SMDP}}$ 
respectively) provides a baseline which can be used to evaluate the tradeoff performance of other policies.

\subsection{An upper bound on the tradeoff}\label{sec:upper_bound}
We obtain an upper bound on the tradeoff by analytically characterizing the averages $\overline{A}^{\pi}$ and 
$\overline{P}^{\pi}$ for a family of policies called fixed transmission duration policies.

\noindent\textbf{Fixed transmission duration policy $\pi_{t_{s}}$:}
A fixed transmission duration (FTT) policy transmits every packet in $t_{s}$ slots with power $P(t_{s})$ for a fixed error 
probability of $\epsilon$.
The parameter $t_{s}$ can be changed to obtain different $\overline{A}^{\pi}$ and $\overline{P}^{\pi}$. 
A small $t_{s}$ is expected to give a combination of \emph{large} $\overline{P}^{\pi}$ and \emph{small} 
$\overline{A}^{\pi}$ compared to a large~$t_{s}$. 
FTT policies can be used to achieve the end points of the AAoI-Power tradeoff as shown in the following proposition.
\begin{proposition}
An FTT policy with $t_{s} = \tau_{max}$ ($t_{s} = \tau_{min}$) is optimal at the minimum power (maximum power)
end point of the AAoI-Power tradeoff. 
\end{proposition}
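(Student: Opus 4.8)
The plan is to reduce the proposition to two facts about the family $\Pi_s$ of stationary randomized policies: (i) the smallest achievable average power is attained \emph{only} by the FTT policy $\pi_{\tau_{max}}$ (i.e. with $t_s = \tau_{max}$), and (ii) the largest achievable average power is attained \emph{only} by the FTT policy $\pi_{\tau_{min}}$. Once these are in hand the proposition is immediate: at either extreme power level the corresponding FTT policy is the unique feasible policy for \textbf{AAoI-Power}, hence vacuously the AAoI-minimizer there, i.e. it sits exactly at that endpoint of the tradeoff.

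First I would write $\overline{P}^\pi$ in closed form for an arbitrary $\pi \in \Pi_s$. Decision epochs are the generation slots; a decision cycle has length $\tau_m + \tilde G$ with $\Exp[\tilde G] = (1-\lambda)/\lambda$, and the transmit power equals $P(\tau_m)$ during the $\tau_m$ transmission slots and $0$ during the $\tilde G$ idle slots, so the energy spent in a cycle is $P(\tau_m)\tau_m$. Invoking the renewal--reward theorem (equivalently, the average-cost identity for the SMDP of Section~\ref{section:SMDP_packet_loss}) then gives
\begin{equation*}
\overline{P}^\pi \;=\; \frac{\Exp_\mu\!\left[P(\tau(A))\,\tau(A)\right]}{\Exp_\mu\!\left[\tau(A)\right] + \frac{1-\lambda}{\lambda}},
\end{equation*}
where $\mu$ is the stationary law of the age at decision epochs under $\pi$ (together with the policy's internal randomization).

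Next I would exploit monotonicity. Put $g(\tau) = P(\tau)\,\tau$, $h(\tau) = \tau + (1-\lambda)/\lambda$, and $\phi = g/h$. By Section~\ref{section:tau_vs_p}, $g$ is positive and non-increasing on $\{\tau_{min},\dots,\tau_{max}\}$ while $h$ is positive and strictly increasing, so $\phi$ is strictly decreasing; hence $\phi(\tau_{max}) \le \phi(\tau) \le \phi(\tau_{min})$ with each equality holding only at the indicated endpoint. Therefore $g(\tau) \ge \phi(\tau_{max})\, h(\tau)$ for every admissible $\tau$; taking $\Exp_\mu[\cdot]$ of both sides and dividing by $\Exp_\mu[h]$ yields $\overline{P}^\pi \ge \phi(\tau_{max}) = P_{min}\tau_{max}/(\tau_{max} + (1-\lambda)/\lambda)$, with equality iff $\tau(A) = \tau_{max}$ for $\mu$-almost every $A$, i.e. iff $\pi$ agrees on its recurrent class with $\pi_{\tau_{max}}$. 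Symmetrically $g(\tau) \le \phi(\tau_{min})\, h(\tau)$ gives $\overline{P}^\pi \le \phi(\tau_{min})$, with equality only for $\pi_{\tau_{min}}$. Hence the minimum-power endpoint of the tradeoff is at $p_c = \phi(\tau_{max})$, where $\pi_{\tau_{max}}$ is the only feasible policy, and the maximum-power endpoint is at $p_c = \phi(\tau_{min})$, where $\pi_{\tau_{min}}$ is the only feasible policy; in both cases the stated FTT policy is optimal. (As a consistency check on the upper endpoint, a coupling of the age processes shows $\pi_{\tau_{min}}$ in fact minimizes $\overline{A}^\pi$ over all of $\Pi_s$, since a shorter but equally reliable transmission resets the age both sooner and to a smaller value.)

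The main obstacle is the first step --- justifying the closed-form expression for $\overline{P}^\pi$ uniformly over $\Pi_s$, which needs the embedded age chain under $\pi$ to admit a stationary law with finite mean and the $\limsup$ in the definition of $\overline{P}^\pi$ to be a genuine limit rather than merely a limit superior. I would handle this by a Foster--Lyapunov argument with the identity function: since every packet is delivered with probability $1-\epsilon$ and then the age resets to at most $\tau_{max} + \tilde G$, one has $\Exp[A_{m+1}\mid A_m = a] \le \epsilon\, a + \tau_{max} + (1-\lambda)/\lambda$, which has strictly negative drift for all large $a$; thus the chain is positive recurrent with $\Exp_\mu[A] < \infty$, and since $\tau(\cdot)$ is bounded the cycle costs and lengths are integrable, so the ergodic theorem for semi-Markov processes applies and all the $\limsup$s are genuine limits. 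Granting this, the rest is the elementary fact that a ratio of expectations lies between the extreme pointwise ratios $\phi(\tau_{max})$ and $\phi(\tau_{min})$, combined with the strict monotonicity of $\phi$.
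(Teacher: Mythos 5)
Your proposal is correct and follows essentially the same route as the paper: express $\overline{P}^{\pi}$ via the Markov renewal reward theorem as $\Exp[P(\tau)\tau]/(\Exp[\tau]+(1-\lambda)/\lambda)$ and use the monotonicity of $P(\tau)\tau$ (and of the denominator) to show the extreme power values are attained only by $t_s=\tau_{max}$ and $t_s=\tau_{min}$, making those FTT policies the unique feasible (hence optimal) policies at the endpoints. Your only additions are cosmetic or supplementary: packaging the bound through the ratio $\phi=g/h$ rather than bounding numerator and denominator separately, and the Foster--Lyapunov justification of the stationary law, which the paper simply assumes.
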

\begin{proof}
Consider any stationary policy $\pi$ with a stationary distribution on $A_{m}$.
Since $\tau_{m} = \tau(A_{m})$ we obtain an induced stationary distribution on $\tau_{m}$.
Using Markov Renewal Reward Theorem (MRRT) \cite[Appendix D]{kumar2008wireless}, the 
average power for a stationary policy $\pi$ is
\begin{eqnarray*}
 \overline{P}^{\pi} = \frac{\Exp P(\tau)\tau}{\Exp \tau + (1 - \lambda)/\lambda},
\end{eqnarray*}
where the expectation is with respect to the above stationary distribution of $\tau$.
We note that
\begin{eqnarray*}
 \frac{\Exp P(\tau)\tau}{\tau_{max} + (1 - \lambda)/\lambda} \leq \overline{P}^{\pi} \leq \frac{\Exp P(\tau)\tau}{
\tau_{min} + (1 - \lambda)/\lambda},
\end{eqnarray*}
for any $\pi$.
Then using the monotonic decreasing property of $P(\tau)\tau$ we observe that 
\begin{eqnarray*}
 \frac{P(\tau_{max})\tau_{max}}{\tau_{max} + (1 - \lambda)/\lambda} \leq \overline{P}^{\pi} \leq \frac{ 
P(\tau_{min})\tau_{min}}{
\tau_{min} + (1 - \lambda)/\lambda}.
\end{eqnarray*}
The lower and upper bounds on the average power are achieved iff the transmission duration is $\tau_{max}$ and 
$\tau_{min}$ respectively.
Therefore, the only feasible policies at the minimum power and maximum power end points of the tradeoff are FTT 
policies with $t_{s} = \tau_{max}$ and $t_{s} = \tau_{min}$ respectively.
\end{proof}

\noindent We analytically characterize the AAoI and average power for FTT policies in the following proposition.
\begin{proposition} 
\label{proposition_packet_loss}
For FTT policy with $t_{s}$ and $P(t_{s})$ chosen such that packet error probability is $\epsilon$, the AAoI 
$\overline{A}^{\pi_{t_{s}}}$ is
\begin{equation*}
\overline{A}^{\pi_{t_{s}}}=t_{s}+\frac{\mathbb{E}R^{2}}{2 \mathbb{E}R}-\frac{1}{2}, \label{eq:aaoi_loss}
\end{equation*}
and the average power $\overline{P}^{\pi_{t_{s}}}$ is 
\begin{equation*}
    \overline{P}^{\pi_{t_{s}}} = \frac{P(t_{s})t_{s}\lambda}{1-\lambda + \lambda t_{s}},
\label{eq:pavg_loss}
\end{equation*} 
where  
    \begin{align*}
    \Exp R =~& \frac{1}{1-\varepsilon}\Brap{\frac{1-\lambda}{\lambda} + t_{s}}, \\
    \Exp R^{2} =~& \frac{1-\lambda}{(1-\varepsilon)\lambda^{2}}+\frac{1+\varepsilon}{(1-\varepsilon)^{2}}\Bras{\frac{1-\lambda}{\lambda}+t_{s}}^2.
    \end{align*}
\end{proposition}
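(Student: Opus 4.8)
The plan is to exploit the renewal structure of the successful-reception epochs. Under $\pi_{t_s}$, consecutive decision epochs (packet generation slots) are separated by $t_s + \tilde{G}$ slots with $\tilde{G} \sim \mathrm{Geometric}(\lambda)$ i.i.d., and each transmission is received independently with probability $1-\varepsilon$. First I would show that at the slot $R[m] = T[m] + t_s$ in which a successfully received packet $m$ finishes, the age equals exactly $t_s$: since transmissions never overlap, packets finish in their order of generation, so the latest successfully received packet at slot $R[m]$ is packet $m$ itself, whence $A[R[m]] = R[m] - T[m] = t_s$. Between two consecutive successful receptions the age increases by one per slot and then drops back to $t_s$, so the age is a sawtooth whose cycles are delimited by the successful receptions. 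Since, immediately after a success, the next packet generation starts a fresh $\mathrm{Geometric}(\lambda)$ delay and all subsequent transmission outcomes are independent of the past, these cycles are i.i.d.; let $R$ denote a generic cycle length, i.e., the number of slots between two consecutive successful receptions.

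Next I would compute the first two moments of $R$. Writing $N$ for the number of transmissions in a cycle --- which is $\mathrm{Geometric}$ with success probability $1-\varepsilon$ on $\{1,2,\dots\}$, so $\Exp N = 1/(1-\varepsilon)$ and $\Exp[N^2] = (1+\varepsilon)/(1-\varepsilon)^2$ --- and $\tilde{G}_1, \tilde{G}_2, \dots$ for the i.i.d.\ $\mathrm{Geometric}(\lambda)$ inter-transmission gaps (independent of $N$, with $\Exp\tilde{G} = (1-\lambda)/\lambda$ and $\mathrm{Var}(\tilde{G}) = (1-\lambda)/\lambda^2$), one has $R = \sum_{i=1}^{N}(t_s + \tilde{G}_i)$. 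Wald's identity gives $\Exp R = \Exp N \cdot (t_s + \Exp\tilde{G})$, and the standard second-moment formula for a random sum, $\Exp R^2 = \Exp N \cdot \mathrm{Var}(t_s + \tilde{G}) + \Exp[N^2] \cdot (t_s + \Exp\tilde{G})^2$, reproduces the stated expressions for $\Exp R$ and $\Exp R^2$ after substitution. This step is pure bookkeeping once the cycle decomposition is fixed.

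Then I would apply the renewal--reward theorem over these cycles. On a cycle of length $R$ the age takes the values $t_s, t_s+1, \dots, t_s + R - 1$, so the accumulated age is $\sum_{j=0}^{R-1}(t_s + j) = R\,t_s + R(R-1)/2$; dividing its expectation by $\Exp R$ gives $\overline{A}^{\pi_{t_s}} = t_s + (\Exp R^2 - \Exp R)/(2\Exp R) = t_s + \Exp R^2/(2\Exp R) - 1/2$. Finiteness of $\Exp R$ guarantees the time average converges, so the $\limsup$ in the definition of $\overline{A}^{\pi_{t_s}}$ is a genuine limit. For the average power I would specialize the identity $\overline{P}^{\pi} = \Exp[P(\tau)\tau]/(\Exp\tau + (1-\lambda)/\lambda)$ --- established in the proof of Proposition 1 via the Markov renewal--reward theorem --- to $\tau \equiv t_s$, which gives $\overline{P}^{\pi_{t_s}} = P(t_s)t_s/(t_s + (1-\lambda)/\lambda) = P(t_s)t_s\lambda/(1-\lambda+\lambda t_s)$.

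The main obstacle is the clean justification of the renewal decomposition: that the age resets exactly to $t_s$ regardless of how many failed transmissions preceded the success, and that the inter-reception intervals $R$ are genuinely i.i.d., so that the renewal--reward theorem applies. Both points rest on the ordering of the (non-overlapping) transmissions and the memorylessness of the generation process after a success. Once this is in place, the two identities follow from Wald's identity and elementary moment formulas. A minor but necessary point is to use the convention $\Exp\tilde{G} = (1-\lambda)/\lambda$, $\mathrm{Var}(\tilde{G}) = (1-\lambda)/\lambda^2$ for the $\{0,1,2,\dots\}$-valued $\mathrm{Geometric}(\lambda)$ delay --- it is precisely this that makes the displayed constants match.
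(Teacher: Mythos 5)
Your proposal is correct and follows essentially the same route as the paper: both identify the renewal process at successful-reception epochs, write the cycle length as a random (geometric) sum of i.i.d.\ $t_s+\tilde{G}_i$ terms, compute $\Exp R$ and $\Exp R^2$ via Wald/total-variance bookkeeping, and apply the renewal--reward theorem to the accumulated age $Rt_s + R(R-1)/2$. The only cosmetic difference is that you obtain the average power by specializing the MRRT identity from Proposition~1 rather than recomputing the per-cycle energy $P(t_s)t_s\,\Exp[X+1]$ directly, which yields the same expression.
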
 
The derivation of these expressions is discussed in Appendix~\ref{proof_proposition_packet_loss}.
We note that as $\lambda \downarrow 0$, the AAoI behaviour is $\mathcal{O}\brap{\frac{1}{\lambda}}$ due to the scarcity 
of packets being generated.
As expected, we observe that $\overline{P}^{\pi_{t_{s}}}$ is monotonically decreasing in $t_{s}$ while 
$\overline{A}^{\pi_{t_{s}}}$ is monotonically increasing in $t_{s}$.
We note that the tradeoff performance of FTT policies (obtained by varying $t_{s}$) provides an analytical upper bound 
to the AAoI-Power tradeoff.
The above analytical characterization helps in designing an FTT policy to achieve a particular point in the AAoI power 
tradeoff plane.

We also consider a generalization of FTT policies which is the family of threshold policies.

\noindent\textbf{Threshold policy $\pi_{h}$:}\label{section:threshold_Pre}
A threshold policy is parameterized by a  threshold $h$ on age at a decision epoch and two transmission durations 
$\tau_{a}$ 
and $\tau_{b}$ ($\tau_{a}, \tau_{b} \in \mathcal{A}$ with $\tau_{a} > \tau_{b}$).
The threshold policy chooses the transmission duration as a function $\tau(A_{m})$ of the age at a decision epoch.
The function 
\begin{equation}
    \tau(A_{m}) = 
    \begin{cases}
    \tau_{a} \text{ if } A_{m} \leq h, \\
    \tau_{b} \text{ if } A_{m} > h.
    \end{cases}
\end{equation}
We note that when $h$ is small, the policy uses the smaller transmission duration $\tau_{b}$ to transmit the packets most 
of the time (i.e., 
unless the age is below $h$ at the decision epoch); this comes at the cost of a higher average power consumption. When 
$h$ is large, it uses the larger service time $\tau_{a}$ most of the time; this lowers the average power consumption 
but 
could lead to a large average age. Thus, by varying the threshold $h$, as well as $\tau_{a}$ and $\tau_{b}$, we obtain 
a tradeoff between AAoI and average power, which is an upper bound to the AAoI-Power tradeoff\footnote{We 
present an analytical characterization of the tradeoff for threshold policies in an error-free system in 
Section \ref{sec:errorfree_system}. This characterization is used for obtaining candidate parameter values for the 
threshold policies which is then used in simulations to compute the tradeoff.}.

In the next section, we obtain lower bounds to the optimal tradeoff $A^{*}(p_{c})$ which can be used to investigate 
the usefulness of the heuristic policies.

\subsection{Lower bounds on AAoI-Power tradeoff}
\label{section:NP}
The system discussed above is denoted as $\sysA$.
In order to obtain a lower bound on $A^{*}(p_{c})$ for $\sysA$, we construct another system denoted as $\mathbb{B}$.
We assume that $\sysB$ consists of two independent point-to-point links: (1) the point-to-point link of $\sysA$ and 
(2) a point-to-point link which has the same properties as that of the first link, but is error-free.
The packet generation process is assumed to be same for both $\sysA$ and $\sysB$.
In $\sysB$, a generated packet and its copy are transmitted on the first and second links.
The two links use the same transmission duration $\tau_{m}$ at a packet generation instant $m$.
Thus, the decision epochs for the two links would coincide.

The transmission duration $\tau_{m}$ decisions are made as follows.
We define an additional function $A_{B}[t]$ for the second link in $\sysB$ as follows:
\[ A_{B}[t] = t - T[L_{B}[t]], \]
where $L_{B}[t]$ is the index of last packet received for the second link of $\sysB$ by slot $t$.
We note that $L_{B}[t]$ is the index of the last packet which \emph{could have been} received by 
slot $t$ in the first link if all transmissions were successful.
The earlier definition of $A[t]$ for $\sysA$ is retained for the first link.
Similar definitions as in Section \ref{section:SMDP_packet_loss} can be made for $\sysB$.
We consider the state of $\sysB$ to be the tuple $(A_{m}, A_{B,m})$ where $A_{B,m} = A_{B}[T[m]]$.

A stationary policy for $\sysB$ chooses a transmission duration as a possibly randomized function 
$\tau(A_{m}, A_{B,m})$. 
We note that any stationary policy $\pi \in \Pi_{s}$ for $\sysA$ can be implemented for $\sysB$ by neglecting $A_{B,m}$\footnote{We also
note that a policy for $\sysB$ which neglects $A_{m}$ corresponds to a policy in an error-free system where the age 
evolution is $A_{B}[t]$. An error-free system would be a natural choice for a \emph{dominated} system, the 
construction of which is an usual technique to obtain a lower bound. However, decision 
epochs may occur at different times in $\sysA$ and the error-free system. This motivated the construction of $\sysB$ 
which couples the evolution of $\sysA$ and an error-free system together.}.
Under $\pi$, the transmission durations $\tau_{m} = \tau(A_{m})$ are chosen for both the first and second links.
We note that after every $m^{th}$ transmission, $A_{B}[t]$ drops to $\tau_{m}$.
However, $A[t]$ may or not may not reduce depending on whether the packet transmission is successful.
If a packet transmission is successful, then we note that at the end of that packet transmission duration $A[t] = 
A_{B}[t]$.
Therefore, $A_{B}[t] \leq A[t]$.
We also note that $\forall t$, $P[t]$ is the same for the first and second link of $\sysB$.

We now present a numerical lower bound on the tradeoff by considering system $\sysB$.
We note that the set of all stationary policies for $\sysA$ is a subset of the set of all stationary policies for 
$\sysB$.
Therefore, a lower bound on the time average of $A_{B}[t]$ for $\sysB$ considering the set of all stationary policies 
is a lower bound on the time average of $A_{B}[t]$ for $\sysB$ considering only those stationary policies from $\sysA$.
Since $A_{B}[t] \leq A[t]$, we then obtain a lower bound on $\overline{A}^{\pi}$ using the lower bound on the time 
average of $A_{B}[t]$ over all policies.
A lower bound on the time average of $A_{B}[t]$ is obtained in the following proposition.

\begin{proposition}
The minimum average AoI for $\sysA$ over all policies in $\Pi_{s}$ with a power constraint $p_{c}$, i.e. $A^{*}(p_{c})$, is bounded from below by the optimal value of the following optimization problem:
 \begin{align} \begin{split}
    \min_{p(\tau, \tau^{\prime})} &\frac{\sum\limits_{\tau}\sum\limits_{\tau^{\prime}}p(\tau, 
\tau^{\prime})\Bras{\tau(\Exp\tilde{G} + \tau^{\prime})+\frac{1}{2}(\Exp\tilde{G} + \tau^{\prime})(\Exp\tilde{G} + 
\tau^{\prime}-1)}}{\sum_{\tau}\sum_{\tau^{\prime}}p(\tau,\tau^{\prime})\tau^{\prime}+\Exp\tilde{G}}, \\ 
    \text{s. t.} & \sum_{\tau}\sum_{\tau^{\prime}}p(\tau, 
\tau^{\prime})\Brap{P(\tau^{\prime})\tau^{\prime}-p_{c}\tau^{\prime}} \le p_{c}\Exp\tilde{G}, \\
    & p(\tau,\tau^{\prime}) \ge 0, \\
    &\sum_{\tau}\sum_{\tau^{\prime}}p(\tau, \tau^{\prime}) = 1.
    \end{split}
    \label{eq:optimization_LB}
\end{align}
Here $p_{c}$ is the average power constraint and $\Exp\tilde{G}$ is $\frac{1 - \lambda}{\lambda}$ (the mean of the time 
duration for generating a packet after a packet transmission is completed).
The optimization is done over the variables $p(\tau, \tau^{\prime})$, where $\tau, \tau' \in \brac{\tau_{min}, \dots, 
\tau_{max}}$.
The variables $p(\tau, \tau')$ are interpreted as the joint probability of two consecutive transmission 
times being $\tau$ and $\tau'$.
\label{prop:numerical_lb}
\end{proposition}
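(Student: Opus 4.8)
The plan is to exploit the coupled system $\sysB$: by the two facts noted just above, every $\pi\in\Pi_{s}$ for $\sysA$ can be run on $\sysB$ without changing $\overline A^{\pi}$ or $\overline P^{\pi}$, and $A_B[t]\le A[t]$ for all $t$, so it suffices to lower bound the time-average of $A_B[\cdot]$ over all stationary policies of $\sysB$ obeying the power constraint and then relax that minimisation to \eqref{eq:optimization_LB}. Fix such a policy $\pi$. The state $(A_m,A_{B,m})$ at the decision epochs $T[m]$ is Markov, and I take the $m$-th renewal cycle to be $[R[m-1],R[m])$, i.e.\ the $\tilde G_{m-1}$ idle slots after transmission $m-1$ followed by the $\tau_m$ slots of transmission $m$. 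Since $A_B$ drops to $\tau_{m-1}$ at $R[m-1]$ and next resets at $R[m]$, on this cycle $A_B[\cdot]$ runs through $\tau_{m-1},\tau_{m-1}+1,\dots,\tau_{m-1}+(\tilde G_{m-1}+\tau_m)-1$; writing $\tau:=\tau_{m-1}$, $\tau':=\tau_m$, $G:=\tilde G_{m-1}$, the cycle length is $G+\tau'$, the accumulated age is $\tau(G+\tau')+\tfrac12(G+\tau')(G+\tau'-1)$, and the accumulated power is $P(\tau')\tau'$. By the Markov renewal reward theorem (as in the proof of Proposition~\ref{proposition_packet_loss}), $\overline{A_B}^{\pi}$ and $\overline P^{\pi}$ equal the ratios of the corresponding stationary expectations. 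Writing $p(\tau,\tau')$ for the stationary joint law of $(\tau_{m-1},\tau_m)$ and $\Exp\tilde G=\frac{1-\lambda}{\lambda}$, the constraint $\overline P^{\pi}\le p_c$ reads $\sum_{\tau,\tau'}p(\tau,\tau')(P(\tau')\tau'-p_c\tau')\le p_c\Exp\tilde G$, which is exactly the constraint of \eqref{eq:optimization_LB}; thus the $p$ induced by any admissible $\pi$ is feasible there.

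Next I would show that replacing $G$ by its mean $\Exp\tilde G$ inside the accumulated-age term can only decrease the ratio, so that $\overline{A_B}^{\pi}$ is bounded below by the objective of \eqref{eq:optimization_LB} evaluated at the induced $p$; indeed that substitution turns the cycle reward into $\tau(\Exp\tilde G+\tau')+\tfrac12(\Exp\tilde G+\tau')(\Exp\tilde G+\tau'-1)$ and the cycle length into $\Exp\tilde G+\tau'$, which after summation against $p$ give exactly the numerator and denominator of \eqref{eq:optimization_LB} (the linear-in-$G$ terms being reconciled by the independence of $G=\tilde G_{m-1}$ from $\tau=\tau_{m-1}$, which is chosen before the idle period, together with stationarity $\Exp\tau_{m-1}=\Exp\tau_m$). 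Since the two cycle lengths coincide, the claim amounts to $\Exp[(G+\tau')^2]\ge\Exp[(\Exp\tilde G+\tau')^2]$, i.e.\ $\mathrm{Var}(G)+2\,\mathrm{Cov}(G,\tau')\ge 0$. I would establish this from the convexity of $x\mapsto\tfrac12 x(x-1)$ after conditioning on the state just prior to the idle period — so that $G$ enters as a fresh $\mathrm{Geometric}(\lambda)$ variable and Jensen's inequality applies — capturing the intuition that variability of the inter-generation delay can only inflate the long-run average age. This convexity step is the part I expect to be the main obstacle, precisely because $\tau'=\tau_m$ is itself chosen \emph{after} $G$ is observed, so the conditioning has to be set up carefully; everything else is bookkeeping.

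Finally, for every admissible $\pi$ we then have $\overline A^{\pi}\ge\overline{A_B}^{\pi}\ge v_\pi\ge v^{*}$, where $v_\pi$ is the objective of \eqref{eq:optimization_LB} at the induced $p$ and $v^{*}$ is its optimal value. Taking the infimum over all $\pi\in\Pi_{s}$ with $\overline P^{\pi}\le p_c$ yields $A^{*}(p_c)\ge v^{*}$, which is the assertion. (In fact the same argument bounds the time-average of $A_B[\cdot]$ for every stationary policy of $\sysB$, a larger class than the embedded policies of $\sysA$, so the relaxation is genuine.)
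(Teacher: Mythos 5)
Your overall route is the paper's: couple $\sysA$ to an error-free link via $\sysB$, use $A_{B}[t]\le A[t]$, apply the Markov renewal reward theorem to a renewal cycle, read off the induced stationary joint law $p(\tau,\tau')$ of two consecutive transmission durations, and replace $\tilde{G}$ by its mean. Your half-shifted cycle $[R[m-1],R[m])$ is a legitimate variant of the paper's decision-epoch-to-decision-epoch cycle (it in fact reproduces the displayed objective of \eqref{eq:optimization_LB} more directly), and your handling of the power constraint and of the linear term $\tau_{m-1}\tilde{G}_{m-1}$ (where independence genuinely holds because $\tau_{m-1}$ is chosen before the idle period begins) is correct.

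The gap is exactly the step you flag and do not close: the claim $\mathrm{Var}(\tilde{G})+2\,\mathrm{Cov}(\tilde{G}_{m-1},\tau_{m})\ge 0$ for every stationary policy. This is not a Jensen statement. Since $\tau_{m}$ is a function of the state at the decision epoch and $A_{B,m}=\tau_{m-1}+\tilde{G}_{m-1}$, the duration $\tau_{m}$ is chosen \emph{after} $\tilde{G}_{m-1}$ is realized, and the covariance can be strictly negative --- e.g.\ a threshold-type policy that uses the shorter duration when the age at the decision epoch is large makes $\tau_{m}$ decreasing in $\tilde{G}_{m-1}$. By Cauchy--Schwarz the inequality is guaranteed only when $\mathrm{Var}(\tilde{G})\ge 4\,\mathrm{Var}(\tau_{m})$, i.e.\ roughly $\sqrt{1-\lambda}/\lambda \gtrsim \tau_{max}-\tau_{min}$, which is a parameter regime rather than a theorem; no amount of "conditioning so that $G$ enters as a fresh geometric" fixes this, for precisely the reason you yourself identify. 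Two honest repairs: (i) enlarge the optimization variable to the joint law of $(\tau_{m-1},\tilde{G}_{m-1},\tau_{m})$ (equivalently, keep $\Exp[\tilde{G}\tau']$ and $\Exp[\tilde{G}^{2}]$ exact), which gives a valid but different relaxation; or (ii) use the paper's cycle, in which the quadratic term involves $\tilde{G}_{m}$, the idle time \emph{after} transmission $m$, which is genuinely independent of $\tau_{m}$, so that $\Exp[\tilde{G}_{m}^{2}]\ge(\Exp\tilde{G})^{2}$ is clean Jensen and only the single linear cross term $\Exp[\tilde{G}_{m-1}\tau_{m}]$ remains problematic --- a term the paper's own proof also factors into $\Exp\tilde{G}\,\Exp\tau_{m}$ without justification. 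So your proposal is faithful to the paper's argument, but the inequality you single out as the main obstacle is a genuine one, both for your decomposition and for the paper's.
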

We note that \eqref{eq:optimization_LB} is a linear fractional program in $p_{\tau,\tau'}$ and can be solved using the 
Charnes-Cooper transformation. 
We denote this numerical lower bound as $A_{n}(p_c)$. 

\begin{proof}
We first obtain the time average $\overline{A}_{B}^{\pi}$ of $A_{B}[t]$ and average power using Markov renewal 
reward theorem.
For this, we identify a semi-Markov process (SMP) in the evolution of $\sysB$ for any stationary policy $\pi$ as well 
as appropriate costs for a renewal cycle.

At every packet arrival epoch $T[m]$ consider $(A_{m}, A_{B,m})$.
The evolution of $(A_{m}, A_{B,m})$ constitutes a Markov chain, which is the embedded Markov chain (EMC) of the SMP 
under $\pi$ for $\sysB$.
\newcommand{\poltauB}{\tau(A_{m},A_{B,m})}
The duration of time between successive epochs of the EMC is $\tau(A_{m}, A_{B,m}) + \tilde{G}$.
The EMC and the inter-epoch durations define the SMP.

We associate two cumulative costs with the SMP over each transition.
The cumulative age $c((A_{m}, A_{B,m}), \tau)$ cost is 
\begin{align*}
\mathbb{E} & \biggl[A_{B,m} \tau + \frac{\tau\left(\tau-1\right)}{2} + 
\tau\tilde{G}\\& + \frac{\tilde{G}(\tilde{G}-1)}{2}\biggr],
\end{align*}
where we have used $\tau = \poltauB$.
The cumulative power cost is $\mathbb{E}\left[P(\tau)\tau\right]$.

We note that using the Markov renewal reward theorem, we obtain the time average $\overline{A}_{B}^{\pi}$ of $A_{B}[t]$ 
and average power as 
$$\frac{\Exp\bras{c(A_{m}, A_{B,m}, \tau)}}{\Exp\tau + \frac{1-\lambda}{\lambda}} \text{ and } 
\frac{\Exp\bras{P(\tau) \tau}}{\Exp\tau + \frac{1-\lambda}{\lambda}},$$ respectively,
where the expectation is with respect to the stationary distribution (assumed to exist) of the $(A_{m}, A_{B,m})$ EMC.
The average $\overline{A}_{B}^{\pi}$ is a lower bound for $\overline{A}^{\pi}$ since $A_{B,m} \leq A_{m}$ 
implies that $c(A_{m}, A_{B,m}, \tau) \leq 
c(A_{m}, \tau)$.

For $\sysB$, $A_{B,m} \sim \tilde{G} + \tau_{m - 1}$ since $A_{B}[t]$ evolves under the assumption that the 
transmissions are error-free.
Consider the cost $c((A_{m}, A_{B,m}), \tau)$.
We note that the first term $A_{B}\tau$ can be bounded from below by $(\tilde{G}+\tau_{m - 1})\times\tau_{m}$.

If we assume that the EMC reaches a steady state with a stationary distribution under $\pi$, then there is a 
corresponding stationary joint distribution for $(A_{m - 1}, A_{B,m - 1})$ and $(A_{m}, A_{B,m})$.
For a stationary policy $\pi$, this induces a stationary joint distribution of $\tau_{m - 1}$ and $\tau_{m}$.
We denote this stationary joint distribution as $p(\tau, \tau')$ for $\tau_{m - 1} = \tau$ and $\tau_{m} = \tau'$.
Then, using MRRT, we can write the average power $\overline{P}^{\pi}$ as 

\begin{equation*}
    \overline{P}^{\pi} = 
\frac{\sum_{\tau}\pi_{\tau}\sum_{\tau^{\prime}}p(\tau,\tau^{\prime})P(\tau^{\prime})\tau^{\prime}}{\sum_{\tau}\sum_{
\tau^{\prime}}\pi_{\tau} p(\tau,\tau^{\prime})\tau^{\prime} + \Exp\tilde{G}} = \frac{\sum_{\tau'}\pi_{\tau'} 
P(\tau')\tau'}{\sum_{\tau'}\pi_{\tau'}\tau' + \Exp\tilde{G}}. 
\end{equation*}
The lower bound on the average age $\overline{A}_{B}^{\pi}$ as
\begin{equation*}
    = \frac{\sum\limits_{\tau}\sum\limits_{\tau^{\prime}}p(\tau, 
\tau^{\prime})\Bras{\tau'(\Exp\tilde{G} + 
\tau)+\frac{\tau^{{\prime}^{2}}}{2}+\tau^{\prime}\Exp\tilde{G}+\frac{\Exp\tilde{G}^{2}}{2}-\frac{\Exp\tilde{G}
+\tau^{\prime}}{2}}}{\sum_{\tau}\sum_{\tau^{\prime}}p(\tau,\tau^{\prime})\tau^{\prime}+\Exp\tilde{G}}.  
\end{equation*}

Therefore for the average power constraint $p_{c}$, the optimization problem can be expressed as 
\begin{align} \begin{split}
    \min_{p(\tau, \tau^{\prime})} &\frac{\sum\limits_{\tau}\sum\limits_{\tau^{\prime}}p(\tau, 
\tau^{\prime})\Bras{\tau'(\Exp\tilde{G} + \tau)+\frac{1}{2}(\Exp\tilde{G} + \tau^{\prime})(\Exp\tilde{G} + 
\tau^{\prime}-1)}}{\sum_{\tau}\sum_{\tau^{\prime}}p(\tau,\tau^{\prime})\tau^{\prime}+\Exp\tilde{G}}, \\ 
    \text{s. t.} & \sum_{\tau}\sum_{\tau^{\prime}}p(\tau, 
\tau^{\prime})\Brap{P(\tau^{\prime})\tau^{\prime}-p_{c}\tau^{\prime}} \le p_{c}\Exp\tilde{G}, \\
    & p(\tau,\tau^{\prime}) \ge 0, \\
    &\sum_{\tau}\sum_{\tau^{\prime}}p(\tau, \tau^{\prime}) = 1.
    \end{split}
\end{align}
\end{proof}

\noindent Using a similar approach as above, we obtain the following analytical lower bound on the AAoI-Power tradeoff.
\begin{proposition}
The minimum average AoI for $\sysA$ for all policies in $\Pi_{s}$ with a power constraint $p_{c}$, i.e. $A^{*}(p_{c})$, is bounded from below as
follows:
\begin{eqnarray*}
    A^*(p_{c}) \geq \frac{c_{l}(\tau^{*})}{\tau_{max} + (1 - \lambda)/\lambda}, 
\end{eqnarray*}
where $\tau^{*}$ is the smallest real-valued $\tau \in [\tau_{min}, \tau_{max}]$ such that $\frac{\lambda\tau 
P(\tau)}{1-\lambda + 
\lambda \tau} \leq p_{c}$ and
\begin{align*}
c_{l}(\tau) = &\biggl[2\tau\frac{1-\lambda}{\lambda}+\tau \tau_{\text {min 
}}\\&+\frac{\tau(\tau-1)}{2}+\Brap{\frac{1-\lambda}{\lambda}}^{2}\biggr].
\end{align*}
\label{proposition_NP_LB} 
\end{proposition}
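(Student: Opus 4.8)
The plan is to follow the template of the proof of Proposition~\ref{prop:numerical_lb}: pass to the coupled system $\sysB$, express the time average of $A_{B}[t]$ through the Markov renewal reward theorem (MRRT), relax the numerator and denominator of the resulting ratio, and then reduce the relaxed optimization over transmission-time distributions to a closed form using convexity. First I would recall that for the optimal stationary policy $\pi^{*}$ of $\sysA$ (which lies in $\Pi_{s}$), implementing $\pi^{*}$ on $\sysB$ gives $A_{B,m}\le A_{m}$ for all $m$, hence $\overline{A}_{B}^{\pi^{*}}\le\overline{A}^{\pi^{*}}=A^{*}(p_{c})$; moreover the power process $P[t]$ is identical on the two links of $\sysB$, so the constraint $\overline{P}^{\pi^{*}}\le p_{c}$ transfers verbatim. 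Thus $A^{*}(p_{c})$ is bounded below by the infimum of $\overline{A}_{B}^{\pi}$ over all stationary $\sysB$-policies satisfying $\overline{P}^{\pi}\le p_{c}$, and it suffices to lower-bound this infimum.

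For the second step I would reuse the MRRT computation already carried out for $\sysB$: $\overline{A}_{B}^{\pi}=\Exp[c((A_{m},A_{B,m}),\tau)]/(\Exp\tau+(1-\lambda)/\lambda)$, where the cumulative age over one renewal cycle (one transmission of $\tau=\tau_{m}$ slots followed by a Geometric$(\lambda)$ gap $\tilde{G}$) is $\tau A_{B,m}+\frac{\tau(\tau-1)}{2}+\tau\tilde{G}+\frac{\tilde{G}(\tilde{G}-1)}{2}$ and the cycle length is $\tau+\tilde{G}$. The denominator is bounded above by $\tau_{max}+(1-\lambda)/\lambda$. For the numerator I would use $A_{B,m}=\tau_{m-1}+\tilde{G}_{m-1}\ge\tau_{min}$ together with $\Exp\tilde{G}=(1-\lambda)/\lambda$ and the Geometric identity $\Exp[\tilde{G}(\tilde{G}-1)/2]=((1-\lambda)/\lambda)^{2}$, collapsing everything onto the stationary marginal law $\mu$ of the single variable $\tau=\tau_{m}$ to obtain a lower bound of the form $\Exp_{\mu}[c_{l}(\tau)]$, with $c_{l}$ the stated quadratic. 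Combining the two bounds, $A^{*}(p_{c})\ge (\min_{\mu}\Exp_{\mu}[c_{l}(\tau)])/(\tau_{max}+(1-\lambda)/\lambda)$, where the minimum is taken over probability distributions $\mu$ on $\brac{\tau_{min},\dots,\tau_{max}}$ whose induced average power $\Exp_{\mu}[P(\tau)\tau]/(\Exp_{\mu}[\tau]+(1-\lambda)/\lambda)$ does not exceed $p_{c}$.

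The last and, in my view, key step is to evaluate $\min_{\mu}\Exp_{\mu}[c_{l}(\tau)]$ in closed form. I would rewrite the power constraint as $\Exp_{\mu}[g(\tau)]\le p_{c}(1-\lambda)/\lambda$ with $g(\tau):=\tau P(\tau)-p_{c}\tau$, and observe that $g$ is convex (since $\tau P(\tau)$ is convex by the model of Section~\ref{section:tau_vs_p} and $-p_{c}\tau$ is linear) and strictly decreasing (its derivative equals that of the non-increasing function $\tau P(\tau)$ minus the positive constant $p_{c}$), so $g$ is invertible on $[\tau_{min},\tau_{max}]$ with convex decreasing inverse. Since $c_{l}$ is a quadratic with positive leading coefficient and, because $\tau_{min}\ge1$, positive linear coefficient, it is convex and increasing; hence $\phi:=c_{l}\circ g^{-1}$ (a convex increasing function composed with a convex function) is convex, and it is decreasing. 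Jensen's inequality then gives $\Exp_{\mu}[c_{l}(\tau)]=\Exp_{\mu}[\phi(g(\tau))]\ge\phi(\Exp_{\mu}[g(\tau)])\ge\phi(p_{c}(1-\lambda)/\lambda)=c_{l}(\tau^{*})$, where the middle inequality uses that $\phi$ is decreasing and $\Exp_{\mu}[g(\tau)]\le p_{c}(1-\lambda)/\lambda$, and the last equality uses that $\tau^{*}$ --- the smallest $\tau\in[\tau_{min},\tau_{max}]$ with $\lambda\tau P(\tau)/(1-\lambda+\lambda\tau)\le p_{c}$, equivalently with $g(\tau)\le p_{c}(1-\lambda)/\lambda$ --- equals $g^{-1}(p_{c}(1-\lambda)/\lambda)$ when the power constraint is active. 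If instead $g(\tau_{min})\le p_{c}(1-\lambda)/\lambda$ then $\tau^{*}=\tau_{min}$ and $\Exp_{\mu}[c_{l}(\tau)]\ge c_{l}(\tau_{min})$ follows from monotonicity of $c_{l}$ alone. Substituting this lower bound for the numerator yields the claim.

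I expect the main obstacle to be the numerator reduction to exactly $c_{l}(\tau)$: one must treat the term $\Exp[\tau_{m}A_{B,m}]=\Exp[\tau_{m}(\tau_{m-1}+\tilde{G}_{m-1})]$, in which the current transmission time is in general correlated with the previous cycle's geometric gap, and one must invoke the Geometric second-moment identity for the increment accrued during the gap. If that cross-term cannot be discharged cleanly, a fallback is to retain the coarser estimate $A_{B,m}\ge\tau_{min}$ (or the two-variable form of Proposition~\ref{prop:numerical_lb} before collapsing the inner sum), which still produces a lower bound of the same $c_{l}(\tau^{*})/(\tau_{max}+(1-\lambda)/\lambda)$ shape, possibly with a slightly smaller constant inside $c_{l}$. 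The convexity step, by contrast, is robust: it only uses the postulated convexity of $\tau P(\tau)$ and the elementary shape of $c_{l}$, and it simultaneously handles the boundary cases (an empty feasible set, for which $A^{*}(p_{c})$ is undefined and the bound is vacuous, and an inactive power constraint, for which $\tau^{*}=\tau_{min}$).
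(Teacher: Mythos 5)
Your proposal is correct and follows essentially the same route as the paper's proof: couple $\sysA$ with the error-free link of $\sysB$, apply the Markov renewal reward theorem, lower-bound the numerator via $A_{B,m}\ge\tilde{G}+\tau_{min}$ and upper-bound the cycle length by $\tau_{max}+(1-\lambda)/\lambda$, then collapse the resulting distributional optimization to a point mass at the smallest feasible $\tau^{*}$ by convexity. The only differences are cosmetic --- you run Jensen through the composed map $c_{l}\circ g^{-1}$ instead of applying it separately to the (convex increasing) objective and (convex decreasing) constraint as the paper does --- and you correctly flag the cross-term $\Exp[\tau_{m}\tilde{G}_{m-1}]$ as the delicate point, a subtlety the paper's own proof passes over silently.
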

\begin{proof}
As in the proof of Proposition \ref{prop:numerical_lb} we consider the SMP in the evolution of $\sysB$ with EMC 
$(A_{m}, A_{B,m})$.
Similar to the proof above we define the cumulative age reward, denoted by $c(A_{m}, A_{B,m}, \tau)$, as
\begin{align*}
\mathbb{E} & \biggl[A_{B,m} \tau + \frac{\tau\left(\tau-1\right)}{2} + 
\tau\tilde{G}\\& + \frac{\tilde{G}(\tilde{G}-1)}{2}\biggr].
\end{align*}
We note that using the Markov renewal reward theorem, we obtain the average AoI and average power as 
$$\frac{\Exp\bras{c(A_{m}, A_{B,m}, \tau)}}{\Exp\tau + \frac{1-\lambda}{\lambda}} \text{ and } 
\frac{\Exp\bras{P(\tau) \tau}}{\Exp\tau + \frac{1-\lambda}{\lambda}},$$ respectively,
where the expectation is with respect to the stationary distribution of the $(A_{m}, A_{B,m})$ EMC.

From the data transformation method \cite{TijmsH.C2003Afci} we construct another Markov chain where the transitions are 
of unit slot duration, with single stage age and power rewards as $\frac{c(A_{m}, A_{B,m}, \tau)}{\tau + \frac{1 - 
\lambda}{\lambda}}$ and $\frac{P(\tau)\tau}{\tau + \frac{1 - \lambda}{\lambda}}$ such that the average AoI and average 
power can be written as
\begin{eqnarray*}
\tilde{\Exp}\bras{\frac{c(A_{m}, A_{B,m}, \tau)}{\tau + \frac{1-\lambda}{\lambda}}} \text{ and } 
\tilde{\Exp}\bras{\frac{P(\tau)\tau}{\tau + \frac{1 - \lambda}{\lambda}}},
\end{eqnarray*}
respectively.
Here the expectations (denoted as $\tilde{E})$) are with respect to the stationary distribution of the data transformed 
Markov chain.
The average AoI and powers for the SMP and the data transformed Markov chain are the same for a policy $\pi$.

Let us denote the stationary version of $A_{m}$ by $A$ and $A_{B,m}$ by $A_{B}$.
We now consider the problem
\begin{eqnarray}
\text{minimize} & &  \tilde{\Exp}\bras{\frac{c(A, A_{B}, \tau)}{\tau + \frac{1-\lambda}{\lambda}}} \nonumber \\
\text{such that} & & \tilde{\Exp}\bras{\frac{P(\tau)\tau}{\tau + \frac{1 - \lambda}{\lambda}}} \leq p_{c}.
\label{eq:datatransformed_lb}
\end{eqnarray}
To obtain a lower bound on the above optimization problem\footnote{We note that the same lower bound can be obtained 
by considering $\sysA$ and bounding $A_{m}$ from below by $(\tilde{G}+\tau_{min})$.}, we bound the first term $A_{B} 
\tau$ in $c(A, 
A_{B}, \tau)$ from below by $(\tilde{G}+\tau_{min})\times\tau$, since $A_{B} \geq \tilde{G}+\tau_{min}$.
This modified cost denoted as $c_{l}(\tau)$ is then a function only of $\tau$.
We also bound the denominator term $\tau + (1 - \lambda)/\lambda$ from above using $\tau_{max} + (1 - \lambda)/\lambda$.
We also note that the average power is also a function of $\tau$.
Thus, we have the following optimization problem, where we optimize over all possible choices of the distribution of a 
random variable $\tau \in [\tau_{min}, \tau_{max}]$. Note that we have relaxed the integer constraint on $\tau$, which 
is allowed since we seek a lower bound.
\begin{eqnarray*}
\text{minimize} & &  \Exp\bras{\frac{{c_{l}(\tau)}}{\tau_{max} + \frac{1-\lambda}{\lambda}}}, \\
\text{such that} & & \Exp\bras{\frac{{P(\tau) \tau}}{\tau + \frac{1-\lambda}{\lambda}}} \leq p_{c}.
\end{eqnarray*}
The optimal value of the above problem is a lower bound to \eqref{eq:datatransformed_lb}.
We note the objective function  $\frac{c_{l}(\tau)}{\tau_{max} + \frac{1-\lambda}{\lambda}}$ is a convex increasing 
function in $\tau$, while the constraint function is convex decreasing in $\tau$. Therefore, by Jensen's inequality, an 
optimal distribution would assign probability only to a single value $\tau^{*}$ 
which is the smallest $
\tau^{*}$ such that the constraint is satisfied.
The approximate lower bound is then $\frac{{c_{l}(\tau^{*})}}{\tau_{max} + \frac{1-\lambda}{\lambda}}$, where $\tau^*$ 
is the smallest $\tau \in [\tau_{min}, \tau_{max}]$ such that $\frac{\lambda\tau P(\tau)}{1-\lambda + 
\lambda \tau} \leq p_{c}$.
\end{proof}
\begin{remark}
We note that with the modified age cost function, a single transmission duration $\tau^{*}$ is optimal. 
This raises a question of whether FTT policies would have optimal or near-optimal performance for the actual tradeoff 
problem.
\end{remark}

\subsection{Optimality properties of FTT policies}
In this section, we discuss some optimality properties of FTT policies.
We first consider an approximate error-free system to model the original system evolution for high reliability (i.e. $\epsilon \approx 0$).
We assume that packets are always received without error but the $\tau-P(\tau)$ relationship is for the non-zero but small $\epsilon$.
We then have the following proposition.
\begin{proposition}
 FTT policies are AAoI-power tradeoff optimal for the error-free system if $\lambda = 1$.
 \label{prop:FTToptimal_highlambda_lowepsilon}
\end{proposition}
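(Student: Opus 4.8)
The plan is to exploit the fact that $\lambda=1$ collapses the state dynamics. First I would observe that when $\lambda=1$ the generation delay is degenerate, $\tilde G\equiv 0$, and in the error-free system every transmission is received, so the transition rule of Section~\ref{section:SMDP_packet_loss} reduces to $A_{m+1}=\tau_m$ almost surely: the age at a decision epoch is simply the transmission duration used at the previous epoch, independently of anything else. Hence, under any stationary policy $\tau(\cdot)\in\Pi_s$, the sequence $(\tau_m)$ is a Markov chain on the finite set $\{\tau_{min},\dots,\tau_{max}\}$, with $A_m\equiv\tau_{m-1}$ as the state, and the SMDP of Section~\ref{section:SMDP_packet_loss} has an embedded chain with this one-step-memory structure.

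Next I would put both performance measures into renewal-reward form, exactly as in the proofs of Propositions~\ref{proposition_packet_loss} and~\ref{prop:numerical_lb} but with $\Exp\tilde G=0$. Writing $p(\tau,\tau')$ for the stationary joint law of $(\tau_{m-1},\tau_m)$ and using the single-stage age cost $c(a,\tau)=a\tau+\tfrac12\tau(\tau-1)$ (the $\lambda=1$, $\varepsilon=0$ specialization of the $c(\cdot,\cdot)$ in Section~\ref{section:SMDP_packet_loss}), the Markov renewal reward theorem gives
\[
\overline A^\pi=\frac{\Exp_p\!\big[\tau_{m-1}\tau_m+\tfrac12\tau_m(\tau_m-1)\big]}{\Exp_p[\tau_m]},\qquad \overline P^\pi=\frac{\Exp_p[P(\tau_m)\tau_m]}{\Exp_p[\tau_m]},
\]
which for an FTT policy collapses to $\overline A^{\pi_{t_s}}=\tfrac{3t_s-1}{2}$ and $\overline P^{\pi_{t_s}}=P(t_s)$, consistent with Proposition~\ref{proposition_packet_loss}. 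The target is then to show that the constrained problem $\min\{\overline A^\pi:\overline P^\pi\le p_c\}$ over all admissible $p(\tau,\tau')$ is attained by a law supported on the diagonal, i.e.\ by an FTT policy. I would do this by comparison with the FTT tradeoff curve $t\mapsto(\tfrac{3t-1}{2},P(t))$, which is convex because $t\mapsto\tfrac{3t-1}{2}$ is affine increasing and $t\mapsto P(t)$ is convex decreasing (Section~\ref{section:tau_vs_p}): rewrite $\overline P^\pi$ and the $\tfrac12\tau_m(\tau_m-1)$ contribution to $\overline A^\pi$ as expectations under the length-biased marginal $\nu(\tau)\propto\tau\mu(\tau)$ of $\tau_m$, lower-bound the cross term $\Exp_p[\tau_{m-1}\tau_m]/\Exp_p[\tau_m]$ using that $\tau_{m-1}$ and $\tau_m$ share the same marginal $\mu$ (stationarity) together with convexity of $P$, and then invoke Jensen's inequality to place $(\overline A^\pi,\overline P^\pi)$ on or above the FTT curve, with equality forcing $\mu$ to be a point mass. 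Equivalently, in the Lagrangian form $\min_\pi\overline A^\pi+\beta\overline P^\pi$, one can use that an optimal deterministic stationary SMDP policy drives the state $A_m$ into a periodic orbit and argue that a period-one orbit --- an FTT policy --- minimizes the per-orbit average cost.

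The step I expect to be the main obstacle is the control of the cross term $\Exp_p[\tau_{m-1}\tau_m]$. Because the current age is literally the previous action, a stationary policy has an extra degree of freedom --- the correlation between consecutive durations --- that no FTT policy has, and one must rule out that negatively correlating $\tau_{m-1}$ and $\tau_m$ (for instance, alternating between two durations) buys enough reduction in the cross term to more than offset the extra dispersion it injects into the $\tfrac12\tau_m(\tau_m-1)$ and $P(\tau_m)\tau_m$ terms. Pinning this down is exactly where the convexity of $P(\tau)$ (and of the age cost in $\tau$) must be used, and where the $\lambda=1$ hypothesis --- which is what makes the state collapse onto the previous action and removes all the $\tilde G$-dependent cross terms --- is essential; I would expect most of the effort to go into verifying that this correlation cannot push the operating point strictly below the convex FTT frontier, and into checking whether the integer constraint on $\tau$ plays any role in the comparison.
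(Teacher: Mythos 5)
Your setup is correct and is essentially the route the paper takes: with $\lambda=1$ and error-free reception, $\tilde G\equiv 0$ and $A_{m+1}=\tau_m$, and your renewal-reward expressions (including $\overline A^{\pi_{t_s}}=(3t_s-1)/2$ and $\overline P^{\pi_{t_s}}=P(t_s)$) are right. The paper reaches the same point via the data-transformation device, observes that the unit-slot age cost $c(a,\tau)/\tau=a+(\tau-1)/2$ is jointly convex (in fact affine), and then asserts that a point-mass occupation measure --- realized by an FTT policy at $(t_s,t_s)$ --- is optimal. So up to the decisive comparison with the FTT frontier, the two arguments coincide.

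The genuine gap is exactly the step you yourself flag as ``the main obstacle,'' and the tools you propose for it (equality of the marginals of $\tau_{m-1}$ and $\tau_m$, convexity of $P$, Jensen) cannot close it, because the inequality you need fails at the level of generality of Section \ref{section:tau_vs_p}. Consider the deterministic stationary policy that alternates, $\tau(\tau_1)=\tau_2$ and $\tau(\tau_2)=\tau_1$, with $\bar t=(\tau_1+\tau_2)/2$, $d=(\tau_2-\tau_1)/2$, and write $E(\tau):=P(\tau)\tau$. Your own formulas give
\[
\overline A=\tfrac{3}{2}\bar t-\tfrac{1}{2}-\frac{d^{2}}{2\bar t},\qquad \overline P=\frac{E(\tau_1)+E(\tau_2)}{2\bar t}.
\]
If $E$ is constant (e.g.\ $P(\tau)=c/\tau$, which is convex non-increasing with $P(\tau)\tau$ convex non-increasing, hence admissible under the qualitative assumptions you invoke), then $\overline P=P(\bar t)$, identical to the FTT policy with $t_s=\bar t$, while $\overline A$ is \emph{strictly smaller} than $(3\bar t-1)/2$ by $d^{2}/(2\bar t)$: negative correlation of consecutive durations genuinely buys age at no power cost. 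Hence any correct completion must quantify how much the convexity penalty $\tfrac{1}{2}\bigl(E(\bar t-d)+E(\bar t+d)\bigr)-E(\bar t)$ in power outweighs the $\mathcal{O}(d^{2})$ age gain for the \emph{specific} $P(\cdot)$ of the coding model, not merely cite its monotonicity and convexity; your sketch stops before doing this, and the integer constraint is not the issue. (For what it is worth, the paper's one-line assertion that ``a distribution that gives unit mass to a single value would achieve the minimum'' elides precisely the same point, so your diagnosis of where the difficulty lies is accurate --- but as written your proposal does not constitute a proof.)
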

\begin{proof}
We construct another Markov chain using the data transformation method \cite{TijmsH.C2003Afci} where the transitions 
are of unit slot duration and the single stage age and power costs are $\frac{c(A_{m}, \tau)}{\tau + \frac{1 - 
\lambda}{\lambda}}$ and $\frac{P(\tau)\tau}{\tau + \frac{1 - \lambda}{\lambda}}$.
Since $\lambda = 1$, the AAoI and average power are
\begin{eqnarray*}
\tilde{\Exp}\bras{\frac{c(A_{m}, \tau)}{\tau }} \text{ and } 
\tilde{\Exp}\bras{\frac{P(\tau)\tau}{\tau }},
\end{eqnarray*}
respectively.
Here the expectations (denoted as $\tilde{E})$) are with respect to the stationary distribution of the data 
transformed Markov chain.

For $\lambda = 1$, the function $c(a, \tau)/\tau$ is jointly convex in $a$ and $\tau$.
Thus, a distribution that gives unit mass to a single value for $(a, \tau)$ would achieve the minimum value for AAoI.
For a FTT policy with parameter $t_{s}$, if $\lambda = 1$ and $\epsilon = 0$, the stationary distribution gives unit mass to the value $(t_{s}, t_{s})$ for $(a, \tau)$.
Thus, FTT policies are optimal.
\end{proof}

\begin{remark}
 From Proposition \ref{prop:FTToptimal_highlambda_lowepsilon} we expect that FTT policies would be close to optimal for $\lambda \approx 1$ and $\epsilon \approx 0$.
 Non-adaptive policies may therefore be appropriate for optimally trading off AAoI and power in this high packet-generation rate and high reliability regime.
\end{remark}

We now show that FTT policies (with time-sharing) also satisfy a weaker form of optimality (defined as order-optimality in the following) for the error-free system with $\lambda < 1$.
For obtaining this property, we first derive an analytical (but asymptotic) lower bound on the AAoI as $p_{c} \downarrow P_{min}$.

\begin{proposition}
Let $\delta_{k} \downarrow 0$ be a monotonically decreasing sequence of positive real numbers.
Then, for any sequence of stationary policies $\pi_{k} \in \Pi_{s}$ such that $\overline{P}^{\pi_{k}} \leq P_{min} + \delta_{k}$ and sufficiently small $\delta_{k}$ we have that
\begin{eqnarray*}
 \overline{A}^{\pi_{\tau_{max}}} - \overline{A}^{\pi_{k}} = \mathcal{O}(\delta_{k}),
\end{eqnarray*}
for the error-free system.
\label{prop:lowpower_lb}
\end{proposition}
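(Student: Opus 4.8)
The plan is to reduce the claim to a statement about the stationary law of the transmission duration under $\pi_k$: the power budget $P_{min}+\delta_k$ forces this law to put all but an $\mathcal{O}(\delta_k)$ fraction of its mass on $\tau_{max}$, and then I would show that the average‑AoI functional changes by only $\mathcal{O}(\delta_k)$ between this nearly degenerate law and the law degenerate at $\tau_{max}$ (which is exactly the law induced by the FTT policy $\pi_{\tau_{max}}$). First I would set up the renewal–reward representation, as in the proofs of Propositions~\ref{prop:numerical_lb} and~\ref{proposition_NP_LB}: in the error‑free system every transmission succeeds, so at the $m$\textsuperscript{th} decision epoch $A_m=\tau_{m-1}+\tilde{G}_{m-1}$, and with $\phi(a,\tau,g)=a\tau+\tfrac{\tau(\tau-1)}{2}+\tau g+\tfrac{g(g-1)}{2}$ the Markov renewal reward theorem gives, for any $\pi\in\Pi_{s}$, $\overline{A}^{\pi}=\frac{\Exp[\phi(A_m,\tau_m,\tilde{G}_m)]}{\Exp[\tau_m]+\Exp\tilde{G}}$ and $\overline{P}^{\pi}=\frac{\Exp[\tau_mP(\tau_m)]}{\Exp[\tau_m]+\Exp\tilde{G}}$, the expectations being with respect to the stationary law of the embedded chain and $\tilde{G}_m$ a fresh Geometric$(\lambda)$ variable independent of $(A_m,\tau_m)$. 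Write $\mu_k$ for the induced stationary law of $\tau_m$ under $\pi_k$. Note that the cruder substitution $\tau_{m-1}\ge\tau_{min}$ used in Proposition~\ref{proposition_NP_LB} loses a constant at this end point, so here it is essential to keep track of the fact that $\tau_{m-1}$ is also near $\tau_{max}$.

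Next I would prove that $\mu_k$ concentrates on $\tau_{max}$. Put $g(\tau):=\tau P(\tau)-P_{min}(\tau+\Exp\tilde{G})$; this is convex (since $\tau P(\tau)$ is), has $g(\tau_{max})=0$, and, because every stationary policy has average power at least $P_{min}$ with equality only for $\tau\equiv\tau_{max}$ (Proposition~1), satisfies $g(\tau)>0$ for every integer $\tau\in\{\tau_{min},\dots,\tau_{max}-1\}$, so $g_{0}:=\min_{\tau<\tau_{max}}g(\tau)>0$. Rearranging $\overline{P}^{\pi_k}\le P_{min}+\delta_k$ gives $\Exp_{\mu_k}[g(\tau)]\le\delta_k(\Exp_{\mu_k}[\tau]+\Exp\tilde{G})\le\delta_k(\tau_{max}+\Exp\tilde{G})$, and since $g(\tau)\ge g_{0}\,\mathbf{1}\{\tau\neq\tau_{max}\}$ on the action set, $q_k:=\mathbb{P}\{\tau_m\neq\tau_{max}\}\le\frac{\tau_{max}+\Exp\tilde{G}}{g_{0}}\,\delta_k=\mathcal{O}(\delta_k)$.

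Finally I would substitute this into the AoI formula and compare with $\overline{A}^{\pi_{\tau_{max}}}$. The denominator $\Exp_{\mu_k}[\tau_m]+\Exp\tilde{G}$ lies within $(\tau_{max}-\tau_{min})q_k=\mathcal{O}(\delta_k)$ of $\tau_{max}+\Exp\tilde{G}$ and is bounded away from $0$. Expanding $\phi$ using $A_m=\tau_{m-1}+\tilde{G}_{m-1}$ and the independence of $\tilde{G}_m$ from $\tau_m$, every term is either policy‑independent ($\tfrac{1}{2}\Exp[\tilde{G}_m(\tilde{G}_m-1)]$) or a function of the bounded variables $\tau_{m-1},\tau_m$, for which the union bound $\mathbb{P}\{\tau_{m-1}\neq\tau_{max}\text{ or }\tau_m\neq\tau_{max}\}\le2q_k$ gives a deviation of $\mathcal{O}(q_k)$ from the $\pi_{\tau_{max}}$‑value --- except for the single cross term $\Exp[\tilde{G}_{m-1}\tau_m]$, in which the \emph{unbounded} $\tilde{G}_{m-1}$ is correlated with $\tau_m=\tau(\tau_{m-1}+\tilde{G}_{m-1})$ and a union bound is insufficient. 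For this term I would write $\Exp[\tilde{G}_{m-1}\tau_m]=\tau_{max}\Exp\tilde{G}-\Exp[\tilde{G}_{m-1}(\tau_{max}-\tau_m)]$, bound $\Exp[\tilde{G}_{m-1}(\tau_{max}-\tau_m)]\le(\tau_{max}-\tau_{min})\Exp[\tilde{G}_{m-1}\mathbf{1}\{\tau_m\neq\tau_{max}\}]$, and estimate $\Exp[\tilde{G}_{m-1}\mathbf{1}\{\tau_m\neq\tau_{max}\}]$ through the exponential tail of the Geometric distribution together with $\mathbb{P}\{\tau_m\neq\tau_{max}\}=q_k=\mathcal{O}(\delta_k)$, which for sufficiently small $\delta_k$ is of the required order. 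Assembling these bounds (lower‑bounding the numerator and upper‑bounding the denominator of $\overline{A}^{\pi_k}$) yields $\overline{A}^{\pi_{\tau_{max}}}-\overline{A}^{\pi_k}\le c\,\delta_k$, which is the assertion; since $\pi_{\tau_{max}}$ is itself feasible for the constraint $P_{min}+\delta_k$, this also gives $\overline{A}^{\pi_{\tau_{max}}}-A^{*}(P_{min}+\delta_k)=\mathcal{O}(\delta_k)$, i.e. order‑optimality of $\pi_{\tau_{max}}$ in the low‑power regime. The hard part will be the cross term $\Exp[\tilde{G}_{m-1}\tau_m]$: the decision epochs at which $\pi_k$ abandons $\tau_{max}$ need not be typical ones and may be exactly those following an unusually long inter‑generation gap, so the worst‑case contribution of this term must be controlled by the tail of the Geometric law rather than by a crude union bound, whereas every other ingredient involves only bounded quantities and the elementary rearrangement of the power constraint above.
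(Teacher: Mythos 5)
Your overall route is the same as the paper's: represent $\overline{A}^{\pi}$ and $\overline{P}^{\pi}$ via Markov renewal reward on the embedded chain, use the power constraint to force the stationary law of $\tau_m$ to put all but $\mathcal{O}(\delta_k)$ of its mass on $\tau_{max}$, and then compare the two renewal--reward ratios term by term. Your concentration step via $g(\tau)=\tau P(\tau)-P_{min}(\tau+\Exp\tilde{G})$ is an aggregated form of the paper's per-action bound \eqref{eq:LB_marginal} (the paper keeps the $\tau$-dependent denominator $P(\tau)\tau-P(\tau_{max})\tau_{max}$ where you take the worst constant $g_{0}$); both give $q_k:=\mathbb{P}(\tau_m\neq\tau_{max})=\mathcal{O}(\delta_k)$, and the bounded terms of the numerator and the denominator are then handled essentially identically in the two arguments.

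The gap is exactly where you place it: the cross term $\Exp[\tilde{G}_{m-1}\tau_m]$. The control you propose, $\Exp[\tilde{G}_{m-1}\mathbf{1}\{\tau_m\neq\tau_{max}\}]\le M\,q_k+\Exp[\tilde{G}\,\mathbf{1}\{\tilde{G}>M\}]$ optimized over the truncation level $M$, yields $\Theta\brap{q_k\log(1/q_k)}=\Theta\brap{\delta_k\log(1/\delta_k)}$, not $\mathcal{O}(\delta_k)$, and under the paper's definition of $\mathcal{O}(\cdot)$ the ratio to $\delta_k$ diverges. Moreover this is not an artifact of the estimate: a stationary policy that plays $\tau_{max}$ unless $A_m>M$ and otherwise plays $\tau_{min}$ has excess power $\delta=\Theta\brap{(1-\lambda)^{M}}$ while $\Exp[\tilde{G}_{m-1}(\tau_{max}-\tau_m)]=\Theta\brap{M(1-\lambda)^{M}}=\Theta\brap{\delta\log(1/\delta)}$, so no argument using only the probability bound $q_k=\mathcal{O}(\delta_k)$ can recover $\mathcal{O}(\delta_k)$ for this term. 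You should be aware that the paper's own proof does not confront this difficulty: in \eqref{eq:objective} and Term-2 it writes $\Exp[\tilde{G}_{m-1}\tau_m]$ as $\Exp\tilde{G}\sum_{\tau}\pi_{\tau}\tau$, i.e., it factorizes as though $\tilde{G}_{m-1}$ were independent of $\tau_m$, which fails for a state-dependent policy since $\tau_m=\tau(\tau_{m-1}+\tilde{G}_{m-1})$. So your treatment is more careful than the paper's at this step, but as sketched it does not close to the claimed order: you would need either to justify the factorization (it is exact, for instance, for state-independent randomized policies, which suffices for the order-optimality application) or to weaken the conclusion to $\mathcal{O}\brap{\delta_k\log(1/\delta_k)}$.
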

The proof is given in Appendix \ref{app:proof:proposition_orderoptimality}.
The asymptotic lower bound is obtained by bounding the probability of using transmission durations other than $\tau_{max}$ by $\mathcal{O}(\delta_{k})$ terms.

We define a notion of \emph{order-optimality} for a family of policies.
Consider the AAoI-Power tradeoff problem with a sequence of $p_{c,k} = P_{min} + \delta_{k}$. 
From the above proposition we have that $A^{*}(P_{min}) - A^*(p_{c,k})$ for every $p_{c,k} \leq P_{min} + \delta_{k}$ is 
$\mathcal{O}(\delta_{k})$.
We define a family of policies (such as FTT) to be $\delta$ order-optimal if there is a sequence of policies $\pi_{p_{k}}$ (with parameter(s) $p_{k}$) with $\overline{P}^{\pi_{p_{k}}} \leq P_{min} + \delta_{k}$ such that $A^{*}(P_{min}) - \overline{A}^{\pi_{p_{k}}}$ is $\Omega(\delta_{k})$ as $\delta_{k} \downarrow 0$.

We have that the FTT family of policies is order-optimal for the error-free system.
\begin{proposition}
The family of FTT policies (with time-sharing) is $\delta$ order-optimal for the error-free system.
\end{proposition}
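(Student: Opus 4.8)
The object to construct is a sequence of time‑shared FTT policies $\pi_{p_k}$ with $\overline P^{\pi_{p_k}}\le P_{min}+\delta_k$ and $A^{*}(P_{min})-\overline A^{\pi_{p_k}}=\Omega(\delta_k)$, where $P_{min}$ denotes the minimum achievable average power (the quantity appearing in Proposition~\ref{prop:lowpower_lb}); the companion bound $A^{*}(P_{min})-\overline A^{\pi_{p_k}}=\mathcal{O}(\delta_k)$ comes for free, since such a $\pi_{p_k}$ is feasible for the power budget $P_{min}+\delta_k$ so $\overline A^{\pi_{p_k}}\ge A^{*}(P_{min}+\delta_k)$, and Proposition~\ref{prop:lowpower_lb} gives $A^{*}(P_{min})-A^{*}(P_{min}+\delta_k)=\mathcal{O}(\delta_k)$. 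I first recall from the proposition on the tradeoff end points that $\pi_{\tau_{max}}$ attains $P_{min}$, with $\overline A^{\pi_{\tau_{max}}}=A^{*}(P_{min})$.

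The plan is to time‑share between the two "slowest" FTT policies. For $\theta\in[0,1]$ let $\pi^{(\theta)}$ denote the policy that runs $\pi_{\tau_{max}}$ for a long‑run time fraction $1-\theta$ and $\pi_{\tau_{max}-1}$ for a time fraction $\theta$, implemented over alternating blocks whose lengths are taken large enough that the mode‑switching transients are asymptotically negligible. Since each constituent FTT policy has a well‑defined limiting time‑average AAoI and power (Proposition~\ref{proposition_packet_loss} with $\epsilon=0$), the long‑run averages of $\pi^{(\theta)}$ are the convex combinations
\begin{align*}
\overline P^{\pi^{(\theta)}} &= P_{min}+\theta\,c_P, & c_P &:=\overline P^{\pi_{\tau_{max}-1}}-\overline P^{\pi_{\tau_{max}}},\\
\overline A^{\pi^{(\theta)}} &= A^{*}(P_{min})-\theta\,c_A, & c_A &:=\overline A^{\pi_{\tau_{max}}}-\overline A^{\pi_{\tau_{max}-1}}.
\end{align*}
Both constants are strictly positive: $\overline P^{\pi_{t_s}}=\lambda t_s P(t_s)/(1-\lambda+\lambda t_s)$ is strictly decreasing in $t_s$ because $t_sP(t_s)$ is strictly decreasing (as it is for the channel models of Section~\ref{section:tau_vs_p}) while $1-\lambda+\lambda t_s$ is increasing, and $\overline A^{\pi_{t_s}}$ is (strictly) increasing in $t_s$ as already noted after Proposition~\ref{proposition_packet_loss}. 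Now fix $\delta_k\downarrow 0$; for every $k$ with $\delta_k\le c_P$ set $\theta_k=\delta_k/c_P\in[0,1]$ and $\pi_{p_k}:=\pi^{(\theta_k)}$. Then $\overline P^{\pi_{p_k}}=P_{min}+\delta_k$ satisfies the power constraint, while
\[
A^{*}(P_{min})-\overline A^{\pi_{p_k}}=\theta_k\,c_A=\frac{c_A}{c_P}\,\delta_k,
\]
which is $\Omega(\delta_k)$ since $c_A/c_P$ is a fixed positive constant. This is exactly the defining property of $\delta$ order‑optimality; combined with the $\mathcal{O}(\delta_k)$ bound above, the gap is in fact $\Theta(\delta_k)$, matching the order of $A^{*}(P_{min})-A^{*}(P_{min}+\delta_k)$.

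The proof is thus a short consequence of Proposition~\ref{prop:lowpower_lb} together with the monotonicity of $\overline A^{\pi_{t_s}}$ and $\overline P^{\pi_{t_s}}$ in $t_s$; the only step that genuinely needs care is the justification that time‑sharing between two FTT policies realizes the convex combination of their limiting averages — equivalently, that the mode‑switching transients contribute $0$ to the $\limsup$ time‑average — which is the standard block‑time‑sharing argument. An alternative that sidesteps this and stays inside $\Pi_{s}$ is the state‑independent randomized policy $\pi_q$ that picks $\tau_{max}$ with probability $1-q$ and $\tau_{max}-1$ with probability $q$: the Markov renewal reward theorem shows that $q\mapsto\overline P^{\pi_q}$ and $q\mapsto\overline A^{\pi_q}$ are smooth near $q=0$ with $\frac{d}{dq}\overline P^{\pi_q}\big|_{q=0}>0$ and $\frac{d}{dq}\overline A^{\pi_q}\big|_{q=0}<0$, after which the same choice $q_k=\Theta(\delta_k)$ concludes.
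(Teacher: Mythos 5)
Your proposal is correct and follows essentially the same route as the paper: time-share $\pi_{\tau_{max}}$ with one other FTT policy (the paper uses a generic $t_s<\tau_{max}$, you pick $\tau_{max}-1$), choose the time-sharing fraction so the average power equals $P_{min}+\delta_k$ exactly, and conclude via the linearity of the time-shared averages together with the strict monotonicity of $\overline{A}^{\pi_{t_s}}$ and $\overline{P}^{\pi_{t_s}}$ in $t_s$. Your added remarks (the matching $\mathcal{O}(\delta_k)$ bound from Proposition~\ref{prop:lowpower_lb} and the state-independent randomized alternative within $\Pi_s$) are consistent with observations the paper itself makes in the surrounding remarks.
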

\begin{proof}
 We construct a sequence $\pi_{k}$ of time shared FTT policies to show order-optimality.
 We note that the set of unique FTT policies is obtained by choosing the parameter $t_{s}$ from $\brac{\tau_{1} = \tau_{min}, 
\dots,\tau_{k} = \tau_{max}}$.
 From Proposition \ref{proposition_packet_loss} we have that $\overline{A}^{\tau}$ is monotonically increasing in $\tau$ 
while $\overline{P}^{\tau}$ is monotonically decreasing in $\tau$.
 Consider $p_{c} = P_{min} + \delta_{k}$ where $\delta_{k}$ is small.
 We can define a parameter $\alpha_{k}$ such that 
 \begin{equation*}
  \alpha_{k} P_{min} + (1 - \alpha_{k}) \overline{P}^{\pi_{t_{s}}} = P_{min} + \delta_{k}.
 \end{equation*}
 Here $t_{s} < \tau_{max}$ and $\alpha_{k}$ is a time-sharing parameter (the two FTT policies $\pi_{\tau_{max}}$ and $\pi_{t_{s}}$ are time-shared in the proportion $\alpha_{k}$ to $1 - \alpha_{k}$ with the duration of time used for a policy tending to infinity).
 We then note that 
 \begin{equation*}
  \alpha_{k} = 1 - \frac{\delta_{k}}{\overline{P}^{\pi_{t_{s}}} - P_{min}}.
 \end{equation*}
 The AAoI of this time shared policy $\pi_{k}$ is $\overline{A}^{\pi_{p_{k}}} = \alpha_{k} A^{*}(P_{min}) + (1 - \alpha_{k}) \overline{A}^{\pi_{t_{s}}}$ so that as $\delta_{k} \downarrow 0$, $A^{*}(P_{min}) - \overline{A}^{\pi_{p_{k}}}$ is $\Omega(\delta_{k})$.
\end{proof}

\begin{remark}
For the AAoI-Power tradeoff problem, the above result motivates the use of FTT policies in a regime where the power constraint $p_{c}$ approaches $P_{min}$.  
We note the following technical points: 
\begin{enumerate}
 \item The class of time-shared policies is not stationary. However, we can define a larger class of quasi-stationary policies, where we assume that there is a quasi-stationary state which is chosen with some probability for a sample evolution of the system. The asymptotic lower bound derived in Proposition \ref{prop:lowpower_lb} can be extended to the case of quasi-stationary policies (by using a joint distribution over the quasi-stationary state and two consecutive transmission durations in the proof of Proposition \ref{prop:lowpower_lb}).
 \item If we restrict to the class of stationary policies, then a family of state-independent randomized policies can be shown to be order-optimal. A policy in this family chooses a transmission duration at random independent of the state (age) at a decision epoch. Such policies can also be shown to be order-optimal. We note that in many other scenarios (such as those considered in \cite{neely}) state-independent randomized policies are not order-optimal.
\end{enumerate}
\end{remark}

\begin{remark}
 We note that a similar order-optimality result also holds for a \emph{high-power} regime.
 The maximum power $P_{max}$ is obtained by using the FTT policy with $t_{s} = \tau_{min}$.
 If we consider the AAoI-Power tradeoff problem for $p_{c} \geq P_{max} - \delta$, then we obtain that 
$A^{*}(p_{c}) = \overline{A}^{\pi_{\tau_{min}}} + \mathcal{O}(\delta)$.
 The result holds since similar $\mathcal{O}(\delta)$ bounds on the stationary probability of using a transmission 
duration $\tau \neq \tau_{min}$ can be obtained.
 These bounds are outlined in Appendix \ref{sec:app:bounds_highpower}.
\end{remark}

In summary, non-adaptive FTT policies (or state-independent randomized policies) are \emph{good candidate policies} for achieving the optimal AAoI-Power tradeoff for high-reliability systems.

\subsection{Numerical \& Simulation Results}
\label{low_reliability_results}
In this section, we evaluate the performance of the family of FTT policies.
We compare the AAoI-Power tradeoff achieved by the families of FTT policies and threshold policies with that achieved by $\pi_{SMDP}$ as well as the lower bounds.

The AAoI-Power tradeoff is illustrated in Figures \ref{fig:tradeoffs_packet_loss} and 
\ref{fig:all_in_one_NP} for error probabilities: (a) $\epsilon = 0.01$ and (b) $\epsilon = 0.2$.
In Figure \ref{fig:comparison_policies} the arrival rate $\lambda = 0.1$. The minimum and maximum transmission durations 
$\tau_{min} = 24$ and $\tau_{max} = 138$ are chosen. Corresponding power levels $P(\tau_{min}) = 10$~mW and 
$P(\tau_{max}) = 1$~mW from \eqref{eq:polyanskiy} for noise power $N = 10$~mW, packet length 
$K = 8$ bits.
For obtaining the tradeoff under the family of FTT policies, we vary the parameter $t_s \in \brac{\tau_{min}, \dots, 
\tau_{max}}$ and plot the AAoI and average power from Proposition~\ref{proposition_packet_loss}.

\begin{figure}[!htb]
    \centering
    \begin{subfigure}{0.49\linewidth}
    \includegraphics[width = 1\linewidth]{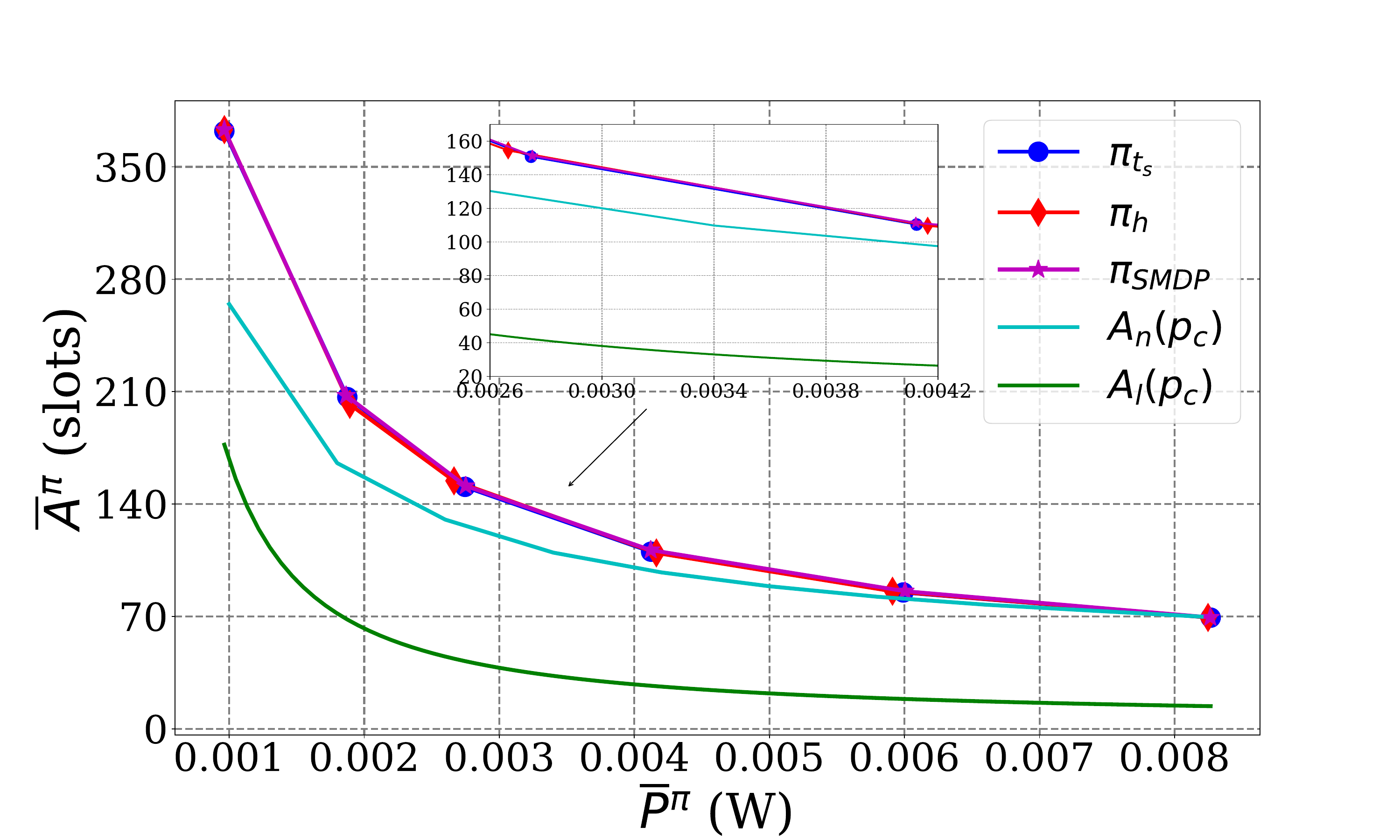}
    \caption{$\epsilon = 0.01$}
    \label{fig:all_in_one_NP}
    \end{subfigure}
    \begin{subfigure}{0.49\linewidth}
    \includegraphics[width = 1\linewidth]{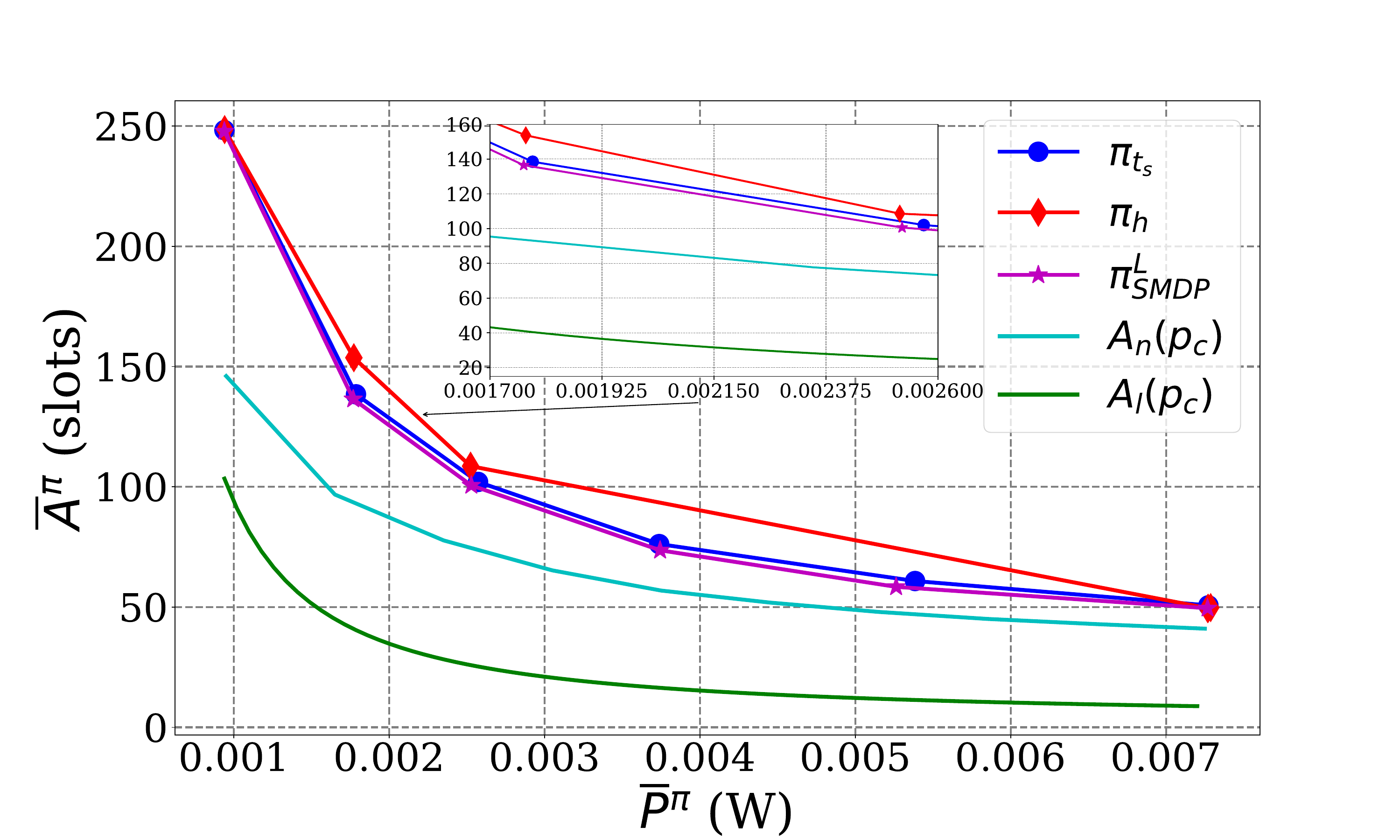}
    \caption{$\epsilon = 0.2$}
    \label{fig:tradeoffs_packet_loss}
    \end{subfigure}
    \caption{Comparison of AAoI-average power tradeoff for FTT policy, threshold policy, and the optimal policy (from SMDP). The numerical lower bound $A_{n}(p_{c})$ as well as the analytical lower bound $A_{l}(p_{c})$ are also shown.}
    \label{fig:comparison_policies}
\end{figure}

The results demonstrate that the proposed family of FTT policies is approximately optimal with respect to the tradeoff performance.
We plot the optimal policies $\pi_{SMDP}$ (for different $\beta$ values) for $\epsilon = 0.01$ and $\lambda = 0.1$ in Figure \ref{fig:SMDP_optimalpolicies}.
For $\beta = 0$ and large $\beta$ ($10^{6}$) we obtain the end points of the AAoI-Power tradeoff for which FTT policies are optimal.
For intermediate values of $\beta$, we observe that FTT is only approximately optimal.

\begin{figure}
 \centering
 \includegraphics[width = 0.5\linewidth]{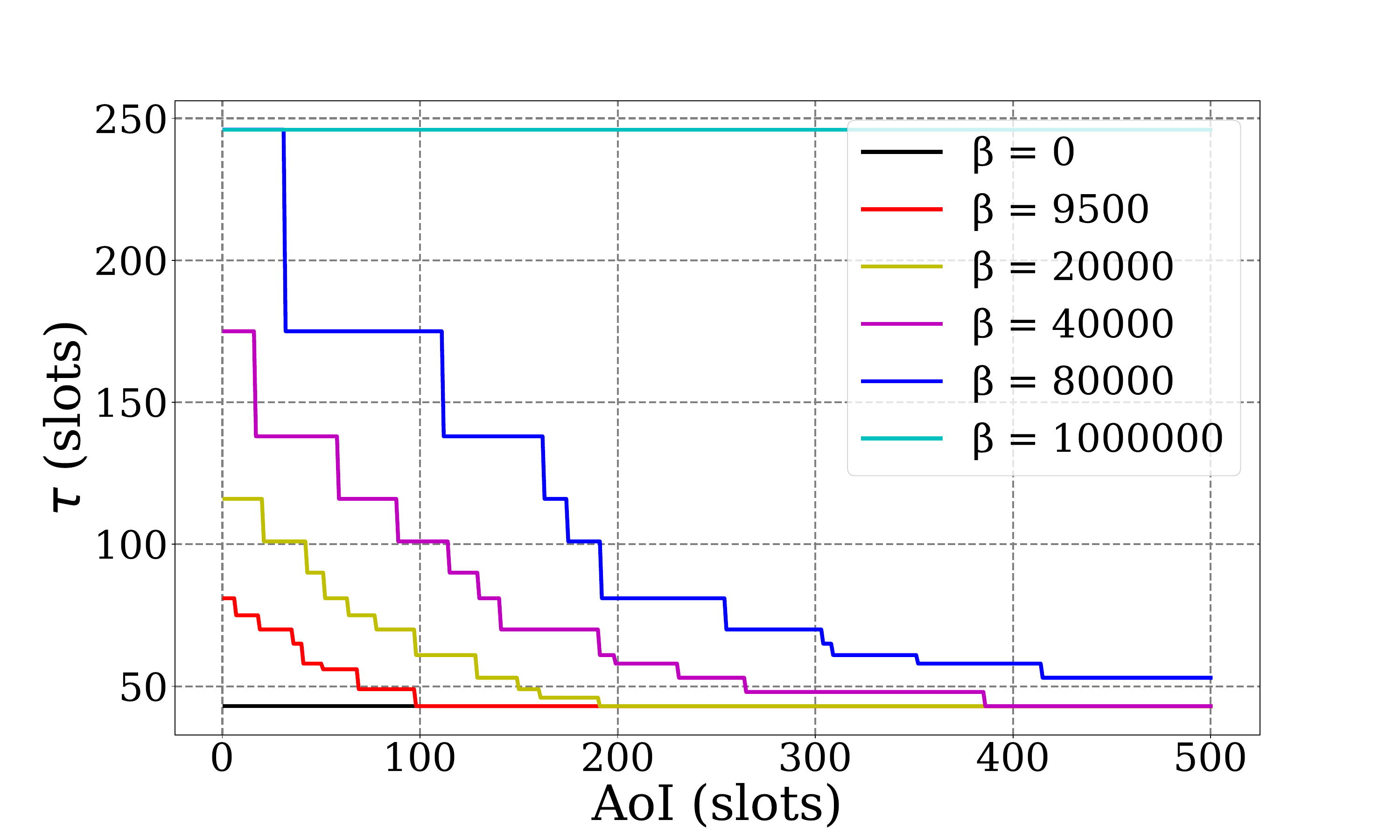}
 \caption{Illustration of the action $\tau(.)$ as a function of the state (age) for optimal policies $\pi_{SMDP}$ for different $\beta$ values. The optimal policy is not FTT except at the end-points of the AAoI-Power tradeoff. The maximum and minimum average power end-points are obtained for $\beta = 0$ and $\beta = 10^{6}$ respectively.}
 \label{fig:SMDP_optimalpolicies}
\end{figure}

We observe that this performance is obtained even for $\epsilon = 0.2$ and $\lambda = 0.1$ which may not be considered to be a high-reliability and large packet generation rate regime.

The analytical lower bound is loose, while the numerical lower bound $A_{n}(p_{c})$ is tight for higher power values and smaller $\epsilon$.
The numerical lower bound is appropriate for performance evaluation in a high power low error probability regime and along with the analytical upper bound for FTT policies provides a way to characterize the tradeoff without computing the optimal policy using the SMDP.

The tradeoff plots for the threshold policy has been obtained using the following steps. 
In the first step, we minimize $\overline{A}^{\pi_{h}} + \beta \overline{P}^{\pi_{h}}$ for different positive 
values of $\beta$.
For a particular $\beta$, the minimization is carried out over the parametes $h$, $\tau_{a}$, and $\tau_{b}$ using an 
analytical characterization of $\overline{A}^{\pi_{h}}$ and $\overline{P}^{\pi_{h}}$ obtained using an error-free 
system.
This analysis is presented in Appendix \ref{sec:errorfree_system}.
We note that the minimization is carried out using an evolutionary algorithm called Differential Evolution which 
yields a local minimum.
In the second step, the locally optimal values of the parameters: threshold $h$ and the transmission durations $\tau_{a}$ and 
$\tau_{b}$ are used to simulate the threshold policy for a system with errors in order to obtain the actual
$\overline{A}^{\pi_{h}}$ and $\overline{P}^{\pi_{h}}$ which are then plotted.
For the case with $\epsilon = 0.01$, the error-free system analysis yields \emph{good} approximations for 
$\overline{A}^{\pi_{h}}$ and $\overline{P}^{\pi_{h}}$ which leads to a choice of parameters that yield near-optimal 
tradeoff performance for the family of threshold policies.
However, for larger $\epsilon = 0.2$, $\overline{A}^{\pi_{h}}$ and $\overline{P}^{\pi_{h}}$ are not approximated well 
enough, so that the optimization does not lead to parameter choices that yield near-optimal tradeoff performance.


\section{Age-Power Tradeoff for Block-Fading Channels}
In this section, we consider a block-fading point-to-point channel model and evaluate the AAoI-Power tradeoff for FTT policies, which were found to have approximately optimal performance.
We consider a block-fading model where the channel is assumed to be constant for a block of $T$ consecutive symbols and then changes \cite{marzetta1999capacity}.
This block of $T$ consecutive symbols is the coherence time of the channel.
The channel can be considered approximately constant within each
block, allowing the receiver to estimate the channel and decode the transmitted data using standard techniques. However, 
at the end of each block, a new channel realization is encountered, requiring the receiver to re-estimate the channel 
and adapt its decoding strategy.

A codeword of length $\tau = LT$ spans $L$ independent channel realizations. 
If the receiver has access to channel state information (CSI) but not the transmitter, the maximum achievable rate 
$\rho^{\ast}_{csi}(\tau, \epsilon)$ is asymptotically defined \cite{yang2012diversity, polyanskiy2011scalar}:
\begin{equation}\label{eq:max_rate}
\rho^{\ast}_{csi}(\tau, \epsilon) = 
C_{csi}-\sqrt{\frac{V_{csi}}{\tau}}\mathbb{Q}^{-1}(\epsilon)+o\left(\frac{1}{\sqrt{\tau}}\right),
\end{equation}
where $C_{csi} = \mathbb{E}_{H}[\log(1 + \gamma|H|^{2})]$ is the channel capacity (where $|H|^{2}$ is the random channel 
gain and $\gamma$ is the signal-to-noise ratio (SNR) at the transmitter), $V_{csi} = T \operatorname{Var}[\log(1 + 
\gamma|H|^{2})] + 1 - \mathbb{E}^{2}\left[\frac{1}{1 + \gamma|H|^{2}}\right]$ is the channel dispersion, and $f(x) = 
o(g(x))$ means that $\lim_{x\rightarrow\infty} |f(x)/g(x)| = 0$.
The error probability is $\epsilon$.

To understand the effect of fading on the AAoI-Power tradeoff, we consider a system model which is the same as that considered in Section \ref{section:System_Model} except that the set of possible transmission durations are now multiples of the coherence time, which is $T$ slots.
Essentially, the control of the transmission duration is via the choice of $L \in \brac{1,2,3,\dots}$.
With this choice for the transmission durations, the relationship between SNR at the transmitter ($\gamma$) (or the transmit power $P(\tau)$ for a fixed noise power) for satisfying the error probability of $\epsilon$, and the codeword length $\tau$ in Rayleigh-fading block-fading channel is shown in Figure~\ref{fig:snr_vs_tau_vs_T}.
We have illustrated this relationship for different values of $T$ and an error probability requirement $\epsilon = 0.01$.
The parameter $K = 8$ and $N = 0.01$.
For each value of $T$, the relationship is obtained from \eqref{eq:max_rate} and expressions for $C_{csi}$ and $V_{csi}$.
The corresponding AAoI-power tradeoff for the family of FTT policies is shown in Figure \ref{fig:fading_tradeoff_FTT} where the parameter $\lambda$ is $0.5$.
\begin{figure}[!htb]
    \centering
    \begin{subfigure}{0.49\linewidth}
    \centering
    \includegraphics[width = 1\linewidth]{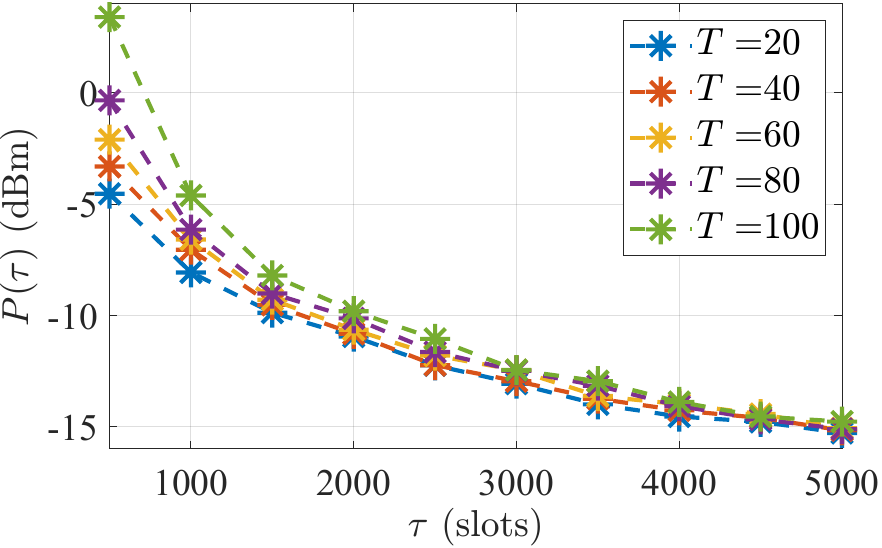}
    \caption{$P(\tau)$ versus $\tau$ for different $T$.}
    \label{fig:snr_vs_tau_vs_T}
    \end{subfigure}
    \begin{subfigure}{0.49\linewidth}
    \includegraphics[width = 1\linewidth]{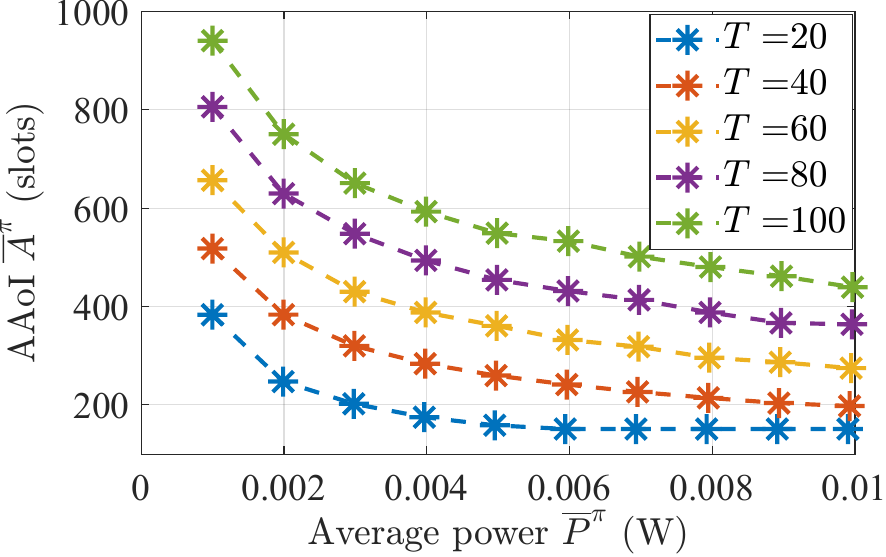}
    \caption{ $\overline{P}^{\pi}$ versus $\overline{A}^{\pi}$ for FTT policy.}
    \label{fig:fading_tradeoff_FTT}
    \end{subfigure}
    \caption{Illustrations of the transmit power-transmission duration tradeoff and AAoI-Power tradeoff for block fading channels with coherence time $T$. The tradeoffs are illustrated for different choices of $T$.}
    \label{fig:fading_tradeoffs}
\end{figure}
We observe that a shorter coherence time $T$, resulting in faster channel dynamics, offers a better tradeoff between the 
SNR (and hence the transmit power) and transmission duration $\tau$ when the receiver has access to the CSI. The intuition 
is that we get a larger channel diversity gain as $L$ increases when coherence time $T$ is small for the same 
block-length $\tau$. Furthermore, it is interesting to note that the SNR or transmit power is a non-increasing convex 
function of the codeword length or transmission duration $\tau$, even in the case of block-fading channels. This implies 
that the AoI-transmit power tradeoff analysis we carried out for the non-fading case can be easily extended to the 
fading channels.

\section{Age-Power tradeoff with other packet generation models}
\label{section:Pre}
The generation rate of update packets plays a crucial role in minimizing AoI.
Different packet generation models have been considered in literature.
Models in which packets are generated independently of the age-evolution process were considered in work such as \cite{6195689} and \cite{kaul2012status}.
Another common packet generation model is the zero-wait model in which a new packet is generated at the source whenever a \emph{server} (that models the system which transfers the packet to the destination) is free.
Such models were considered in \cite{sun2019sampling} and \cite{sun2017update} where it was also shown that the zero-wait scheme is not optimal with respect to minimizing AAoI in the scenarios considered.
Zero-wait with an additional delay was found to perform better.
We note that the model in Section \ref{section:System_Model} considered a packet generation model that generated a packet once the current packet had finished transmission, but with an additional random delay (the parameter $\lambda$ controls the delay in generation of a new packet).
We also note that there are other packet generation models such as periodic packet generation or age-threshold packet generation considered in literature \cite{sun2019sampling}.
In order to better understand the impact of packet generation on the age-power tradeoff, we discuss two packet generation models here.
We characterize the AAoI-Power tradeoff for FTT policies for these packet generation models and compare with the AAoI-Power tradeoff for the model in Section \ref{section:System_Model}.

\noindent\textbf{Independent packet generation process:}
We assume that packets are generated in each slot according to an independent and identically distributed (IID) Bernoulli$(\lambda)$ process.
We note that this models a scenario where packets are generated independently of the age evolution process in the system and is similar to periodic packet generation except that the periods or inter-generation times are random rather than deterministic.

At a packet generation epoch, which corresponded with the decision epoch in our earlier model, we now consider the following cases.
If there is no on-going packet transmission, then the packet generation epoch is a decision epoch at which we choose a packet transmission duration for the generated packet.
However, if there is an on-going packet transmission, then we first have the choice of either pre-empting the current packet transmission or discarding the generated packet.
We note that discarding the generated packet and allowing the current packet transmission to continue till completion is the model that we have considered in Section \ref{section:System_Model}.
The time till the next packet generation is Geometric due to the memoryless property of the inter-generation times.
We consider the alternate model, where at a packet generation time, the current packet transmission is pre-empted and discarded.
A new packet transmission starts for the generated packet after the current packet is discarded.
We call this the pre-emptive (or P) model, while the earlier model in Section \ref{section:System_Model} is called non-preemptive (NP).

\noindent\textbf{Age-threshold based packet generation:}
In this packet generation model, we assume that the transmitter is aware of the age process $A[t]$ at the receiver using error-free feedback.
A new packet is generated in a slot $t$ if there is no on-going packet transmission and if the current age $A[t]$ is greater than or equal to a threshold $h_{a}$.
We call this the age-threshold (AT) model.

\subsection{AAoI-Power tradeoff for FTT policies}
We note that intuitively the P model behaves similarly to the NP model in the regime where $\tau_{max} < \frac{1}{\lambda}$, i.e., for $\lambda \approx 0$.
In this regime, the probability of pre-emption is small.
A detailed analysis of the AAoI-Power tradeoff for the P model including a SMDP formulation to characterize the optimal tradeoff and a lower bound were presented in our prior work \cite{sudarsanan2023optimal} under the assumption of error-free transmissions.

In this section, we analytically characterize the AAoI-Power tradeoff for FTT policies for P and AT packet generation models with packet errors.
We then compare these with the AAoI-Power tradeoff obtained earlier for NP model.
We first consider the P model under a FTT policy with transmission duration of $t_{s}$.
We note that at a packet generation instant, if there is no ongoing transmission then a new packet transmission starts with a duration of $t_{s}$.
If there is an ongoing transmission then that is discarded and a new packet transmission starts with a duration of $t_{s}$.
The following proposition characterizes $\overline{A}^{\pi_{t_{s}}}$ and $\overline{P}^{\pi_{t_{s}}}$ with a packet error probability of $\epsilon$.

\begin{proposition}
For P packet generation model with FTT policy of packet transmission duration $t_{s}$ and packet error probability $\epsilon$ we have that
\begin{eqnarray*}
 \overline{A}^{\pi_{t_{s}}} = \frac{1}{\alpha\lambda} \text{ and } \overline{P}^{\pi_{t_{s}}} = P(t_{s})\brap{1 
- (1 - \lambda)^{t_{s}}}.
\end{eqnarray*}
where $\alpha = (1 - \epsilon)(1 - \lambda)^{t_{s} - 1}$.
\label{prop:Pmodel_tradeoff}
\end{proposition}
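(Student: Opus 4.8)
The plan is to treat $\overline{P}^{\pi_{t_s}}$ and $\overline{A}^{\pi_{t_s}}$ separately, each by a renewal-reward (Markov renewal reward) computation in the spirit of Proposition~\ref{proposition_packet_loss}, the key simplification now being the memorylessness created by pre-emption. Throughout I use that, in the P model, generations occur in \emph{every} slot independently with probability $\lambda$, and every generation starts a new attempt (pre-empting the ongoing one if the channel is busy).

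For the power I would note that the transmit power equals $P(t_s)$ exactly on the busy slots, so $\overline{P}^{\pi_{t_s}} = P(t_s)\,\Pr{\text{channel busy in steady state}}$. The channel is busy in slot $t$ if and only if at least one packet is generated in the length-$t_s$ window $\{t-t_s+1,\dots,t\}$: if $s^{\ast}$ is the \emph{latest} such generation, the attempt it started occupies $s^{\ast},\dots,s^{\ast}+t_s-1 \ni t$ and cannot have been pre-empted by slot $t$ since no later generation has occurred up to $t$; conversely, if no packet is generated in that window, any earlier attempt has already terminated (pre-emption only shortens attempts) and no new one has begun. Since generations are i.i.d.\ $\mathrm{Bernoulli}(\lambda)$, this probability is $1-(1-\lambda)^{t_s}$, giving $\overline{P}^{\pi_{t_s}} = P(t_s)\brap{1-(1-\lambda)^{t_s}}$.

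For the AAoI, the first observation is that a packet is \emph{delivered} only if its transmission both runs to completion (no generation in the remaining $t_s-1$ slots of its transmission, probability $(1-\lambda)^{t_s-1}$) and is decoded correctly (probability $1-\epsilon$); these are independent, so delivery occurs with probability $\alpha=(1-\epsilon)(1-\lambda)^{t_s-1}$, and a delivered packet resets the age to exactly $t_s$. Immediately after a delivery the channel is idle and the subsequent evolution uses only fresh randomness, so the generation epochs of successively delivered packets form a renewal process; write $X$ for the i.i.d.\ inter-renewal gap (light-tailed, so all moments exist). Over one such cycle the age traces the sawtooth $t_s, t_s+1, \dots, t_s+X-1$, so renewal reward gives
\begin{equation*}
\overline{A}^{\pi_{t_s}} = t_s + \frac{\mathbb{E}[X^{2}] - \mathbb{E}[X]}{2\,\mathbb{E}[X]}.
\end{equation*}
I would obtain $\mathbb{E}[X]$ from a rate argument: attempts start exactly at generation slots (rate $\lambda$) and each delivers with probability $\alpha$ independently of the generation that initiated it, so the long-run delivery rate is $\lambda\alpha$ and hence $\mathbb{E}[X] = 1/(\lambda\alpha)$.

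The remaining, and main, task is $\mathbb{E}[X^{2}]$. I would use the distributional recursion obtained by inspecting the first attempt of a cycle, namely $X \stackrel{d}{=} t_s + \tilde{G} + V$ with $\tilde{G}\sim\mathrm{Geometric}(\lambda)$ the idle wait, where $V=0$ with probability $\alpha$; $V \stackrel{d}{=} t_s + \tilde{G}' + V'$ with probability $(1-\lambda)^{t_s-1}\epsilon$ (the attempt completes but is decoded in error, after which the state is fresh); and $V \stackrel{d}{=} J + V''$ with probability $1-(1-\lambda)^{t_s-1}$ (the attempt is pre-empted at interior position $J$, after which the state is fresh), with $V',V''$ independent copies of $V$ and $J$ the truncated-geometric pre-emption position. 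Taking first and second moments turns this into two linear equations for $\mathbb{E}[V]$ and $\mathbb{E}[V^{2}]$; the auxiliary quantities $\mathbb{E}[J]$, $\mathbb{E}[J^{2}]$ and $\sum_{j=1}^{t_s-1} j(1-\lambda)^{j-1}$ are elementary, and the single identity $\lambda\sum_{j=1}^{t_s-1} j(1-\lambda)^{j-1} + t_s(1-\lambda)^{t_s-1} = \bras{1-(1-\lambda)^{t_s}}/\lambda$ (the same one behind the power formula) is what makes things cancel. The computation should collapse to $\mathbb{E}[X^{2}] = 2\,\mathbb{E}[X]^{2} - (2t_s-1)\mathbb{E}[X]$, and substituting this into the renewal-reward expression yields $\overline{A}^{\pi_{t_s}} = \mathbb{E}[X] = 1/(\alpha\lambda)$. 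A bookkeeping-lighter alternative is to view a cycle as a $\mathrm{Geometric}(1-\epsilon)$ sum of i.i.d.\ ``runs'' (each run: an idle wait, then a $\mathrm{Geometric}((1-\lambda)^{t_s-1})$ number of i.i.d.-length pre-empted attempts, then one completed attempt of length $t_s$) and apply the compound-sum formulas for $\mathbb{E}[X]$ and $\mathrm{Var}(X)$; either route ends at the same truncated-geometric identity. The main obstacle is precisely this second-moment algebra and verifying that it telescopes to the stated closed form.
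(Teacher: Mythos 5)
Your proposal is correct and lands on the same renewal structure as the paper (renewal epochs at successful deliveries, age resetting to $t_{s}$, and the sawtooth formula $\overline{A}^{\pi_{t_s}} = t_s + \mathbb{E}[X^2]/(2\mathbb{E}[X]) - 1/2$), but two of your sub-arguments are genuinely different and cleaner. For the power, the paper computes the expected cumulative energy per renewal cycle, $P(t_s)\,\mathbb{E}\left[t_s + \sum_{i=1}^{X-1}\min(t_s,\overline{G}_i)\right]$, and divides by $\mathbb{E}[R]$, which requires evaluating $\mathbb{E}\,\min(t_s,\overline{G}_i)$ under a conditioned geometric law; your stationary busy-window argument (slot $t$ is busy iff at least one generation falls in $\{t-t_s+1,\dots,t\}$) gets $1-(1-\lambda)^{t_s}$ in one line and is valid because generations are i.i.d.\ Bernoulli in every slot in the P model. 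For the mean cycle length, the paper assembles $R = \tilde{G}_0 + \sum_{i=1}^{X-1}\overline{G}_i + t_s$ with $X\sim\mathrm{Geometric}(\alpha)$ counting attempts and sums the compound expression; your rate argument (deliveries occur at rate $\lambda\alpha$ per slot) gives $\mathbb{E}[X]=1/(\lambda\alpha)$ directly. For the second moment, the paper uses the conditional-variance decomposition of the compound geometric sum, whereas you propose a distributional fixed-point recursion on $V$; your three-case split (deliver w.p.\ $\alpha$, complete-but-error w.p.\ $\epsilon(1-\lambda)^{t_s-1}$, pre-empt w.p.\ $1-(1-\lambda)^{t_s-1}$) is exhaustive and the freshness claims after each branch are justified by slot-wise independence. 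The one piece you leave unexecuted is the closing algebra, but the target you state, $\mathbb{E}[X^2] = 2\,\mathbb{E}[X]^2 - (2t_s-1)\mathbb{E}[X]$, is exactly the paper's $\mathbb{E}[R^2] = \tfrac{2}{\alpha^2\lambda^2} + \tfrac{1-2t_s}{\alpha\lambda}$, and it does yield $\overline{A}^{\pi_{t_s}} = \mathbb{E}[X] = 1/(\alpha\lambda)$ upon substitution, so the plan closes; what your route buys is less bookkeeping with conditioned distributions, at the cost of solving a small linear system for $\mathbb{E}[V]$ and $\mathbb{E}[V^2]$.
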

The proof of this proposition is given in Appendix \ref{appendix:proofPmodel_tradeoff}.
We observe that the AAoI exhibits a $\frac{1}{\lambda}$ behaviour as $\lambda \downarrow 0$.
Also, since we have preemption of an existing transmission, the AAoI increases to infinity as $\lambda \uparrow 1$.
We also note that for a given average power $\lambda$ and $t_{s}$ could be optimized to obtain the minimum AAoI.

We now consider the AT model with threshold $h_{a}$ and packet transmission duration of $t_{s}$.
We note that if $h_{a} \leq t_{s}$ and $\epsilon < 1$, after sufficiently large time, the AT model behaves as a zero-wait system.
After every packet transmission, which may or may not be in error, the age would be more than $h_{a}$ (the minimum age is $t_{s}$).
So, a new packet would be immediately generated.
If $h_{a} > t_{s}$, then in case the age after a transmission is less than $h_{a}$ there is a delay to the next packet generation epoch.
All transmissions are of duration $t_{s}$.
In the following proposition we characterize the $\overline{A}^{\pi_{t_{s}}}$ and $\overline{P}^{\pi_{t_{s}}}$ with a packet error probability of $\epsilon$ for AT model.

\begin{proposition}
For AT packet generation model with an age-threshold of $h_{a} \geq t_{s}$ and FTT policy with packet transmission duration $t_{s}$ and packet error probability $\epsilon$ we have that
\begin{eqnarray*}
 \overline{A}^{\pi_{t_{s}}} & = & t_{s} + \frac{(h_{a}-t_{s})^{2}(1 - \epsilon)^{2} + t_{s}^{2}(1 + \epsilon) + 2(h_{a} - t_{s})t_{s}(1 - \epsilon)}{2(1 - \epsilon)((h_{a}-t_{s})(1 - \epsilon) + t_{s})} - \frac{1}{2}, \\
 \overline{P}^{\pi_{t_{s}}} & = & \frac{P(t_{s})t_{s}}{(h_{a} - t_{s})(1 - \epsilon) + t_{s}}
\end{eqnarray*}
The AAoI and average power for any $h_{a} < t_{s}$ is the same as that for $h_{a} = t_{s}$.
\label{prop:ATtradeoff}
\end{proposition}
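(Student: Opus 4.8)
The plan is to identify the renewal structure of the AT model under an FTT policy and apply the Markov Renewal Reward Theorem (MRRT), exactly as in the proof of Proposition~\ref{proposition_packet_loss}. The key observation is that for $h_{a} \geq t_{s}$ the decision epochs are still the packet generation instants, and the age at a decision epoch takes only two possible values: immediately after a \emph{successful} transmission the age at the end of transmission is $t_{s} < h_{a}$, so the transmitter must wait exactly $h_{a} - t_{s}$ slots before generating the next packet, giving age $h_{a}$ at that decision epoch; immediately after an \emph{erroneous} transmission the age at the end of transmission is at least $t_{s}$ plus whatever it was before, which is already $\geq h_{a}$, so a new packet is generated at once. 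Thus the inter-decision duration is $t_{s} + (h_{a} - t_{s}) = h_{a}$ with probability $1-\epsilon$ and $t_{s}$ with probability $\epsilon$, and the embedded chain on the age-at-decision-epoch is a simple two-state (in fact i.i.d.-like) process driven by the Bernoulli$(\epsilon)$ error sequence.

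First I would set up the embedded Markov chain on $A_{m}$ and compute its stationary distribution; because the next age depends only on whether the current transmission succeeds, the sequence of ``reset'' events is i.i.d., and $\mathbb{E}[$inter-decision time$] = (1-\epsilon)h_{a} + \epsilon t_{s} = (h_{a} - t_{s})(1-\epsilon) + t_{s}$. The average power is then immediate from MRRT: $\overline{P}^{\pi_{t_{s}}} = \mathbb{E}[P(t_{s})t_{s}] / \mathbb{E}[\text{cycle length}] = P(t_{s})t_{s} / ((h_{a}-t_{s})(1-\epsilon) + t_{s})$, since exactly $t_{s}$ slots of power $P(t_{s})$ are spent per cycle. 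Next I would compute the expected cumulative age over one cycle: the age rises linearly, and one must account for the age trajectory both over the $t_{s}$ transmission slots and over the idle waiting slots when the transmission succeeds. Summing the arithmetic-progression contributions and carefully tracking the age level at the start of each segment (which is either $t_{s}$ after a success or a random larger value after a run of failures) gives the numerator; dividing by $\mathbb{E}[\text{cycle length}]$ and subtracting $\tfrac12$ (the standard continuous-time-vs-slotted correction, as in Proposition~\ref{proposition_packet_loss}) yields the claimed AAoI expression. Finally, the last sentence — that $h_{a} < t_{s}$ behaves like $h_{a} = t_{s}$ — follows because in that regime the age after \emph{any} transmission is already $\geq t_{s} > h_{a}$... wait, $t_{s} \geq h_{a}$, so the age at the end of every transmission is $\geq t_{s} \geq h_{a}$ and a new packet is generated immediately with zero wait; substituting $h_{a} = t_{s}$ into the formulas recovers the zero-wait case, so the AT model degenerates to that.

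The main obstacle will be the bookkeeping in the cumulative-age computation over a renewal cycle when the previous cycle ended in error: unlike the clean reset-to-$t_{s}$ case, after a failure the age carried into the next transmission is itself random (it is $t_{s}$ plus the residual age from before), so one must either condition on the geometric number of consecutive failures since the last success and sum a geometric series of quadratic-in-length age contributions, or — more cleanly — choose the renewal epochs to be only the \emph{successful} transmission completions rather than all decision epochs, so that each renewal cycle starts from the deterministic age $t_{s}$. With the latter choice the cycle consists of a Geometric$(1-\epsilon)$ number of sub-cycles, and the per-cycle age integral and length both become tractable sums; the numerator then expands into the three grouped terms $(h_{a}-t_{s})^{2}(1-\epsilon)^{2}$, $t_{s}^{2}(1+\epsilon)$, and $2(h_{a}-t_{s})t_{s}(1-\epsilon)$ seen in the statement, which is a strong consistency check. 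I would defer the detailed algebra to an appendix, as the paper does for its analogous propositions.
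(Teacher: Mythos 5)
Your proposal is correct and takes essentially the same route as the paper: renewal epochs at successful receptions, a cycle of length $(h_{a}-t_{s}) + t_{s}R$ with $R\sim$ Geometric$(1-\epsilon)$, cumulative age $t_{s}L + L(L-1)/2$ over a cycle of length $L$, and RRT/MRRT to get both averages (your decision-epoch accounting for the power, with one $t_{s}$-slot burst of power $P(t_{s})$ per expected interval $(h_{a}-t_{s})(1-\epsilon)+t_{s}$, is equivalent to the paper's per-cycle energy $P(t_{s})t_{s}\Exp R$). The "main obstacle" you flag is exactly resolved by the renewal choice you then make, and your handling of the $h_{a}<t_{s}$ degenerate case matches the paper's.
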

The proof is discussed in Appendix \ref{appendix:proofATtradeoff}.

\subsection{Results and discussion}
We compare the AAoI-power tradeoff for FTT policies for the above packet generation models using the analytical 
characterizations from Propositions \ref{proposition_packet_loss}, \ref{prop:Pmodel_tradeoff}, and \ref{prop:ATtradeoff} 
in Figure \ref{fig:tradeoff_packetgen}.
We note that the parameter $\lambda$ is $0.01$ in Figure \ref{fig:tradeoff_packetgen} for NP and P models.
We also plot the Pareto achievable tradeoff for P and NP models (denoted as P-OPT and NP-OPT respectively) obtained by 
minimizing a linear combination of $\overline{A}^{\pi_{t_{s}}}$ and $\overline{P}^{\pi_{t_{s}}}$ over $t_{s}$ and 
$\lambda$.
The minimization is done using Differential Evolution (DE) (which leads to a local minima).
Other parameters are chosen as in Section \ref{low_reliability_results}.

We observe that the AT model achieves the best tradeoff while the P model for a fixed $\lambda$ has the worst 
performance.
We observe from Figure \ref{fig:tradeoff_packetgen_highpower} that NP-OPT and AT has similar tradeoff performance for 
higher average power values.

For lower values of $\lambda$ we expect that the tradeoff performance of P model would be similar to that of NP model 
since the pre-emptions would be rare.
From Propositions \ref{proposition_packet_loss} and \ref{prop:Pmodel_tradeoff} we observe that as $\lambda 
\downarrow 0$, for both NP and P models, $\overline{P}^{\pi_{t_{s}}}$ is $\mathcal{O}(\lambda)$ and 
$\overline{A}^{\pi_{t_{s}}}$ grows as $\frac{1}{\lambda}$ with a coefficient of $\frac{1}{1 - \epsilon}$.
For the AT model, $\overline{P}^{\pi_{t_{s}}} \downarrow 0$ as $h_{a} \uparrow \infty$ in which case 
$\overline{A}^{\pi_{t_{s}}}$ is $\overline{O}(h_{a})$ with the same coefficient $1/(1 - \epsilon)$.

For higher values of $\lambda$ we observe (not reported here) that the P model has $\overline{A}^{\pi_{t_{s}}}$ which 
is magnitudes higher than that of AT and NP models since transmissions are often pre-empted.
The AT model is observed to achieve a better tradeoff performance even in this case.
Further analysis of the tradeoff under the AT model is part of future work. 
\begin{figure}
    \centering
    \begin{subfigure}{0.49\linewidth}
    \centering
    \includegraphics[width = 1\linewidth]{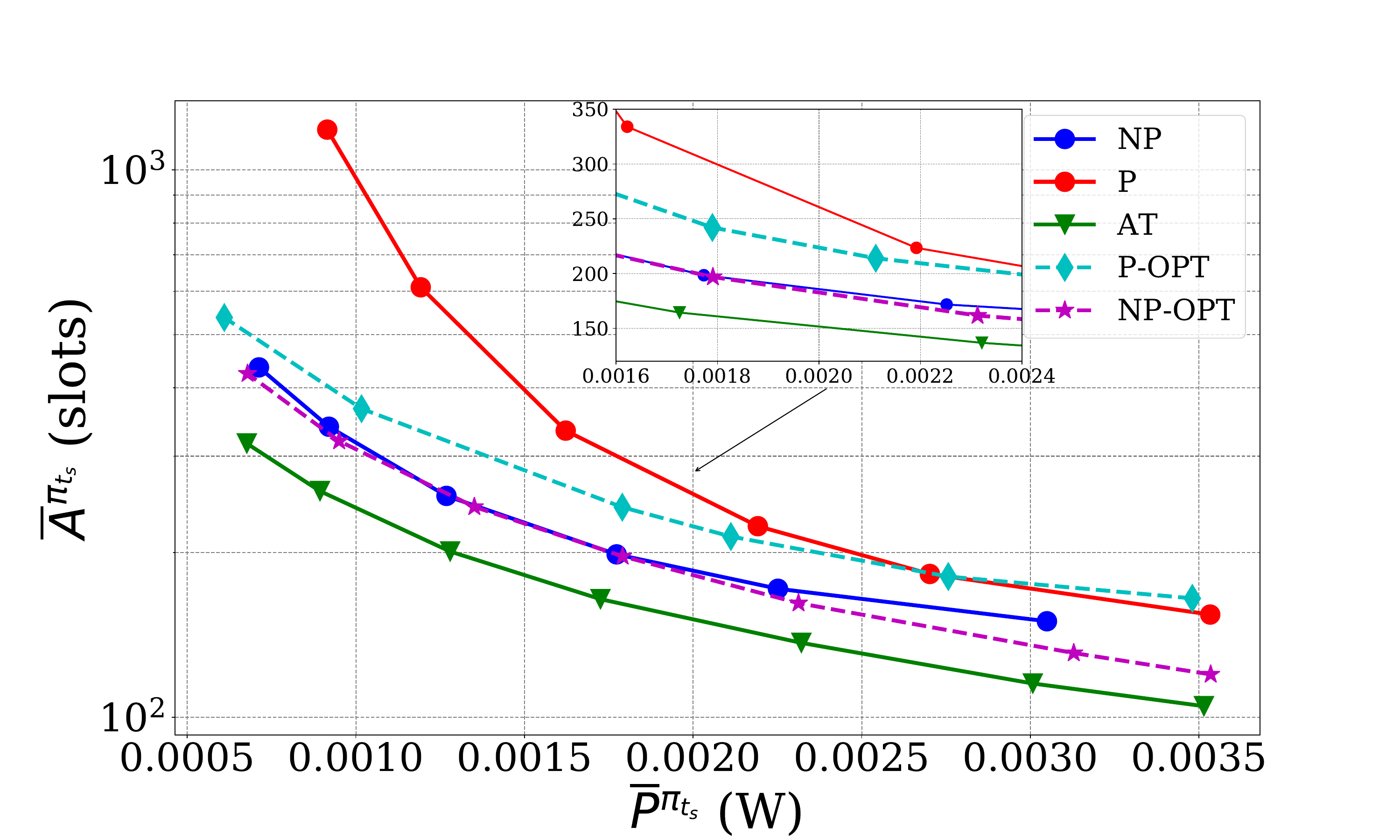}
    \caption{$\epsilon = 0.01$}
    \label{fig:tradeoff_packetgen_eps001}
    \end{subfigure}
    \begin{subfigure}{0.49\linewidth}
    \includegraphics[width = 1\linewidth]{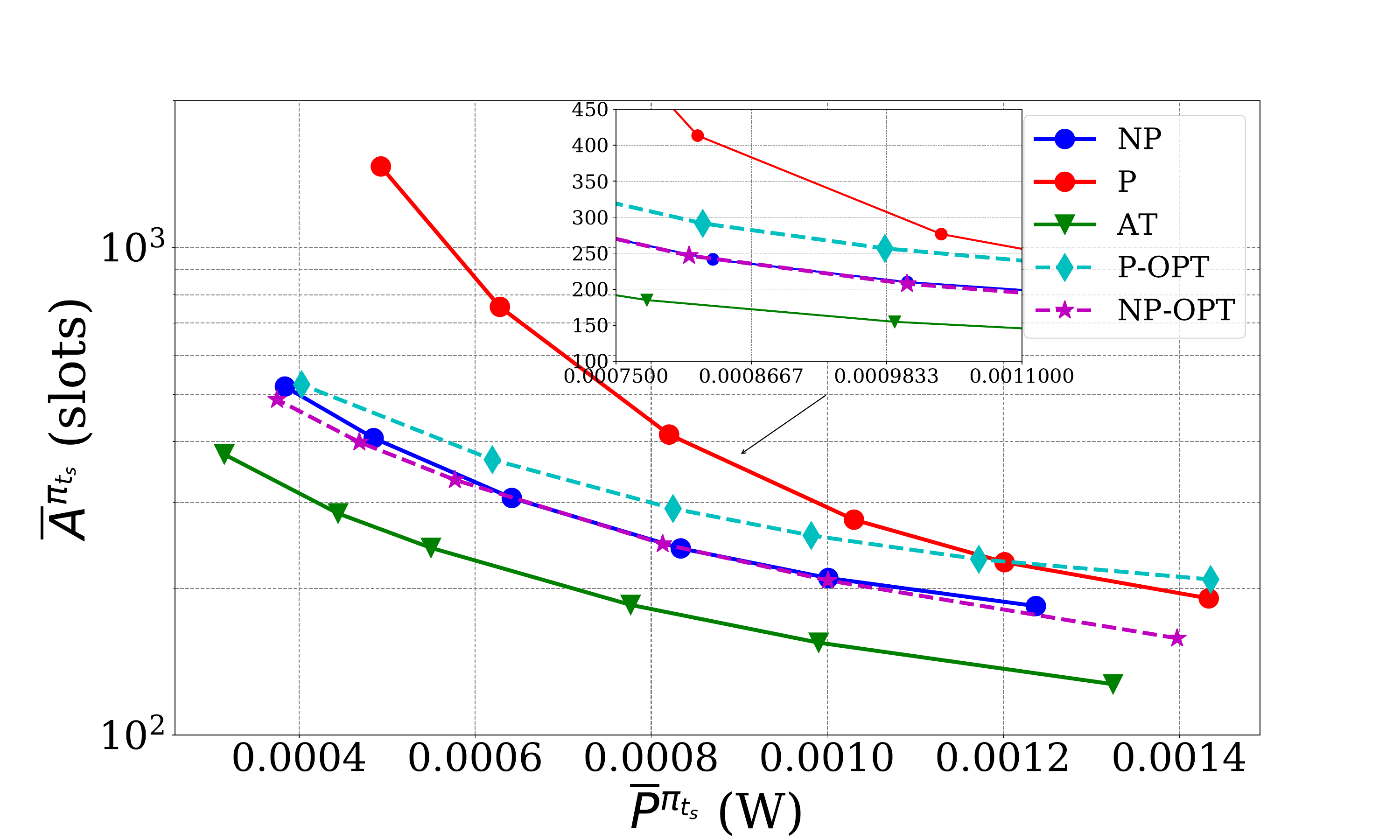}
    \caption{$\epsilon = 0.2$}
    \label{fig:tradeoff_packetgen_eps02}
    \end{subfigure}
    \caption{Illustration of the tradeoff between $\overline{A}^{\pi_{t_{s}}}$ and 
$\overline{P}^{\pi_{t_{s}}}$ for FTT policies with the packet generation models: NP, P, and AT. The parameter $\lambda$ 
for NP and P is chosen to be $0.01$. The parameter $h_{a}$ and $t_{s}$ for AT has been chosen to obtain the Pareto 
tradeoff. NP-OPT and P-OPT are the Pareto tradeoffs for NP and P respectively obtained by using Differential Evolution 
to choose $\lambda$ and $t_{s}$.}
    \label{fig:tradeoff_packetgen}
\end{figure}

\begin{figure}
    \centering
    \includegraphics[width = 0.5\linewidth]{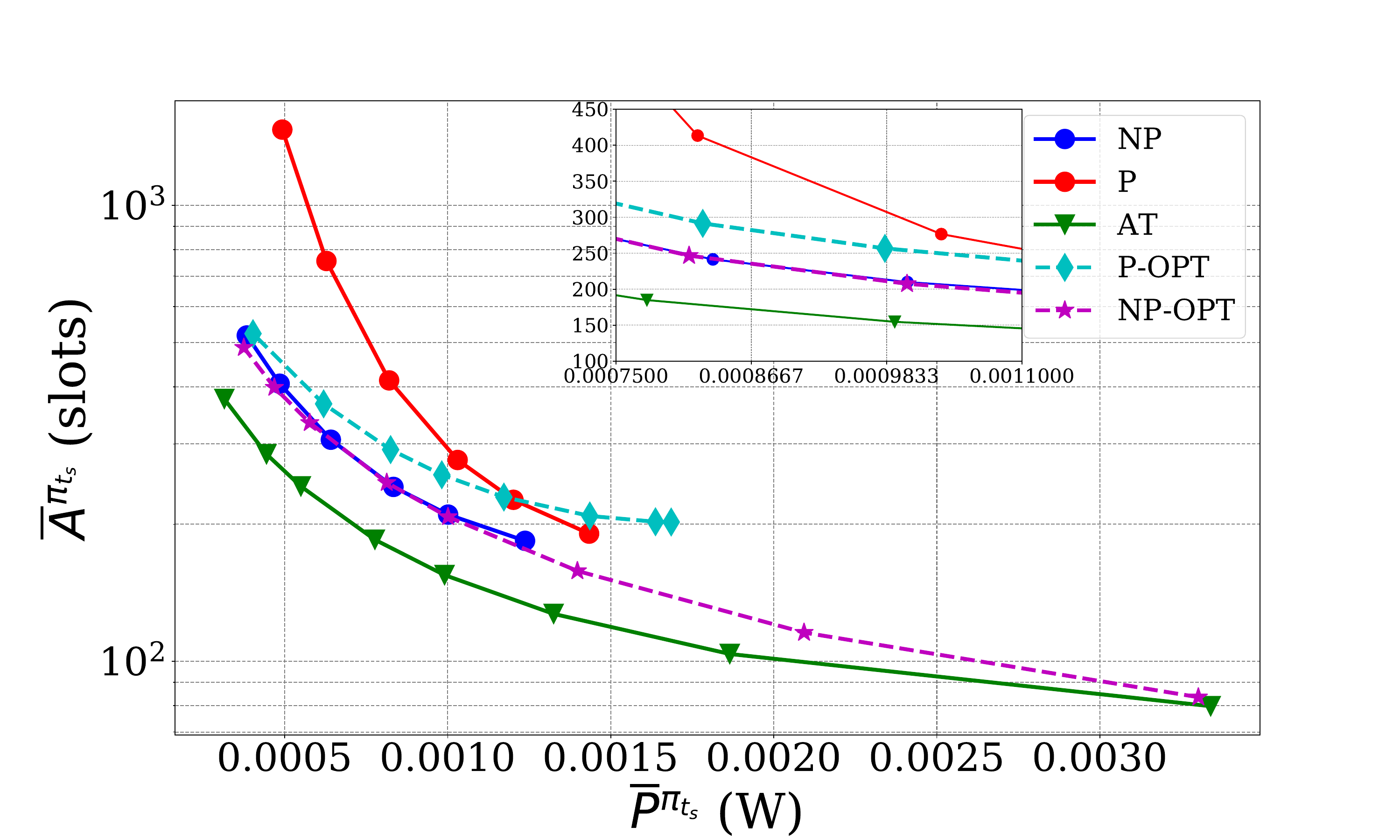}
    \caption{Illustration of the tradeoff between $\overline{A}^{\pi_{t_{s}}}$ and 
$\overline{P}^{\pi_{t_{s}}}$ for FTT policies with the packet generation models: NP, P, and AT including larger values 
for power. The parameter $\lambda$ for NP and P is chosen to be $0.01$. The parameter $h_{a}$ and $t_{s}$ for AT has 
been chosen to obtain the Pareto tradeoff. NP-OPT and P-OPT are the Pareto tradeoffs for NP and P respectively obtained 
by using Differential Evolution to choose $\lambda$ and $t_{s}$.}
    \label{fig:tradeoff_packetgen_highpower}.
\end{figure}

\section{Conclusions and Future Work}
In this paper, we considered the tradeoff of AAoI and transmit power for a point-to-point link where the scheduler at the transmitter has the capability to dynamically adapt the block-length.
We showed that in the regime of high reliability ($\epsilon \approx 0$) and high $\lambda$, non-adaptive fixed transmission time/duration (FTT) policies are close to optimal.
For a model where all transmissions are received without error and $\lambda = 1$, FTT policies were proved to be optimal.
Furthermore, FTT policies (with time-sharing) or state-independent randomized policies were shown to be order optimal for highly reliable systems.
A characterization of the tradeoff can be obtained using an analytical upper bound (from the AAoI-power characterization for FTT policies) as well as numerical, analytical, and  asymptotic lower bounds.
The asymptotic lower bound is used to obtain the above order-optimality result.
Such bounds would enable performance evaluation of other state-dependent heuristic policies which could achieve a better tradeoff performance compared with FTT.
For example, for low packet generation rates, threshold policies were observed to have a better tradeoff performance.
An approximate analytical characterization of the threshold policy has also been presented for high reliability.
We also note that the analytical characterizations of the tradeoff for FTT and threshold policies help to guide the choice of policy parameters.
We considered a wireless point-to-point link with block fading, where the effect of channel coherence times on the achieved tradeoff for FTT policies was studied.
Shorter channel coherence times which enable channel diversity were observed to have a better tradeoff.
The AoI metric can also be optimized by scheduling packet transmissions. 
To understand the effect of packet transmission scheduling on the tradeoff, we analytically characterized the tradeoff performance for three packet generation models with fixed transmission times and observed that a state dependent packet generation scheme (age-threshold scheme) had the best tradeoff.
Understanding the tradeoff for a joint packet-generation and packet-transmission scheduling policy (with block-length adaptation) is planned for future work.
We also note that there might be regimes where state dependent policies such as threshold policies have better performance compared to FTT policies.
identifying such regimes is part of future work.
In this paper, we had considered a point-to-point link, however extending the tradeoff analysis to a general radio resource block allocation policy has scope for future work.

\appendices
\section{Proof of Proposition~\ref{proposition_packet_loss}}
\label{proof_proposition_packet_loss}
    We obtain the average AoI and average power using the renewal reward theorem (RRT)\cite{kumar2008wireless}. For applying RRT, we first identify a renewal process in the evolution of $A[t]$ under an FTT policy with parameter $t_{s}$. We define a renewal epoch as the slot in which the age $A[t]$ drops due to a packet's reception(refer Figure \ref{fig:packet_loss}).
    To be precise, this is the start of the next slot after a packet finishes transmission.
    This is motivated by the fact that the age drops to $t_{s}$ at every successful reception as we consider FTT policy.
    Let $X$ be the random variable denoting the number of transmission failures in one renewal cycle. Then $X$ is geometrically distributed with parameter $(1-\varepsilon)$. 
    Therefore, $P_{X}(x) = \varepsilon^{x}(1-\varepsilon)$.
    \begin{figure}[!htb]
        \centering
        \includegraphics[width = 0.6\linewidth]{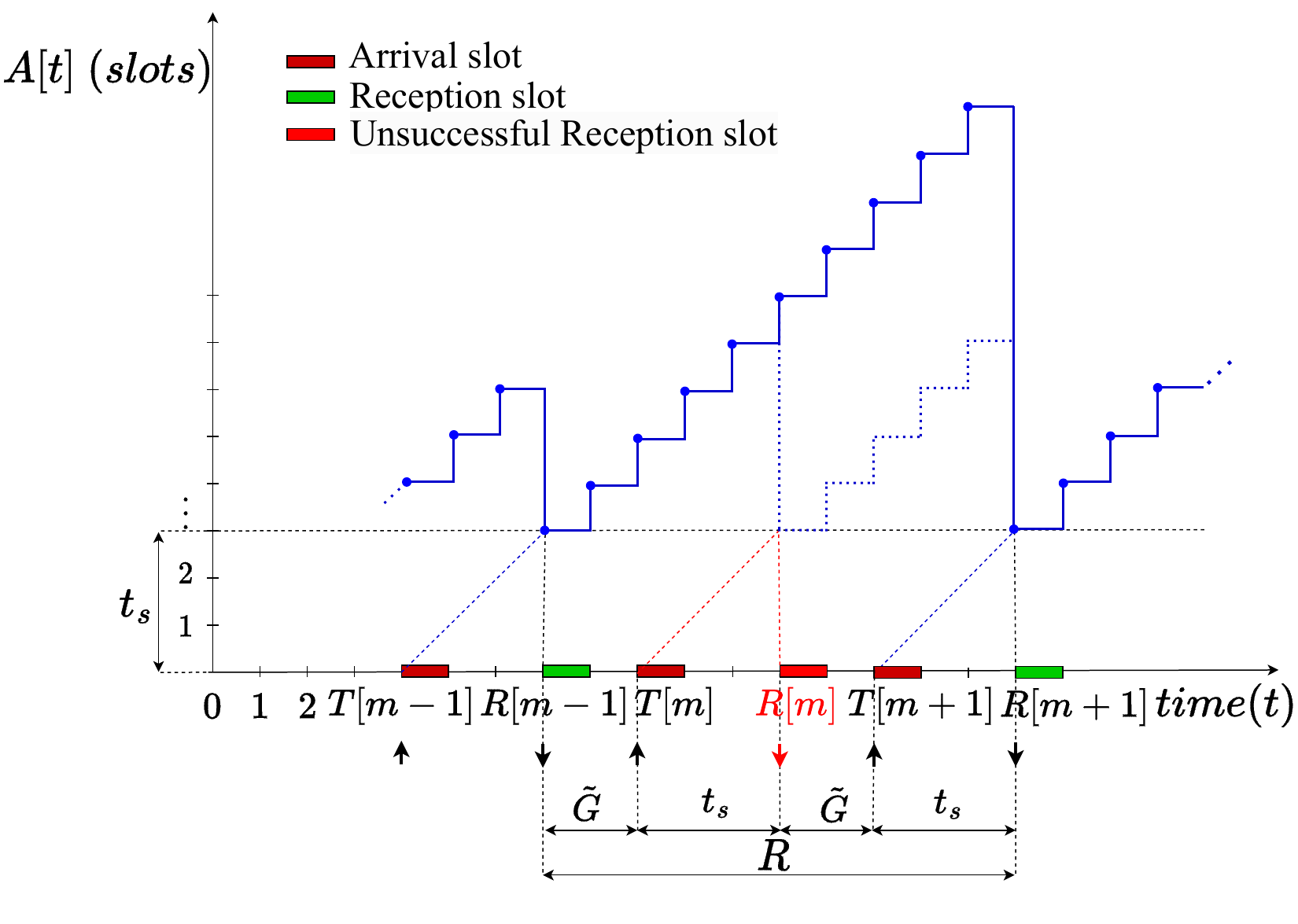}
        \caption{Illustration of the evolution of AoI $A[t]$ under FTT policy with parameter $t_{s}$. The AoI $A[t]$ drops when a packet is successfully received. The effect of an error leading to an unsuccessful reception of a packet is also shown.}
        \label{fig:packet_loss}
    \end{figure}
    In Figure ~\ref{fig:packet_loss}, we illustrate the evolution of $A[t]$ when the FTT with parameter $t_{s}$ policy is adopted in the non-preemptive scheme for the model with packet losses. The $m$\textsuperscript{th} packet is lost, and the transmitter initiates sampling of a new packet at time $R[m]$, which is generated at $T[m+1]$ after a random delay $\tilde{G}\sim\text{geometric}(\lambda)$.
    We have that
    \begin{equation*}
        \Exp{[X]} = \frac{\varepsilon}{1-\varepsilon}, \quad \operatorname{Var}(X) = \frac{\varepsilon}{(1-\varepsilon)^2}
    \end{equation*}
    Referring Figure ~\ref{fig:packet_loss},  the renewal cycle is
    \begin{equation*}
        R = \sum_{i = 1}^{X+1}t_{s}+\tilde{G} = (X+1)(t_{s}+\tilde{G})
    \end{equation*}
    Therefore,
    \begin{equation*}
        \Exp{[R]} = \frac{1}{1-\varepsilon}\Brap{\frac{1-\lambda}{\lambda} + t_{s}}.
    \end{equation*}
    We compute $\Exp{[R^2]}$ as $\operatorname{Var}(R) + (\mathbb{E}[R])^{2}$. We have that
    \begin{align*}
    \operatorname{Var}(R) =&\operatorname{Var}\Brap{\sum_{i = 1}^{X+1}t_{s}+\tilde{G}} \\
    =&~\Exp{\Bras{\operatorname{Var}\Brap{\sum_{i = 1}^{X+1}t_{s}+\tilde{G}\mid X=x}}} \\ &+\operatorname{Var}\Brap{\Exp\Bras{\sum_{i = 1}^{X+1}t_{s}+\tilde{G} \mid X=x}} \\
    =& \frac{1-\lambda}{(1-\varepsilon)\lambda^{2}}+\frac{\varepsilon}{(1-\varepsilon)^{2}}\Bras{\frac{1-\lambda}{\lambda}+t_{s}}^{2} 
    \end{align*}
    Finally, we have that
    \begin{align*}
    \Exp R^{2} =~& \frac{1-\lambda}{(1-\varepsilon)\lambda^{2}}+\frac{1+\varepsilon}{(1-\varepsilon)^{2}}\Bras{\frac{1-\lambda}{\lambda}+t_{s}}^2.
    \end{align*}
    Similarly, we derive average power using RRT. The power consumed for transmitting is $P(t_{s})$. The expected duration over which this is consumed is $\frac{t_{s}}{1-\varepsilon}$.
    Therefore, by applying RRT, we have that the average power is 
    \begin{equation*}
    \overline{P}^{\pi_{t_{s}}}_{L} = \frac{P(t_{s})t_{s}\lambda}{1-\lambda + \lambda t_{s}}.
    \end{equation*}

\section{Proof of Proposition \ref{prop:lowpower_lb}}
\label{app:proof:proposition_orderoptimality}

\begin{proof}
We note that $P_{min}$ is obtained when the maximum transmission duration $\tau_{max}$ is used for all the 
transmissions. i.e., under the FTT policy with $t_s = \tau_{max}$. 
Therefore the minimum average power $P_{min} = \frac{P(\tau_{max})\tau_{max}} {\tau_{max}+\Exp{\tilde{G}}}$.
Suppose the average power constraint $p_{c}$ is $\delta$ more than the minimum achievable average power $P_{min}$.
In this proof, we consider a specific $k$ so that the subscript $k$ is dropped in the notation for brevity (e.g., 
$\delta_{k}$ is denoted as $\delta$).
For any stationary policy $\pi$ such that $\overline{P}^{\pi} \leq p_{c}$ we have (as in the proof of Proposition 
\ref{prop:numerical_lb})
\begin{align*}
\frac{\sum_{\tau}\sum_{\tau^{\prime}}p(\tau,\tau^{\prime})P(\tau)\tau}{\sum_{\tau}\sum_{\tau^{\prime}}p(\tau,\tau^{
\prime})\tau + \Exp{\tilde{G}}} \le \frac{P(\tau_{max})\tau_{max}}{\tau_{max} + \Exp{\tilde{G}}} + \delta.
\end{align*}
That is,
\begin{align*}\sum_{\tau}\sum_{\tau^{\prime}}p(\tau,\tau^{\prime})\left[P(\tau)\tau - P(\tau_{max})\tau_{max}\right] 
\le 
\delta(\tau_{max} + \Exp{\tilde{G}}).
\end{align*}
 As $P(\tau)\tau$ is monotonically decreasing, we obtain the following upper bound on the joint distribution $p(\tau, 
\tau^{\prime})$, $\forall \tau \ne \tau_{max}$.
 \begin{equation}p(\tau, \tau^{\prime}) \le \frac{\delta(\tau_{max}+\Exp{\tilde{G}})}{P(\tau)\tau -
P(\tau_{max})\tau_{max}} \label{eq:LB_joint}\end{equation}
From the constraint on power, we also have that
\begin{align*}
\frac{\sum_{\tau}\pi_{\tau}P(\tau)\tau}{\tau_{max} + \Exp{\tilde{G}}} &\le \frac{P(\tau_{max})\tau_{max}}{\tau_{max} + 
\Exp{\tilde{G}}} + \delta.
\end{align*}
This implies that the stationary distribution $\pi_{\tau}$ of using a transmission duration $\tau$ satisfies
\begin{align*}
\sum_{\tau}\pi_{\tau}[P(\tau)\tau - P(\tau_{max})\tau_{max}] &\le \delta(\tau_{max} + \Exp{\tilde{G}}).
\end{align*}
Therefore $\forall \tau \ne \tau_{max}$, we have the following upper bound on $\pi_\tau$.
\begin{equation}\pi_{\tau} \le \LBterm{\tau} \label{eq:LB_marginal}\end{equation}

 From the proof of Proposition \ref{prop:numerical_lb}, for an error-free system we have that AAoI is:
\begin{equation}\frac{\sum\limits_{\tau}\sum\limits_{\tau^{\prime}}p(\tau,\tau^{\prime})\Bras{\tau\tau^{\prime} + 
\tau'\Exp{\tilde{G}} + \frac{\tau^{\prime}(\tau^{\prime}-1)}{2} + \tau^{\prime}\Exp{\tilde{G}} + 
\Exp{\Bras{\frac{\tilde{G}(\tilde{G}-1)}{2}}}}}{\tau_{max} + \Exp{\tilde{G}}}.
 \label{eq:objective}
 \end{equation}
 For an error free system, we have that the age at an arrival epoch is $\tau + \tilde{G}$, where $\tau$ represents the last transmission duration duration.
Now we consider each term (inside the bracket multiplied by $p(\tau, \tau')$) of the numerator and lower bound each 
using the inequalities \eqref{eq:LB_joint} and
\eqref{eq:LB_marginal}. 
The expansions of each term of \eqref{eq:objective} after bounding are as follows.\\
\noindent
\textit{Term-1}: $\sum_{\tau}\sum_{\tau^{\prime}}p(\tau,\tau^{\prime})\tau\tau^{\prime}$ is bounded below by
\begin{align*}
&\sum_{\tau}\sum_{\tau^{\prime}\ne \tau_{max}} \LBterm{\tau} \tau\tau^{\prime} + \\ 
&\tau_{max}\Bras{\tau_{max}-\sum_{\tau\ne \tau_{max}}\sum_{\tau^{\prime}=\tau_{max}}(\tau_{max}-\tau)\LBterm{\tau}}
\end{align*}
\textit{Term-2}:
$\sum_{\tau}\sum_{\tau^{\prime}}p(\tau,\tau^{\prime})\tau'\Exp{\tilde{G}}$ which is 
$\Exp{\tilde{G}}\sum_{\tau}\pi_{\tau}\tau$ is bounded below by
$$
\Exp{\tilde{G}}\Bras{\tau_{max}-\sum_{\tau\ne\tau_{max}}(\tau_{max}-\tau)\LBterm{\tau}}
$$
\textit{Term-3}:
$\sum_{\tau}\sum_{\tau^{\prime}}p(\tau,\tau^{\prime})\frac{{\tau^{\prime}}^{2}}{2}$ which is 
$\frac{1}{2}\sum_{\tau}\pi_{\tau}\tau^{2}$  is bounded below by
$$\frac{1}{2}\Bras{\tau_{max}^{2}-\sum_{\tau^{\prime}\ne\tau_{max}}(\tau_{max}^{2} - {\tau^{\prime}}^{2}) 
\LBterm{\tau^{\prime}}}$$
\textit{Term-4}:
$-\sum_{\tau}\sum_{\tau^{\prime}}p(\tau,\tau^{\prime})\frac{\tau^{\prime}}{2}$ which is 
$-\frac{1}{2}\sum_{\tau}\pi_{\tau}\tau$ is bounded below by
$$-\frac{1}{2}\Bras{\tau_{max}-\sum_{\tau^{\prime}\ne\tau_{max}}(\tau_{max}-\tau^{\prime})\LBterm{\tau^{\prime}}}$$
\textit{Term-5}:
$\sum_{\tau}\sum_{\tau^{\prime}}p(\tau,\tau^{\prime})\tau^{\prime}\Exp{\tilde{G}}$ (similar to Term-2) is bounded below 
by
$$
\Exp{\tilde{G}}\Bras{\tau_{max}-\sum_{\tau^{\prime}\ne\tau_{max}}(\tau_{max}-\tau^{\prime})\LBterm{\tau^{\prime}}}
$$
Summing together all the above lower bounds and dividing by $\tau_{max} + \Exp{\tilde{G}}$, we obtain a lower bound on 
$\overline{A}^{\pi}$.

We also observe that the sum of the first term in each of the lower bound expressions divided by $\tau_{max} + 
\Exp{\tilde{G}}$ is $\overline{A}^{\pi_{\tau_{max}}}$.
The other terms are all $\mathcal{O}(\delta)$.
Therefore, we obtain that $\overline{A}^{\pi_{\tau_{max}}} - \overline{A}^{\pi}$ is $\mathcal{O}(\delta)$.
\end{proof}

\section{Bounds on stationary probability for a high-power regime}
\label{sec:app:bounds_highpower}
Consider any policy with average power more than $P_{max} - \delta$.
If $\pi_{\tau}$ represents the stationary probability of using transmission duration $\tau$, then
\begin{eqnarray*}
  \frac{P(\tau_{min})\tau_{min}}{\tau_{min} + \frac{1- \lambda}{\lambda}} - \frac{\sum_{\tau} \pi_{\tau} 
P(\tau)\tau}{\sum_{\tau} \pi_{\tau} \tau + \frac{1 - \lambda}{\lambda}} \leq \delta.
 \end{eqnarray*}
 That is,
 \begin{eqnarray*}
  \frac{\sum_{\tau} \pi_{\tau} P(\tau_{min})\tau_{min}}{\tau_{min} + \frac{1- \lambda}{\lambda}} - \frac{\sum_{\tau} 
\pi_{\tau} 
P(\tau)\tau}{\sum_{\tau} \pi_{\tau} \tau_{min} + \frac{1 - \lambda}{\lambda}} \leq \delta.
 \end{eqnarray*}
 Which implies that for $\tau \neq \tau_{min}$
 \begin{eqnarray*}
  \pi_{\tau} \leq \frac{\delta \brap{\tau_{min} + \frac{1- \lambda}{\lambda}}}{P(\tau_{min})\tau_{min} - P(\tau)\tau}.
 \end{eqnarray*}
 Similar to the proof of Proposition \ref{prop:lowpower_lb} such bounds can also be derived for the joint probability $p(\tau, \tau')$, which leads to a similar order-optimality result in the high-power regime where $p_{c} \uparrow P_{max}$.

\section{AAoI-Power Tradeoff for error-free system} 
\label{sec:errorfree_system}
In this section, we consider an error-free variation (denoted as ERRFREE) of the system model discussed in Section 
\ref{section:System_Model}.
The ERRFREE assumes that every transmission is error free independent of the choice of the transmission 
time $\tau_{m}$.
However, the power-transmission duration relationship (i.e, the $\tau$-$P(\tau)$ relationship) is assumed to be the same as 
in the original system model.
In this section, the original system is denoted as WITHERR.

We are motivated to study ERRFREE because of its analytical tractability.
The analysis of ERRFREE enables us to obtain useful analytical approximations for the tradeoff performance of 
policies such as threshold policies for WITHERR systems.
These approximations are used in the design (choice of parameters) of such policies.

First of all, we note that a SMDP approach (similar to that in Section~\ref{section:SMDP_packet_loss}) can be used to 
numerically characterize the tradeoff for ERRFREE.
The state and action spaces for ERRFREE are the same as in Section~\ref{section:SMDP_packet_loss}.
The decision times of the SMDP coincide with the generation times of the packets. 
The expected time between two consecutive decision epochs is:
\begin{equation*}
    \tilde{\tau}(A_{m}, \tau(A_{m}) = \tau) = \tau + \Exp{\tilde{G}}.
\end{equation*}
Since transmissions are error-free, $A_{m+1} = \tau(A_{m}) + 
\tilde{G}$ (refer Figure \ref{fig:age_transition_NP}).
\begin{figure}[!htb]
\centering
\includegraphics[width = 0.6\linewidth]{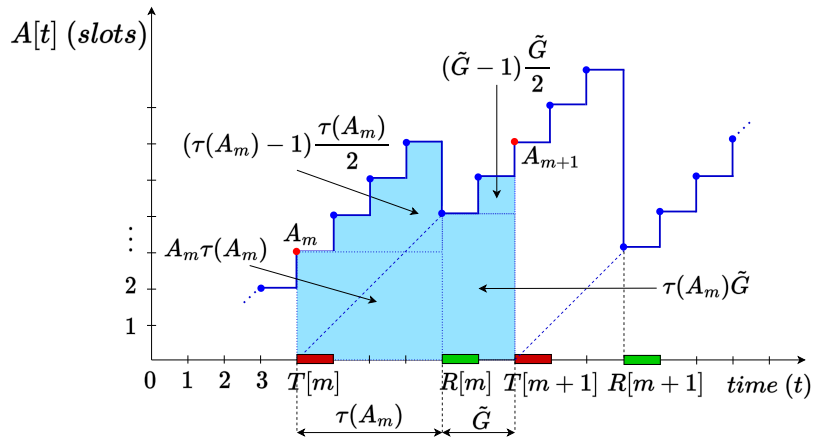}
\caption{Illustration of the evolution of AoI $A[t]$. The transition from $A_{m}$ to $A_{m+1}$ along with the terms contributing to the cumulative age cost is also shown.}
\label{fig:age_transition_NP}
\end{figure}
Therefore the transition probability 
\begin{equation*}
    \operatorname{Pr}(A_{m+1} = a^{\prime} \mid \tau(A_{m}) = \tau) = \begin{cases}
    \lambda (1 - \lambda)^{a^{\prime}-\tau} \text{ for } a^{\prime} \ge \tau, \\
    0 \quad \text{otherwise}.
    \end{cases} 
\end{equation*}
For ERRFREE, we define the single-stage cost $c(a, \tau)$ as
\begin{align*}
      c(a, \tau) & = a\tau + (\tau-1)\frac{\tau}{2} + \tau \frac{1-\lambda}{\lambda} + 
\Brap{\frac{1-\lambda}{\lambda}}^{2} + \beta P(\tau) \tau
\end{align*}
We note that a numerical procedure such as value iteration \cite{TijmsH.C2003Afci} can be used to solve a truncated 
version of the SMDP (where the state or age values are limited to a maximum value $a_{max}$).

For ERRFREE, the AAoI and average power for FTT policies can be obtained from Proposition \ref{proposition_packet_loss} 
with $\epsilon = 0$.
We also have a characterization of the AAoI and average power for threshold policies for ERRFREE which is discussed 
next.

Consider the evolution of $A[t]$ for a threshold policy with parameters $h$, $\tau_{a}$, and $\tau_{b}$ in ERRFREE.
We note that at any decision epoch, the transmission duration chosen is either $\tau_{a}$ or $\tau_{b}$.
Then, at the end of that transmission, the age $A[t]$ is therefore either $\tau_{a}$ or $\tau_{b}$ respectively (since 
transmissions are error free).
Consider the slots which just after the end of a transmission.
The age at these slots are denoted as $A_{e,m}$, where $m \in \sZ$ indexes the transmission end-time slots.
Then, $A_{e,m} \in \brac{\tau_{a}, \tau_{b}}, \forall m$.
We note that the evolution from $A_{e,m}$ to $A_{e,m+1}$ is independent of the past evolution of the age given 
$A_{e,m}$.
Given $A_{e,m}$ the next transmission duration is chosen based on the age value $A_{m+1}$ at the next decision epoch, which 
is $A_{e,m} + \tilde{G}$, where $\tilde{G} \in \brac{0,1,2,\dots}$ is Geometric$(\lambda)$ and sampled independently of 
anything else.
Thus, $A_{e,m}$ is an EMC embedded in the evolution of $A[t]$.

The transition probability of the EMC ($A_{e,m}$) depends on the relationship between $h$ and the transmission 
times $\tau_{a}$ and $\tau_{b}$.
We use the notation indicated in the transition diagram in Figure ~\ref{fig:MRP_STD}.
The stationary probabilities of the states $\tau_{a}$ and $\tau_{b}$ are $\pi_{\tau_{a}} = 
\frac{\alpha}{\alpha + \beta}$ and $\pi_{\tau_{b}} = \frac{\beta}{\alpha + \beta}$ respectively.

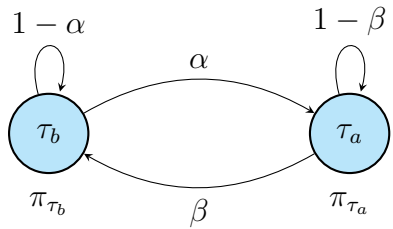
\begin{figure}[!h]
\centering
\begin{tikzpicture}[node distance = 4cm]
\tikzset{->, >=stealth, every state/.style={thick, fill=cyan!25}}
\node[state] (tau_b) {$\tau_{b}$};
\node[state, right of=tau_b] (tau_a) {$\tau_{a}$};
\draw (tau_b) edge[loop above] node{$1-\alpha$} (tau_a)
(tau_a) edge[loop above] node{$1-\beta$} (tau_b)
(tau_b) edge[bend left, above] node{$\alpha$} (tau_a)
(tau_a) edge[bend left, below] node{$\beta$} (tau_b);
\node[below=0.1cm of tau_b] (tau_b) {$\pi_{\tau_{b}}$};
\node[below=0.1cm of tau_a] (tau_a) {$\pi_{\tau_{a}}$};
\end{tikzpicture}
\caption{Transition probability diagram of the two-state EMC $(A_{e,m})$ for the threshold policy.}
\label{fig:MRP_STD}
\end{figure}

\textit{Case 1: $h \ge \tau_{a}$}:
The transition probabilities are as follows:
\begin{align*}
\alpha &= \operatorname{Pr}(\tilde{G} \le h - \tau_{b}) = 1 - (1-\lambda)^{h-\tau_{b}+1}, \\
\beta  &= \operatorname{Pr}(\tilde{G} > h - \tau_{a}) = (1-\lambda)^{h-\tau_{a}+1}.
\end{align*}
The state transitions from $\tau_{b}$ to $\tau_{a}$ if the age at the next decision epoch has not increased enough to 
cross the threshold $h$, similarly the state transitions from $\tau_{a}$ to $\tau_{b}$ if the age at the next decision 
epoch crosses the threshold $h$.
The age at the next decision epoch is $\tilde{G}$ more than the current state; this is used to obtain the transition 
probabilities above. 

\textit{Case 2: $h < \tau_{b}$}: 
In this case, the age at a decision epoch would always be greater than $h$, since $A_{e,m} \geq \tau_{b}$.
Since the threshold policy chooses $\tau_{b}$ if the age is greater than $h$, $\alpha = 0$ and $\pi_{\tau_{b}} = 1$.

\textit{Case 3: $\tau_{b} \le h < \tau_{a}$}: 
Consider the state being at $\tau_{a}$, then since $h < \tau_{a}$, at the next decision epoch, we would choose 
$\tau_{b}$ with probability $1$. Hence, $\beta = 1$.
If the state is $\tau_{b}$, then the transition probability $\alpha$ is $1 - (1-\lambda)^{h-\tau_{b}+1}$ as in Case 1.

In each case, the stationary probabilities can be computed using $\alpha$ and $\beta$.
In the following proposition, we use MRRT for $A_{e,m}$ to characterize the AAoI and average power for a threshold 
policy.

\begin{proposition}
\label{prop:threshold}
For a threshold policy with parameters $h$, $\tau_{a}$, and $\tau_{b}$, the AAoI is
\begin{equation}\label{eq:aaoi_thresh}
    \bar{A}^{\pi_{h}} = \frac{\pi_{\tau_{b}}c^{a}_{\tau_{b}} + 
\pi_{\tau_{a}}c^{a}_{\tau_{a}}}{\pi_{\tau_{b}}T_{\tau_{b}} + \pi_{\tau_{a}}T_{\tau_{a}}},
\end{equation}
and the average transmit power is
\begin{equation}\label{eq:avg_p_thresh}
    \bar{P}^{\pi_{h}} = \frac{\pi_{\tau_{b}}c^{p}_{\tau_{b}} + 
\pi_{\tau_{a}}c^{p}_{\tau_{a}}}{\pi_{\tau_{b}}T_{\tau_{b}} + \pi_{\tau_{a}}T_{\tau_{a}}}.
\end{equation}
Here $\pi_{\tau_{a}}$ and $\pi_{\tau_{b}}$ are the stationary probabilities of the states of the EMC $(A_{e,m})$.
The terms $c^{a}_{\tau_{a}}$, $c^{a}_{\tau_{b}}$ are the expected cumulative age costs in a renewal cycle, conditioned 
on the EMC state being $\tau_{a}$ and $\tau_{b}$ respectively. 
Similarly, the terms $c^{p}_{\tau_{a}}$, $c^{p}_{\tau_{b}}$ are the expected cumulative power costs in a renewal cycle, 
conditioned on the EMC state being $\tau_{a}$ and $\tau_{b}$ respectively. 
The terms $T_{\tau_{a}}$ and $T_{\tau_{b}}$ are the renewal cycle durations, again conditioned 
on the EMC state being $\tau_{a}$ and $\tau_{b}$ respectively.
\end{proposition}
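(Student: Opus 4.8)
The plan is to apply the Markov renewal reward theorem (MRRT) to the semi-Markov process whose embedded Markov chain is the two-state chain $(A_{e,m})$ on $\brac{\tau_{a},\tau_{b}}$, in the same spirit as the proof of Proposition~\ref{proposition_packet_loss} but with a Markov (rather than i.i.d.) cycle structure. The first step is to verify the Markov renewal structure: at the slot just after transmission $m$ completes the age equals the just-used transmission duration, so $A_{e,m}\in\brac{\tau_{a},\tau_{b}}$; over the ensuing cycle the age increments by one per slot for $\tilde{G}_{m}\sim\text{Geometric}(\lambda)$ slots until the next decision epoch, the threshold rule selects $\tau' = \tau(A_{e,m}+\tilde{G}_{m})$, the age keeps incrementing for $\tau'$ more slots, and then drops to $\tau'$, which is the next EMC state. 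Hence both the age trajectory and the transmit power over the cycle are deterministic functions of $A_{e,m}$ and the independent $\tilde{G}_{m}$, so $(A_{e,m})$ together with the per-cycle rewards is a Markov renewal reward process. The transition probabilities are then exactly those in Figure~\ref{fig:MRP_STD} with $\alpha,\beta$ given by the three-case analysis preceding the proposition; the chain is irreducible when $0<\alpha$ and $0<\beta$, and the degenerate cases ($\alpha=0$ when $h<\tau_{b}$, or $\beta=1$ when $\tau_{b}\le h<\tau_{a}$) are covered by the stated values of $\pi_{\tau_{a}},\pi_{\tau_{b}}$.

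Next I would define the per-cycle cumulative rewards. Writing the age over cycle $m$ as $A_{e,m}, A_{e,m}+1,\dots, A_{e,m}+\tilde{G}_{m}+\tau'-1$, the cumulative age reward over the cycle is $(\tilde{G}_{m}+\tau')A_{e,m}+\frac{1}{2}(\tilde{G}_{m}+\tau'-1)(\tilde{G}_{m}+\tau')$, the cumulative power reward is $P(\tau')\tau'$, and the cycle length is $\tilde{G}_{m}+\tau'$. Conditioning on $A_{e,m}=\tau_{a}$ or $A_{e,m}=\tau_{b}$ and taking expectations over $\tilde{G}_{m}$ (with $\tau'$ determined from $\tilde{G}_{m}$ by the threshold rule) produces precisely the quantities $c^{a}_{\tau_{a}}, c^{a}_{\tau_{b}}$, $c^{p}_{\tau_{a}}, c^{p}_{\tau_{b}}$, and the cycle durations $T_{\tau_{a}}, T_{\tau_{b}}$ named in the statement; all are finite since $\Exp\tilde{G}<\infty$ and $\tau_{a},\tau_{b}$ are finite.

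Finally, MRRT (see \cite[Appendix D]{kumar2008wireless}, \cite{TijmsH.C2003Afci}) expresses the long-run time average of a per-slot cost as the ratio of the stationary-averaged per-cycle cumulative cost to the stationary-averaged cycle length, which gives $\overline{A}^{\pi_{h}}=\frac{\pi_{\tau_{b}}c^{a}_{\tau_{b}}+\pi_{\tau_{a}}c^{a}_{\tau_{a}}}{\pi_{\tau_{b}}T_{\tau_{b}}+\pi_{\tau_{a}}T_{\tau_{a}}}$ and likewise for $\overline{P}^{\pi_{h}}$, i.e.\ \eqref{eq:aaoi_thresh} and \eqref{eq:avg_p_thresh}. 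I expect the main obstacle to be not the MRRT application, which is routine once the Markov renewal structure is in place, but rather the explicit evaluation of the conditional expectations $c^{a}_{\tau_{a}}$, $c^{a}_{\tau_{b}}$, etc., in each of the three cases: since $\tau'=\tau(A_{e,m}+\tilde{G}_{m})$ is correlated with $\tilde{G}_{m}$, the cross term $\Exp[\tilde{G}_{m}\tau']$ does not factor, so one must split the expectation according to whether $\tilde{G}_{m}$ lies above or below $h-\tau_{a}$ (respectively $h-\tau_{b}$) and use the first and second moments of the appropriately truncated Geometric distribution, with some care about slot-indexing conventions to get the additive constants right. These computations are elementary but lengthy and would be carried out case by case.
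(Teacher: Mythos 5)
Your proposal is correct and follows essentially the same route as the paper's proof: both identify the two-state embedded Markov chain $(A_{e,m})$ on $\brac{\tau_{a},\tau_{b}}$, compute the conditional per-cycle cumulative age, power, and sojourn-time rewards by splitting on whether $\tilde{G}$ crosses the threshold, and then apply the Markov renewal reward theorem to form the stated ratios. The per-cycle age expression you write, $(\tilde{G}+\tau')A_{e,m}+\tfrac{1}{2}(\tilde{G}+\tau')(\tilde{G}+\tau'-1)$, matches the conditional costs $c^{a}_{\tau_{a}},c^{a}_{\tau_{b}}$ appearing in the paper's proof, so no gap remains.
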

\begin{proof}
Assuming the initial state of the MRP to be $\tau_{b}$, the expected cumulative age cost in an interval that begins in 
state $\tau_{b}$, $c^{a}_{\tau_{b}}$ is
\begin{align*}
     & \alpha  \Exp{\biggl[\tau_{b}(\tilde{G}+\tau_{a}) + \frac{(\tilde{G}+\tau_{a})(\tilde{G}+\tau_{a}-1)}{2}\bigg\vert 
\tilde{G} \le h - \tau_{b}\biggr]} + \\ & (1-\alpha) \Exp{\biggl[\tau_{b}(\tilde{G}+\tau_{b}) + 
\frac{(\tilde{G}+\tau_{b})(\tilde{G}+\tau_{b}-1)}{2}\bigg\vert \tilde{G} > h - \tau_{b}\biggr]}.
\end{align*}
The expected cumulative power cost in an interval that begins in state $\tau_{b}$ is
\begin{align*}
    c^{p}_{\tau_{b}} &= \alpha \tau_{a} P(\tau_{a}) + (1-\alpha) \tau_{b} P(\tau_{b})
\end{align*}
and the expected sojourn time until the state transition starting from $\tau_{b}$, $T_{\tau_{b}}$ is
\begin{align*}
    \alpha  \Exp{[\tilde{G}+\tau_{a} \mid \tilde{G} \le h - \tau_{b}]} + (1-\alpha) \Exp{[\tilde{G}+\tau_{b} \mid 
\tilde{G} > h - \tau_{b}]}.
\end{align*}
Similarly, assuming the initial state of the MRP to be $\tau_{a}$, the expected cumulative age cost in an interval that 
begins in state $\tau_{a}$, $c^{a}_{\tau_{a}}$ is 
\begin{align*}
     &\beta \Exp{\biggl[\tau_{a}(\tilde{G}+\tau_{b}) + \frac{(\tilde{G}+\tau_{b})(\tilde{G}+\tau_{b}-1)}{2}\bigg\vert 
\tilde{G} > h - \tau_{a}\biggr]} +\\ & (1 - \beta)  \Exp{\biggl[\tau_{a}(\tilde{G}+\tau_{a}) + 
\frac{(\tilde{G}+\tau_{a})(\tilde{G}+\tau_{a}-1)}{2}\bigg\vert \tilde{G} \le h - \tau_{a}\biggr]}.
\end{align*}
The expected cumulative power cost in an interval that begins in state $\tau_{a}$ is
\begin{align*}
    c^{p}_{\tau_{a}} &= \beta \tau_{b}P(\tau_{b}) + (1-\beta) \tau_{a}P(\tau_{a})
\end{align*}
and the expected sojourn time until the state transition starting from $\tau_{a}$, $T_{\tau_{a}}$ is
\begin{align*}
     \beta \Exp{[\tilde{G}+\tau_{b} \mid \tilde{G} > h - \tau_{a}]} + (1-\beta)\Exp{[\tilde{G}+\tau_{a} \mid \tilde{G} 
\le h - \tau_{a}]}.
\end{align*}
Therefore, leveraging MRRT\cite[Appendix D]{kumar2008wireless}, we obtain expressions for AAoI and average power as 
given in \eqref{eq:aaoi_thresh} and \eqref{eq:avg_p_thresh}, respectively.\\
\end{proof}

\subsection{Numerical \& Simulation Results}\label{section:results}
In this section, we compare the analytical characterization of average AoI and average power obtained for the error 
free system (ERRFREE) with that of the system with errors (WITHERR).
We note that if $\epsilon$ is the error probability for the system with errors, then the $P(\tau)-\tau$ relationship 
for ERRFREE system is also chosen so as to satisfy this error probability requirements (using
\eqref{eq:polyanskiy}).
However, we assume that every transmission is successful in ERRFREE.
For WITHERR, simulations are carried out with errors.

\begin{figure}[h!]
    \centering
    \begin{subfigure}{0.49\linewidth}
    \includegraphics[width = 1\linewidth]{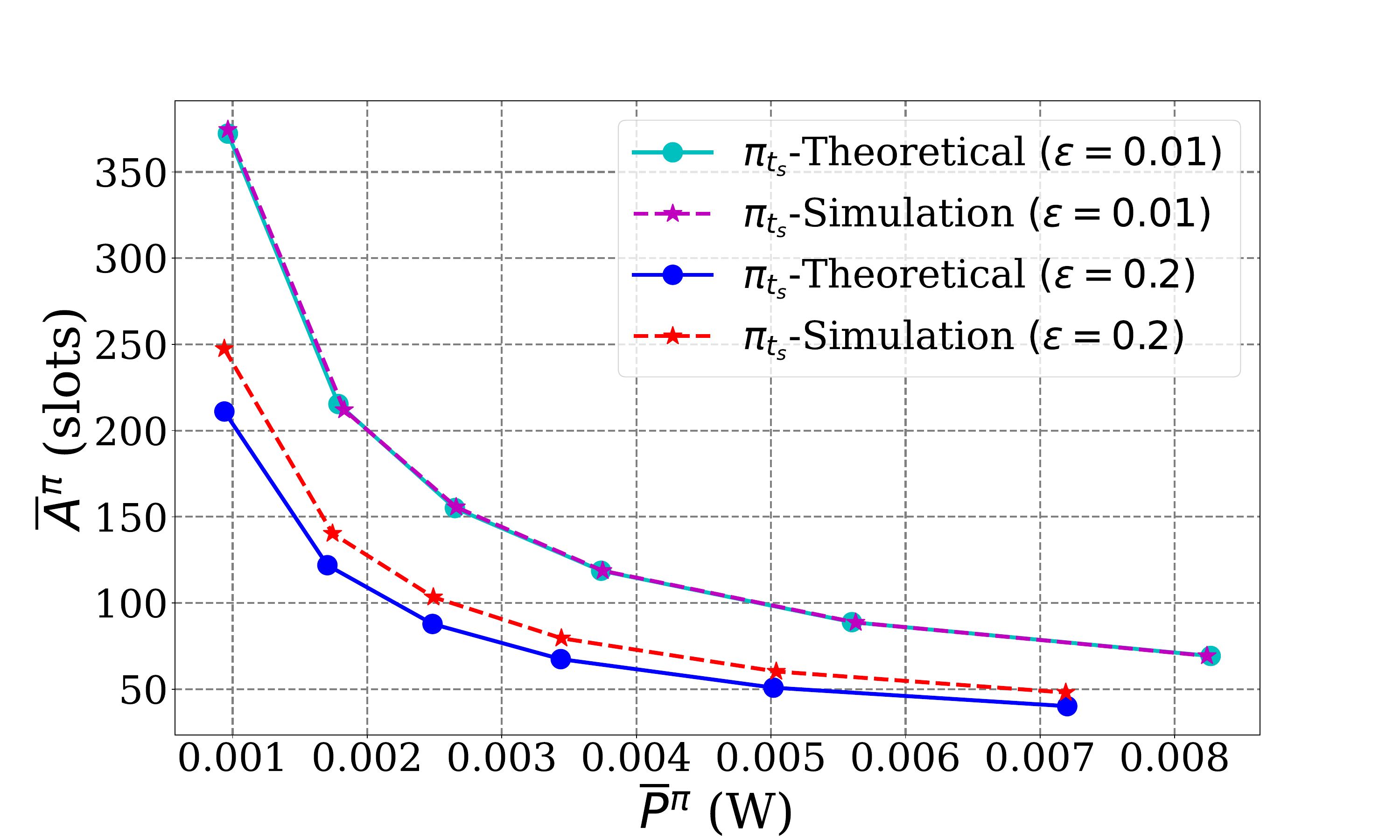}
    \caption{Tradeoff of $\overline{A}^{\pi_{t_{s}}}$ and $\overline{P}^{\pi_{t_{s}}}$}
    \label{fig:fixed_ts_tradeoff_theo_vs_sim_NP}
    \end{subfigure}
    \begin{subfigure}{0.49\linewidth}
    \includegraphics[width = 1\linewidth]{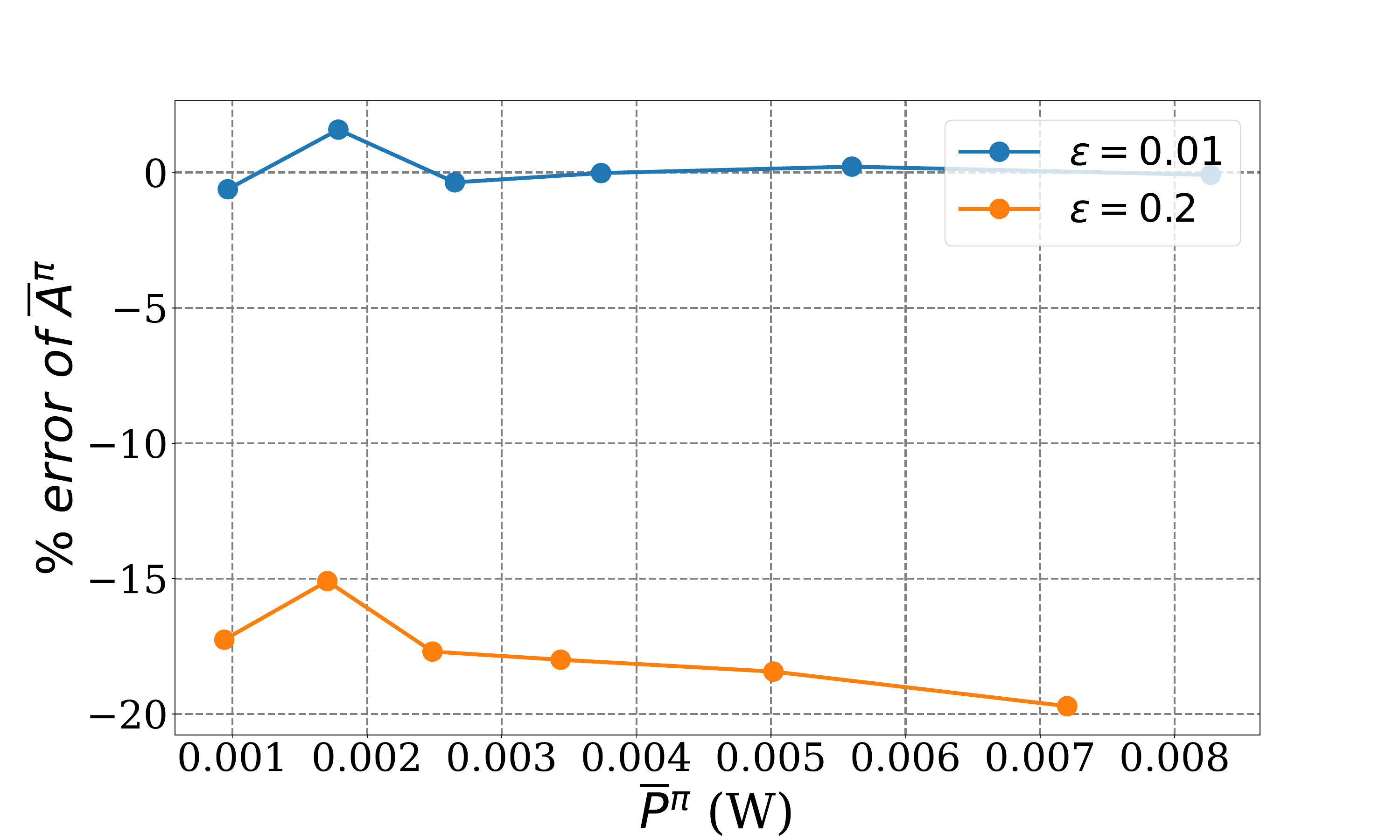}
    \caption{Percentage error in $\overline{A}^{\pi_{t_{s}}}$ calculated with respect to the simulated value}
    \label{fig:fixed_ts_tradeoff_theo_vs_sim_NP_error}
    \end{subfigure}
    \caption{Comparison of AAoI average-power tradeoff for FTT policy obtained from analysis (ERRFREE system) and 
simulation (WITHERR system) for error probabilities $\epsilon \in \brac{0.01,0.2}$. The percentage error between the analytical value and the simulated value of the AAoI calculated with respect to the simulated value at different power values is also shown.}
    \label{fig:comparison_FTT_sim_analytical_error}
\end{figure}

\begin{figure}[h!]
    \centering 
    \begin{subfigure}{0.49\linewidth}
    \includegraphics[width = 1\linewidth]{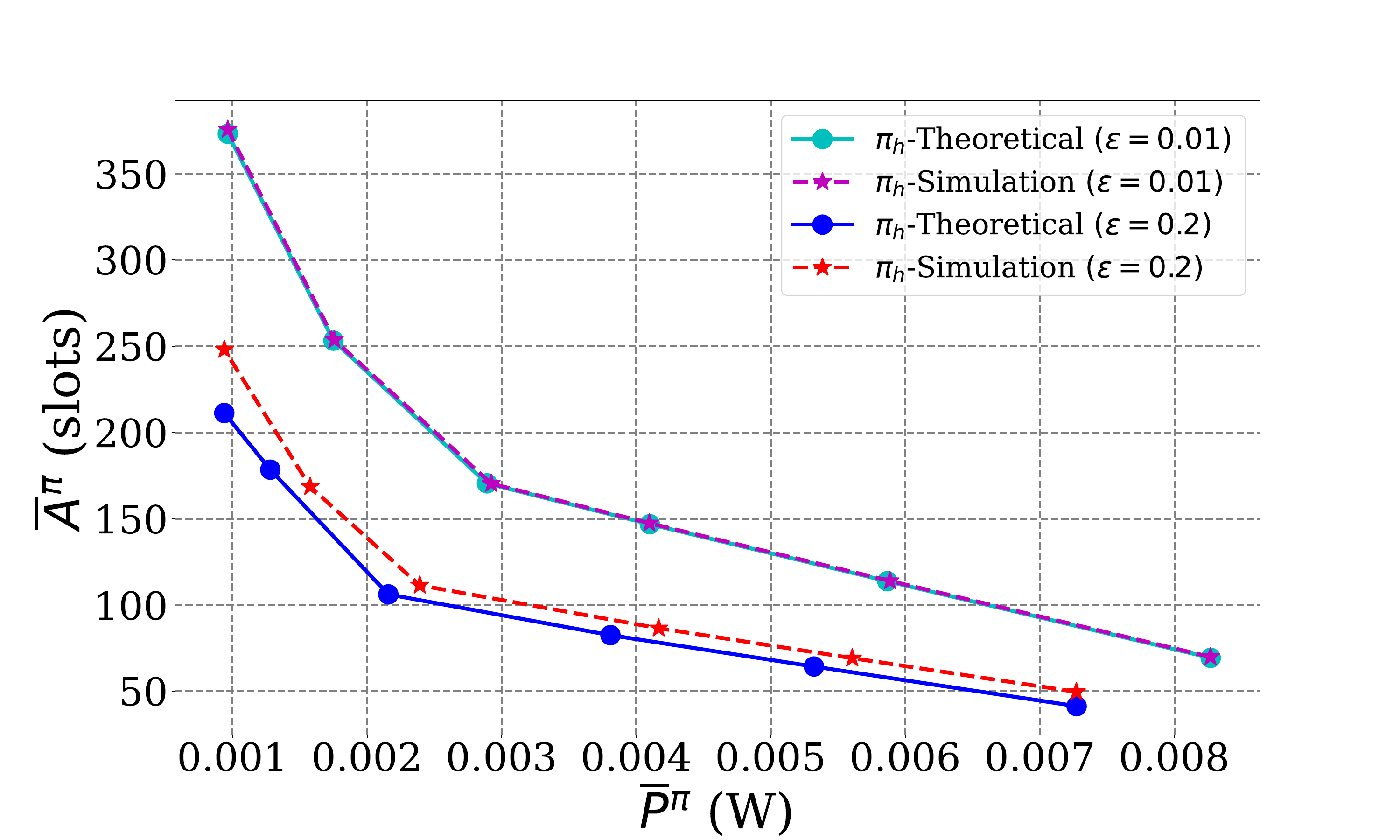}
    \caption{Tradeoff of $\overline{A}^{\pi_{h}}$ and $\overline{P}^{\pi_{h}}$}
    \label{fig:thre_tradeoff_theo_vs_sim_NP}
    \end{subfigure}
    \begin{subfigure}{0.49\linewidth}
    \includegraphics[width = 1\linewidth]{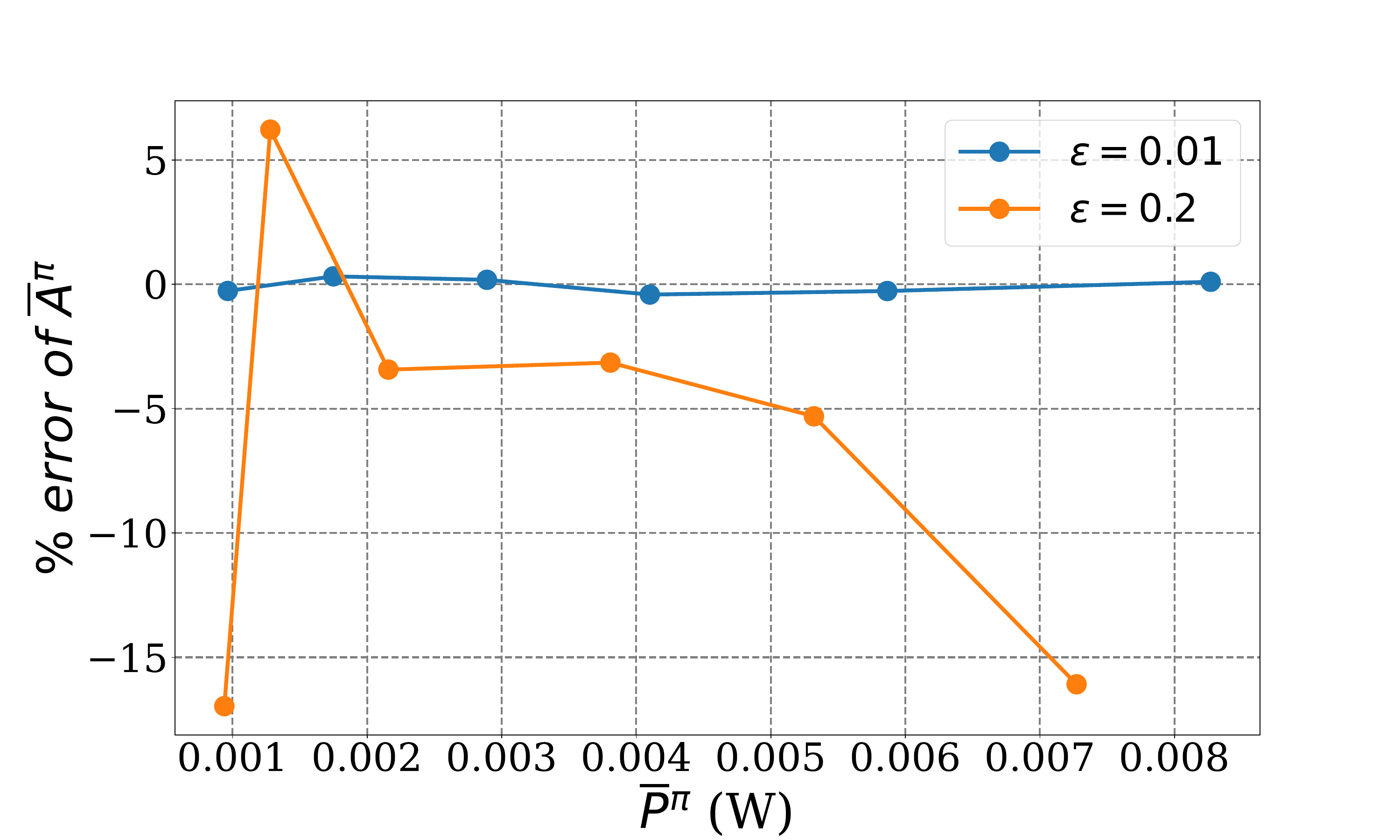}
    \caption{Percentage error in $\overline{A}^{\pi_{t_{s}}}$ calculated with respect to the simulated value}
    \label{fig:thre_tradeoff_theo_vs_sim_NP_error}
    \end{subfigure}
    \caption{Comparison of AAoI average-power tradeoff for threshold policy obtained from analysis (ERRFREE system) 
and simulation (WITHERR system) for error probabilities $\epsilon \in \brac{0.01,0.2}$. The percentage error between the analytical value and the simulated value of the AAoI calculated with respect to the simulated value at different power values is also shown.}
    \label{fig:comparison_Threshold_sim_analytical_error}
\end{figure}

We compare the analytical characterization for FTT policy in Figure \ref{fig:comparison_FTT_sim_analytical_error} for 
two error probabilities, $\epsilon \in \brac{0.01, 0.2}$.
The other parameters are chosen to be the same as in Section \ref{low_reliability_results}.
We note that the error-free analytical approximation for the average AoI at a fixed average power value has a 
percentage error which is less than $2\%$ for $\epsilon = 0.01$ while the error grows to $-20\%$ for $\epsilon = 0.2$.
The error-free analytical approximation for the tradeoff is observed to be a lower bound on the tradeoff for the system 
with errors.
We note that for the FTT policy the difference in $\overline{A}^{\pi_{t_{s}}}$ between the error-free system and 
the system with errors grows as $\frac{1}{1 - \epsilon}$ with a coefficient proportional to $\frac{1 - 
\lambda}{\lambda} 
+ t_{s}$. This can be obtained by comparing the expression for age in Proposition \ref{proposition_packet_loss} with 
that for $\epsilon = 0$.

Similarly, we compare the analytical characterization for threshold policy\footnote{We note that here there is no optimization carried out over the parameters of the threshold policy, unlike in Section \ref{low_reliability_results}.} in Figure 
\ref{fig:comparison_Threshold_sim_analytical_error} for 
two error probabilities, $\epsilon \in \brac{0.01, 0.2}$.
Again for lower error probabilities, the analytical characterization is close to the simulated values.
Thus, the analytical characterization of the average AoI and average power for the threshold policy as a function of the 
threshold and the parameters $\tau_{a}$ and $\tau_{b}$ is useful for choosing their values for the case of highly-reliable systems (i.e. small $\epsilon$).
The usage is demonstrated in Figure \ref{fig:comparison_policies}; also see the discussion.

From the tradeoff plots in Figures \ref{fig:comparison_FTT_sim_analytical_error} and \ref{fig:comparison_Threshold_sim_analytical_error} is that the AAoI for a given power constraint is more for a system which is more reliable (e.g. $\epsilon = 0.01$) compared with a less reliable (e.g. $\epsilon = 0.2$) system. This is due to larger codeword lengths being used to achieve higher reliability for a given transmit power value.
Also, the tradeoff for the ERRFREE system is a lower bound to the tradeoff for the WITHERR system.

\section{Proof of Proposition \ref{prop:Pmodel_tradeoff}}
\label{appendix:proofPmodel_tradeoff}.

We obtain the AAoI and average power for FTT policy with the P model using renewal reward theorem (RRT) \cite{kumar2008wireless}.
For applying RRT, we first identify a renewal process in the evolution of $A[t]$ under an FTT policy with parameter $t_{s}$.
We define a renewal epoch as the slot in which the age $A[t]$ drops due to a packet's successful reception.
We note that since FTT uses a fixed service time $t_{s}$, the age at a renewal epoch is $t_{s}$ since the successfully received packet was generated $t_{s}$ slots before.
We note that in the P packet-generation model a successful packet reception occurs only if there is no preemption of the transmission by a new packet and the packet reception was not in error.

We discuss about the duration of the renewal cycle in the following.
Suppose a packet is successfully received at the beginning of a slot $t_{0}$ (i.e., the last slot of the packet's transmission was from $t_{0} - 1$ to $t_{0}$).
Then, the next packet (say $p_{1}$) is generated after a random $\tilde{G}_{0}$ slots at $t_{0} + \tilde{G}_{0}$.
Here $\tilde{G}_{0} \in \brac{0,1,2,\dots}$ is a Geometric random variable with parameter $\lambda$ since we have assumed that the age has dropped at the beginning of slot $t_{0}$ and the intergeneration times are memoryless.
Packet $p_{1}$'s transmission starts at $t_{0} + \tilde{G}_{0}$ with a transmission duration of $t_{s}$.
Under P model, another packet (say $p_{2}$) is generated at a random time $t_{0} + \tilde{G}_{0} + G_{1}$ where $G_{1} \in \brac{1,2,\dots}$ is a Geometric$(\lambda)$ random variable.
We note that if $t_{0} + \tilde{G}_{0} + G_{1} < t_{0} + \tilde{G}_{0} + t_{s}$ or equivalently if $G_{1} < t_{s}$ then $p_{1}$'s transmission is stopped and $p_{2}$'s transmission starts at $t_{0} + \tilde{G}_{0} + G_{1}$.
Packet $p_{2}$'s transmission is scheduled during the slots $t_{0} + \tilde{G}_{0} + G_{1}$ to $t_{0} + \tilde{G}_{0} + G_{1} + t_{s}$.
Again, another packet (say $p_{3}$) would be generated in the slot $t_{0} + \tilde{G}_{0} + G_{1} + G_{2}$ (where $G_{i}, i \in \brac{1,2,\dots}$ are IID).
Packet $p_{2}$'s transmission is stopped and $p_{3}$'s transmission starts if $G_{2} < t_{s}$.
Thus, packet transmissions get stopped by new packets until an $i$ such that $G_{i} \geq t_{s}$.

Now, suppose we have that for an $i$, $G_{i} \geq t_{s}$.
With probability $\epsilon$, the packet transmission is in error and the age does not drop.
Then, the cycle extends for a Geometric time till the next packet generation and the process described above continues.
With probability $1 - \epsilon$ the packet is successfully received at $t_{s}$ slots after the last packet generation.

We note that if $X \in \brac{1,2,\dots}$ denotes the number of packet transmissions until a successful packet 
reception, then $X$ is a Geometric random variable with success probability being the probability of $G_{i} \geq t_{s}$ 
and successful packet reception.
If we denote this success probability by $\alpha$, then
\begin{eqnarray}
 \alpha = (1 - \epsilon)\sum_{g = t_{s}}^{\infty} \Pr{G_{i} = g} = (1 - \epsilon)(1 - \lambda)^{t_{s} - 1}.
\end{eqnarray}
Suppose we denote the renewal cycle duration by $R$. Then, we have that
\begin{eqnarray}
 R & = & \tilde{G}_{0} + \sum_{i = 1}^{X - 1} \overline{G}_{i} + t_{s}.
\end{eqnarray}
Here note that $\overline{G}_{i}$ is $G_{i}$ conditioned on the event $E_{i} = \brac{G_{i} < t_{s}}$ or $\brac{G_{i} 
\geq t_{s} \text{ and packet was received in error}}$.
The distribution of $\overline{G}_{i}$ is therefore
\begin{eqnarray*}
 \Pr{\overline{G}_{i} = g} & = & \Pr{G_{i} = g | E_{i}}, \\
 & = & \Pr{G_{i} = g \text{ and } E_{i}} / \Pr{E_{i}}, \\
 & = & \Pr{G_{i} = g \text{ and } E_{i}} / (1 - \alpha).
\end{eqnarray*}
Then, we have that
\begin{eqnarray*}
 \Pr{\overline{G}_{i} = g} & = & \frac{(1 - \lambda)^{g - 1} \lambda}{1 - \alpha}, \text{ for } g < t_{s}, \\
 & & \frac{\epsilon(1 - \lambda)^{g - 1} \lambda}{1 - \alpha}, \text{ for } g \geq t_{s}.
\end{eqnarray*}

To apply RRT, we first obtain the cumulative age $\overline{\mathcal{A}}$ in the renewal cycle.
Since the minimum age is $t_{s}$ and the age increases linearly from $t_{s}$ till the end of the renewal cycle, we 
obtain that $\overline{\mathcal{A}} = t_{s}R + \frac{R(R-1)}{2}$.
Then, using RRT, 
\begin{eqnarray*}
\overline{A}^{\pi_{t_{s}}} =\frac{\mathbb{E}[\overline{\mathcal{A}}]}{\mathbb{E}[R]} 
=t_{s}+\frac{\mathbb{E}\left[R^{2}\right]}{2 \mathbb{E}[R]}-\frac{1}{2}.
\end{eqnarray*}
We note that 
\begin{eqnarray}
\mathbb{E}[R]  =  \mathbb{E}\left[G_{0}\right]+\mathbb{E}\left[\sum_{i=1}^{X-1} \overline{G}_{i}\right]+t_{s}.
\label{eq:fttp_ER}
\end{eqnarray}
Since $G_{0} \sim \text{Geometric}(\lambda)$,  $\mathbb{E}\left[G_{0}\right]=\frac{1 - \lambda}{\lambda}$.
Also, $\Exp \bras{\sum_{i = 1}^{X - 1} \overline{G_{i}}} = \Exp \bras{\Exp \bras{\sum_{i = 1}^{x - 1} 
\overline{G}_{i}} | X }$.
We note that $\overline{G}_{i}$ is already conditioned on $E_{i}$ and is independent of $X$.
From the distribution of $\overline{G}_{i}$ we have that
\begin{eqnarray*}
 \Exp \overline{G}_{i} & = & \sum_{g = 1}^{t_{s} - 1} \frac{g(1 - \lambda)^{g - 1} \lambda}{1 - \alpha} + \epsilon 
\sum_{g = t_{s}}^{\infty} \frac{g(1 - \lambda)^{g - 1} \lambda}{1 - \alpha}, \\
& = & \sum_{g = 1}^{\infty} \frac{g(1 - \lambda)^{g - 1} \lambda}{1 - \alpha} - (1 - \epsilon) 
\sum_{g = t_{s}}^{\infty} \frac{g(1 - \lambda)^{g - 1} \lambda}{1 - \alpha}.
\end{eqnarray*}
This can be simplified to
\begin{eqnarray}
 \frac{1}{(1 - \alpha)\lambda} - \frac{1 - \epsilon}{1 - \alpha} (1 - \lambda)^{t_{s} - 1} \brap{t_{s} + \frac{1 - 
\lambda}{\lambda}}.
\label{eq:fttp_EGi}
\end{eqnarray}
Using the above expression we have that $\Exp \bras{\sum_{i = 1}^{X - 1} \overline{G_{i}}} = \brap{\frac{1}{\alpha} - 
1} \Exp \overline{G}_{i}$.
Substituting in \eqref{eq:fttp_ER} and simplifying we obtain that
\begin{eqnarray}
 \Exp R = \frac{1}{\alpha\lambda}.
\end{eqnarray}

Now, we compute $\Exp R^{2}$ as $\operatorname{Var}(R) + (\mathbb{E}[R])^{2}$.
We have that
\begin{align*}
\operatorname{Var}(R) = \operatorname{Var}\left(\tilde{G}_{0}\right)+ \operatorname{Var}\left(\sum_{i=1}^{X-1} 
\overline{G}_{i}\right),
\end{align*}
where $\operatorname{Var}\left(\tilde{G}_{0}\right) = \frac{1-\lambda}{\lambda^{2}}$. We let $S_{G}=\sum_{i=1}^{X-1} 
\overline{G}_{i}$.
Then
\begin{equation}
\operatorname{Var}\left(S_{G}\right)  =  \mathbb{E}\left[\operatorname{Var}\left(S_{G} \mid 
X\right)\right]+\operatorname{Var}\left(\mathbb{E}\left[S_{G} \mid X\right]\right).
\label{eq:fttp_SG}
\end{equation}
We consider the first term in \eqref{eq:fttp_SG}. 
\begin{align*}
\operatorname{Var}\left(S_{G} \mid X = x\right) =\operatorname{Var}\left(\sum_{i=1}^{x-1} \overline{G}_{i}\right)=(x-1) 
\cdot \operatorname{Var}\left(\overline{G}_{i}\right),
\end{align*}
so that 
\begin{align*}
\mathbb{E}[\operatorname{Var}(S_{G} \mid X)] &= \mathbb{E}\left[(X-1) 
\operatorname{Var}\left(\overline{G}_{i}\right)\right] \\
& = \operatorname{Var}\left(\overline{G}_{i}\right) \frac{1 - \alpha}{\alpha}.
\end{align*}
We note that $\operatorname{Var}\left(\overline{G}_{i}\right) = \Exp \overline{G}_{i}^{2} - \brap{\Exp 
\overline{G}_{i}}^{2}$.
Since $\Exp \overline{G}_{i}$ has been characterized in \eqref{eq:fttp_EGi} we characterize $\Exp 
\overline{G}_{i}^{2}$ in the following.
We note that
\begin{eqnarray*}
\Exp \overline{G}_{i}^{2} & = & \sum_{g = 1}^{t_{s} - 1} \frac{g^{2} (1 - \lambda)^{g - 1} \lambda}{1 - \alpha} + 
\epsilon \sum_{g = t_{s}}^{\infty} \frac{g^{2} (1 - \lambda)^{g - 1}\lambda}{1 - \alpha}, \\
& = & \sum_{g = 1}^{\infty} \frac{g^{2} (1 - \lambda)^{g - 1} \lambda}{1 - \alpha} - (1 - \epsilon) \sum_{g = 
t_{s}}^{\infty} \frac{g^{2} (1 - \lambda)^{g - 1}\lambda}{1 - \alpha}. \\
\end{eqnarray*}
We note that the first term in the above expression is the second moment of a Geometric distribution with parameter 
$\lambda$ and hence
\begin{eqnarray*}
 \sum_{g = 1}^{\infty} \frac{g^{2} (1 - \lambda)^{g - 1} \lambda}{1 - \alpha} = \frac{1}{1 - \alpha}\brap{\frac{1 - 
\lambda}{\lambda^{2}} + \frac{1}{\lambda^{2}}}.
\end{eqnarray*}
The second term $(1 - \epsilon) \sum_{g = t_{s}}^{\infty} \frac{g^{2} (1 - \lambda)^{g - 1}\lambda}{1 - \alpha}$ can be 
simplified as follows:
\begin{eqnarray*}
 & & (1 - \epsilon) \sum_{i = 0}^{\infty} \frac{(t_{s} + i)^{2} (1 - \lambda)^{t_{s} + i - 1}\lambda}{1 - \alpha}, \\
 & = & (1 - \epsilon)(1 - \lambda)^{t_{s} - 1} \sum_{i = 0}^{\infty} \frac{(t_{s}^{2} + i^{2} + 2t_{s}i)(1 - 
\lambda)^{i}\lambda}{1 - \alpha},
\end{eqnarray*}
where we note that $(1 - \epsilon)(1 - \lambda)^{t_{s} - 1} = \alpha$. Considering each of the terms (i.e. $t_{s}^{2}, 
i^{2}$ and $2t_{s}i$) in the above sum we simplify the above expression to
\begin{eqnarray*}
\frac{\alpha}{1 - \alpha} \brap{t_{s}^{2} + 2t_{s}\frac{1 - \lambda}{\lambda} + \frac{(1 - \lambda)(2 - 
\lambda)}{\lambda^{2}}}.
\end{eqnarray*}
Therefore,
\begin{eqnarray*}
 \Exp \overline{G}_{i}^{2} = \frac{1}{1 - \alpha}\brap{\frac{1 - 
\lambda}{\lambda^{2}} + \frac{1}{\lambda^{2}}} - \frac{\alpha}{1 - \alpha} \brap{t_{s}^{2} + 2t_{s}\frac{1 - 
\lambda}{\lambda} + \frac{(1 - \lambda)(2 - 
\lambda)}{\lambda^{2}}}.
\end{eqnarray*}
Thus, we obtain an analytical characterization of $\mathbb{E}[\operatorname{Var}(S_{G} \mid X)]$.

We now consider the second term in \eqref{eq:fttp_SG}.
\begin{align*}
\mathbb{E}\left[S_{G} \mid \mathrm{X}=x\right]  =  \mathbb{E}\left[\sum_{i=1}^{x-1} G_{i}\right]=(x-1) \cdot \mathbb{E}\left[G_{i}\right].
\end{align*}
Therefore,
\begin{align*}
\operatorname{Var}\left(\mathbb{E}\left[S_{G} \mid X\right]\right) =\operatorname{Var}\left((X-1) \mathbb{E}\left[G_{i}\right]\right)  =  \left(\mathbb{E}\left[G_{i}\right]\right)^{2} \operatorname{Var}(X),
\end{align*}
where $\operatorname{Var}(X) = \frac{1-\alpha}{\alpha^{2}}$.

Combining the expressions for the first and second term in \eqref{eq:fttp_SG} we have that
\begin{eqnarray*}
 \operatorname{Var}(S_{G}) = \frac{1 - \alpha}{\alpha^{2}} \brap{\Exp \overline{G}_{i}}^{2} + \frac{1 - \alpha}{\alpha} 
\operatorname{Var}(\overline{G}_{i}).
\end{eqnarray*}
Substituting for $\Exp \overline{G}_{i}$ and $\Exp \overline{G}_{i}^{2}$ and with some algebra we obtain that
\begin{eqnarray*}
 \operatorname{Var}(S_{G}) = \frac{1}{\alpha^{2}\lambda^{2}} + \frac{1 - 2 t_{s}}{\alpha\lambda} - \frac{1 - 
\lambda}{\lambda^{2}}.
\end{eqnarray*}
Now substituting in the expression for $\Exp R^{2}$ we obtain that
\begin{eqnarray}
 \Exp R^{2} = \frac{2}{\alpha^{2}\lambda^{2}} + \frac{1 - 2t_{s}}{\alpha\lambda}.
\end{eqnarray}
Finally, substituting $\Exp R$ and $\Exp R^{2}$ in the expression for $\overline{A}^{\pi_{t_{s}}}$ we have that
\begin{eqnarray}
 \overline{A}^{\pi_{t_{s}}} = \frac{1}{\alpha\lambda}.
\end{eqnarray}

Using a similar procedure, we obtain the average power using RRT.
We first obtain the total energy consumed in a renewal cycle $\overline{\mathcal{P}}$.
We have that
\begin{eqnarray*}
 \overline{\mathcal{P}} = P(t_{s})\brap{t_{s} + \sum_{i = 1}^{X - 1} \min(t_{s}, \overline{G}_{i})}.
\end{eqnarray*}
We note that there are $X$ transmissions with the last transmission having a duration of $t_{s}$.
Every other transmission is either pre-empted or lasts for a maximum of $t_{s}$, so that the transmission duration is 
$\min(t_{s}, \overline{G}_{i})$.

In order to obtain $\Exp \overline{\mathcal{P}}$ we note that $\Exp\bras{\sum_{i = 1}^{X - 1} \min(t_{s}, 
\overline{G}_{i})} = \Exp\bras{\Exp\bras{\sum_{i = 1}^{x - 1} \min(t_{s}, \overline{G}_{i})}|X}$ which is
\begin{eqnarray*}
 \brap{\frac{1 - \alpha}{\alpha}} \Exp \min(t_{s}, \overline{G}_{i}).
\end{eqnarray*}
We have from the distribution of $\overline{G}_{i}$ that
\begin{eqnarray*}
 \Exp \min(t_{s}, \overline{G}_{i}) = \frac{1}{1 - \alpha}\brap{\sum_{g = 1}^{t_{s} - 1} g(1 - \lambda)^{g - 1} \lambda 
+ \epsilon \sum_{g = t_{s}}^{\infty} t_{s} g(1 - \lambda)^{g - 1} \lambda}.
\end{eqnarray*}
This expression can be simplified as
\begin{eqnarray*}
 \frac{1}{1 - \alpha}\brap{\frac{1}{\lambda} - (1 - \lambda)^{t_{s} - 1} \brap{t_{s}(1 - \epsilon) + \frac{1 - 
\lambda}{\lambda}}}.
\end{eqnarray*}
We then have that
\begin{eqnarray*}
 \Exp\overline{\mathcal{P}} = P(t_{s})\brap{\frac{1}{\alpha\lambda} - \frac{1 - \lambda}{(1 - \epsilon)\lambda}}.
\end{eqnarray*}
The average power $\overline{P}^{\pi_{t_{s}}}$ is $\frac{\Exp \overline{\mathcal{P}}}{\Exp R}$ which can be simplified 
as
\begin{eqnarray*}
 \overline{P}^{\pi_{t_{s}}} = P(t_{s})\brap{1 - (1 - \lambda)^{t_{s}}}.
\end{eqnarray*}

\section{Proof of Proposition \ref{prop:ATtradeoff}}
\label{appendix:proofATtradeoff}
We consider the case where the age threshold $h_{a} \geq t_{s}$.
We note that with any initial value for the age $A[0]$ there exists a finite time $t$ at which $A[t] \geq h_{a}$ so that a packet is generated at $t$.
Since $\epsilon < 1$, there exists a finite time with positive probability at which a packet transmission succeeds and the age drops to $t_{s}$.
Once the age drops to $t_{s}$, we can identify a renewal process in the evolution of $A[t]$.
The renewal epochs are those slots in which the age drops to $t_{s}$.
The renewal cycles are IID with each renewal cycle being of duration
\begin{eqnarray*}
 h_{a} - t_{s} + t_{s} R,
\end{eqnarray*}
where $R \in \brac{1,2,\dots}$ is a Geometric random variable with success probability $1 - \epsilon$.
We note that $h_{a} - t_{s}$ is the time taken from the start of a renewal cycle to the time of a packet generation.
After a packet is generated and transmitted using a duration of $t_{s}$, there will be Geometric $R$ number of transmissions until a successful transmission.
The $\overline{A}^{\pi_{t_{s}}}$ and $\overline{P}^{\pi_{t_{s}}}$ are characterized using RRT.
The expected cumulative age over one renewal cycle is
\begin{eqnarray*}
 \Exp\bras{t_{s}(h_{a} - t_{s} + R)) + \frac{(h_{a} - t_{s} + R)(h_{a} - t_{s} + R - 1)}{2}},
\end{eqnarray*}
while the expected cumulative energy is
\begin{eqnarray*}
 \Exp \bras{P(t_{s})t_{s} R}.
\end{eqnarray*}
Using $\Exp R = 1/(1 - \epsilon)$ and $\Exp R^{2} = (1 + \epsilon)/(1 - \epsilon)^{2}$ and applying RRT we obtain that
\begin{eqnarray*}
 \overline{A}^{\pi_{t_{s}}} & = & t_{s} + \frac{(h_{a}-t_{s})^{2}(1 - \epsilon)^{2} + t_{s}^{2}(1 + \epsilon) + 2(h_{a} - t_{s})t_{s}(1 - \epsilon)}{2(1 - \epsilon)((h_{a}-t_{s})(1 - \epsilon) + t_{s})} - \frac{1}{2}, \\
 \overline{P}^{\pi_{t_{s}}} & = & \frac{P(t_{s})t_{s}}{(h_{a} - t_{s})(1 - \epsilon) + t_{s}}
\end{eqnarray*}

We note that if $h_{a} < t_{s}$, then after each successful packet transmission another packet is immediately generated so that the renewal cycle length is just $t_{s}R$.
Thus, $\overline{A}^{\pi_{t_{s}}}$ and $\overline{P}^{\pi_{t_{s}}}$ can be obtained using $h_{a} = t_{s}$ in the above expressions.

\bibliographystyle{ieeetr}
{\footnotesize
\bibliography{IEEEabv, bibJournalList, bibliography.bib}}
\end{document}